\renewcommand{\d}{\mathrm d}
\numberwithin{equation}{section}
\tikzset{vertex/.style={circle,fill=black,inner sep=2pt},
	ctVertex/.style={diamond,fill=black,inner sep=2.5pt},
	bigvertex/.style={circle,fill=black,inner sep=4pt},
	E/.append style={fill=white,draw},
	nuEP/.style={circle,fill=white,draw, inner sep=2pt},
	linelabel/.style={sloped,above,very near start, inner sep=1pt,execute at begin node=$\scriptstyle,execute at end node=$},
	baseline=(current  bounding  box.center),doubled/.style={double distance= 1pt,line width=1.5pt}
}
\newtheorem{theorem}{Theorem}[section]
\newtheorem{lemma}[theorem]{Lemma}
\newtheorem{prop}[theorem] {Proposition}
\newtheorem{definition}[theorem] {Definition}
\theoremstyle{definition}
\theoremstyle{remark}
\newtheorem{remark}{Remark}
\numberwithin{remark}{section}
\newcommand{\card}[1]{\left|#1\right|}
\title{Free energy expansions for renormalized systems for colloids}
\author{Tong Xuan Nguyen\thanks{New York University Shanghai, 567 W Yangsi Rd, Pudong New Area, Shanghai, China 200126, \texttt{tn2137@nyu.edu}}, 
	Giuseppe Scola\thanks{Scuola Internazionale Superiore di Studi Avanzati (SISSA), via Bonomea 265, 34136 Trieste, Italy, \texttt{gscola@sissa.it}}, 
	Dimitrios Tsagkarogiannis\thanks{Dipartimento di Ingegneria e Scienze dell'Informazione e Matematica, Universit\`a degli Studi dell'Aquila, 67100 L'Aquila, Italy, \texttt{dimitrios.tsagkarogiannis@univaq.it}}}
\begin{document}

	\date{}
	\maketitle
	%%%%%%%%
	\begin{abstract} We consider a binary system of small and large spheres of finite size 
		in a continuous medium interacting via a non-negative potential.
		We work in the canonical ensemble and compute upper and lower bound  for the free energy at finite and infinite volume by first integrating over the small spheres and then treating the effective system of the large ones which now interact via a multi-body potential.
		By exploiting the underlying structure of the original binary system we 
		prove the convergence of the cluster expansion for the latter system
		and obtain a sufficient condition which involves the surface of the large spheres
		rather than their volume (as it would have been the case in a direct application of existing methods directly to the binary system). 
		Our result is valid for the particular case of hard spheres (colloids) for which we rigorously
		treat the depletion interaction.\\
		
		\noindent\emph{Keywords}: cluster expansion for two-scale systems -- colloids -- depletion attraction -- effective multi-body interactions -- canonical ensemble \\
		
		\noindent \emph{AMS classification}: 82B05, 82D99
	\end{abstract}

	\tableofcontents

	\section{Introduction}

	We consider a system of a given number of small and big spheres 
	(of finite radii $r$ and $R$ respectively) all interacting via hard-core.
	Our initial motivation comes from the study of the
	droplet picture of condensation: increasing the density of the system, as we approach the
	regime of phase coexistence we observe the formation
	of one or more big droplets (of macroscopic size, representing the ``solid'')
	co-existing with some microscopic-size clusters representing the ``liquid''.
	Their shape can vary, but with a further crude simplification we consider them to
	be spherical of any size ranging from microscopic (for the liquid) to macroscopic (for the solid). By a further simplification we consider only two sizes: $N_r$ many microscopic spheres of radius $r$ and $N_R$ many macroscopic ones
	of radius $R$ such that
	\begin{equation}\label{size}
		N_R |B_R|\sim |\Lambda|,
	\end{equation}
	where $|B_R|$ is the volume of a (big) sphere of radius $R$.
	Hence, in this paper we first consider the
	case of finite radii $r$ and $R$ at finite and infinite volume, but our next goal is to study
	the infinite volume with an additional scale separation of $R/r\to\infty$.
	A second motivation comes from the colloidal systems where large objects
	are subject to effective interactions mediated by small particles.
	
	In the grand-canonical ensemble, 
	by a straightforward application of the standard cluster expansion method one
	can show that the pressure (as a function of the activities $z_r$ and $z_R$) 
	can be written as an absolutely convergent series
	with a radius of convergence that depends on the excluded volume between the spheres (small and big).
	By a more careful analysis, in \cite{jansen2020cluster} it has been proved that 
	if we first integrate over the small
	spheres and then study the effective (multi-body) system we obtain an improved radius of convergence. This is based on the fact that
	the interaction of two big spheres via a small one is happening
	due to the overlap
	of a big and a small without the first big sphere touching the second big. 
	Hence the overlap of the small with the big is limited at a corridor of breadth $r$ of the boundary of the big sphere and thus it is of the order of the surface of the big sphere rather than its volume. Consequently, 
	it is better manifested in a further limit $R/r\to\infty$ and
	in order to obtain it we need to consider 
	the effective activity 
	of a big sphere and all possible interactions with the small ones.
	This is also relevant in colloidal systems where as a result of the concentration
	of the small spheres the effective interaction between the big ones can be tuned properly. A typical phenomenon is  depletion attraction \cite{asakura1958interaction}, \cite{tuinier2011colloids}.
	Hence, in this paper we show that such a picture is true if we work directly in the canonical ensemble with tuning parameters the density of the spheres. We show that for fixed radii $r$ and $R$ we can first integrate (as in the
	grand-canonical ensemble) over the small spheres and then treat the multi-boby effective interaction
	between the big ones. However, the difference in the canonical ensemble case is 
	that we can only give
	upper and lower bounds for the free energy as in Theorem~\ref{ClusterEffective}.
	Their discrepancy seems negligible in the further limit $R/r\to\infty$, see Remark~\ref{r:lim}, the regime for which the improvement is more transparent.
	But, this requires further analysis which goes beyond the goal of the present paper
	and it is also related to other interesting phenomena such as phase transitions in binary colloidal hard-sphere mixtures, see \cite{dijkstra1999phase}, \cite{kobayashi2021critical} and \cite{lopez2013communication}. 
	
	Another interesting issue which is worth of further investigation is the following: 
	if we treat the problem as a two-species system, the virial inversion (e.g. by applying \cite{jansen2014multispecies}) of the pressure in $z_r$ and $z_R$ would give a power series expansion in
	the densities $\rho_r$ and $\rho_R$:
	\begin{equation}\label{can2spec}
		\sum_{n,m\geq 0}\rho_R^n \rho_r^m \beta_{n,m},
	\end{equation}
	where $\beta_{n,m}$ are the coefficients expressed via two-connected graphs
	on two-types of vertices (corresponding to the small and big spheres). 
	The radius of convergence will depend again on the excluded
	volume between the spheres. Equivalently, one can also perform
	a direct proof in the canonical ensemble for two-species which would give a similar
	radius of convergence by a simple adaptation of the proof of \cite{pulvirenti2012cluster}.
	However, the intriguing observation is that in the canonical ensemble we do not need to change variables as
	in the grand-canonical case. This is due to the fact that 
	\begin{equation*}
		\rho_R:=z_R\frac{\partial}{\partial z_R}p(z_r,z_R)=\hat z_R\frac{\partial}{\partial \hat z_R}\hat p(z_r,\hat z_R(z_r)), \qquad \hat p(z_r,\hat z_R(z_r))=p(z_r,z_R),
	\end{equation*}
	where $\hat z_R(z_r)$ is the new activity of the big spheres if we first integrate over
	the small ones. 
	We observe that while we have two options for the activities $z_R$ and $\hat z_R$, both correspond to the same density $\rho_R$.
	In this paper we prove that if, working in the canonical ensemble, we perform a similar
	two-step integration (first the small ones and then the big with multi-body effective interactions) we obtain the improved convergence condition. This is essentially due to the fact that we use the same (improved) tree-graph inequality as in the grand-canonical \cite{jansen2020cluster}. Hence, one would be temped to view it as an {\it example where the direct canonical expansion gives a power series in $\rho_r,\rho_R$ with a better radius of convergence than the virial inversion of the power series in the dual variables $z_r, z_R$}.
	The catch is that what we really obtain is a reordering of the power series \eqref{can2spec}
	of the two-species model. In fact, instead of the form \eqref{can2spec} we obtain
	\begin{equation}\label{can}
		\sum_{n\geq 0}\rho_R^n B^*(n;\rho_r),
	\end{equation}
	where $B^*(n;\rho_r)$ is again an infinite sum in powers of $\rho_r$. Hence, in this formulation we have the absolute value outside infinite sums, instead
	of its previous position over all two connected graphs.
	Notice that if we expanded the ``inside'' infinite sum we would obtain the necessary cancellations towards the two connected graphs, to the expense of obtaining a worse radius of convergence (since expanding corresponds to bringing the absolute value more inside the sum and hence summing more positive terms).
	Alternatively, one should be able to obtain this rearranged free energy expansion by a virial inversion
	of the improved grand-canonical expansion, but it is still not clear to us how to do it. 
	
	Summarizing, in Section~\ref{s_results} we present the various contributions for the upper and lower bound of the finite volume canonical free energy (Theorem~\ref{ClusterEffective}). 
	In Section~\ref{s_small} we show how to perform the canonical ensemble cluster expansion for the small spheres in the presence of the big ones.
	The emerging effective multi-body interaction between the big spheres
	is treated in Section~\ref{s_big}.
	Last, in Section~\ref{s:canc} we study various cancellations that allow to pass to the thermodynamic limit and at the same time reveal interesting graphical representations as summarized in Figure~\ref{fig:4}.
	Some intriguing questions arise: what happens in the limit
	$R/r\to\infty$ where it is expected that the two bounds agree? Can we obtain it
	via virial inversion from the grand canonical ensemble as well or is it only accessible from the canonical?

	\section{The model and main results}\label{s_results}

	We consider a system composed of $N_r$ small and $N_R$ big particles
	in a box $\Lambda:=\left(-\frac{L}{2},\frac{L}{2}\right]^d\subset\mathbb{R}^3$ of length $L>0$ interacting via a hard-core potential with periodic boundary conditions. 
	Denoting by $p,p'\in\Lambda$ the positions of two big particles and by
	$q,q'\in\Lambda$ the position of two small ones, we consider the following interaction potentials:
	\begin{equation}
		V_{ll}(p-p'):=\begin{cases}
			+\infty&{\rm{if}}\;|p-p'|<2R\\
			0&{\rm{otherwise}}
		\end{cases},
	\end{equation}	
	\begin{equation}
		V_{ss}(q-q'):=\begin{cases}
			+\infty&{\rm{if}}\;|q-q'|<2r\\
			0&{\rm{otherwise}}
		\end{cases},
	\end{equation}	
	and
	\begin{equation}
		V_{ls}(p-q):=\begin{cases}
			+\infty&{\rm{if}}\;|p-q|<R+r\\
			0&{\rm{otherwise}}
		\end{cases},
	\end{equation}	
	where $0<R,r<L$ are the radii of the big and small spheres.
	We define the Hamiltonian
	\begin{equation}
		\label{Hamiltonian}
		H_{\Lambda}(\mathbf{x}):=\sum_{1\leq i<j\leq N_R} V_{ll}(p_i-p_j)
		+ \sum_{1\le i<j\le N_r} V_{ss}(q_i-q_j)+\sum_{\substack{1\le i\le N_R \\ 1\leq j\leq N_r} }V_{ls}(p_i-q_j),
	\end{equation}
	where $\mathbf{x}=(p_1,...,p_{N_R},q_1,\ldots,q_{N_r})$ are the centers of the $N_R$ big and $N_r$ small spheres.

	The canonical partition function at inverse temperature $\beta>0$ is given by
	\begin{equation}\begin{split}\label{CPF}
			Z_{\Lambda,\beta,N_R,N_r}&:=\frac{1}{N_r!}\frac{1}{N_R!}\int_{\Lambda^{N_R}}	\prod_{1\le i<j\le N_R}[1+f^{ll}(p_i,p_j)]
			\times\\
			&\hspace{.5cm}\int_{\Lambda^{N_r}}\prod_{\substack{1\le j\le N_R\\1\le i\le N_r}}[1+f^{ls}(p_j,q_i)]
			\prod_{1\le i<j\le N_R}[1+f^{ss}(q_i,q_j)\prod_{i=1}^{N_r}\frac{\d q_i}{|\Lambda|}
			\prod_{i=1}^{N_R}\frac{\d p_i}{|\Lambda|},
		\end{split}
	\end{equation}	
	where we have defined:
	\begin{equation}\begin{split}
			\label{MayerFunction}	
			f^{ll}(p_i,p_j) & := e^{-\beta V_{ll}(p_i-p_j)}-1=-\mathbf{1}_{\{|p_i-p_j|<2R\}},\\
			f^{ss}(q_i,q_j) &:=e^{-\beta V_{ss}(p_i-p_j)}-1=-\mathbf{1}_{\{|q_i-q_j|<2r\}},\\
			f^{ls}(p_i,q_j) &:=e^{-\beta V_{ls}(p_i-q_j)}-1=-\mathbf{1}_{\{|p_i-q_j|<R+r\}}.
	\end{split}\end{equation} 
	Following \cite{pulvirenti2012cluster} we will view our system as a perturbation around the non-interacting case. From now on we remove the dependence on $\beta$ for a simpler notation. We write
	\begin{equation*}
		Z_{\Lambda, N_R,N_r}=Z^{ideal}_{\Lambda,N_R,N_r}Z^{int}_{\Lambda,N_R,N_r},
	\end{equation*}	 
	where
	\begin{equation}\label{Zideal}
		Z^{ideal}_{\Lambda,N_R,N_r}:=\frac{|\Lambda|^{N_R}}{N_R!}\frac{|\Lambda|^{N_r}}{N_r!},
	\end{equation}
	and
	\begin{equation}\begin{split}\label{Zint}
			Z^{int}_{\Lambda,N_R,N_r}  := & \int_{\Lambda^{N_R}}	\prod_{1\le i<j\le N_R}[1+f^{ll}(p_i,p_j)]
			\times\\
			&
			\int_{\Lambda^{N_r}}\prod_{\substack{1\le j\le N_R\\1\le i\le N_r}}[1+f^{ls}(p_j,q_i)]
			\prod_{1\le i<j\le N_R}[1+f^{ss}(q_i,q_j)\prod_{i=1}^{N_r}\frac{\d q_i}{|\Lambda|}
			\prod_{i=1}^{N_R}\frac{\d p_i}{|\Lambda|}.
	\end{split}\end{equation}	
	The goal of the paper is to write the finite volume free energy
	\begin{equation}\label{FVfreeEn}
		f_{\Lambda,\beta}(N_R,N_r):=-\frac{1}{\beta |\Lambda|}\log Z^{per}_{\Lambda,N_R,N_r},
	\end{equation}
	as an absolutely convergent power series (uniformly on its parameters)
	and discuss possible infinite volume limits, such as
	\begin{equation}\label{TDfreeEn}
		f_\beta(\rho_R,\rho_r):=\lim_{\substack{\Lambda\to\mathbb{R}^d,\,
				N_R, N_r\to\infty,
				\\
				N_R=\lfloor\rho_R|\Lambda|\rfloor,\,
				N_r=\lfloor\rho_r|\Lambda|\rfloor
		}}		f_{\Lambda,\beta}(N_R,N_r),
	\end{equation}	
	for some values of $\rho_r,\rho_R>0$.
	
	As we described in the introduction we have two main options. The first is to expand the products in \eqref{CPF} and obtain a two-colour graph. Then one can directly apply the method developed in \cite{pulvirenti2012cluster} obtaining the free energy as a power series expansion in the
	densities $N_r/|\Lambda|$ and $N_R/|\Lambda|$ in some disc which depends on the excluded volumes
	$|B_{2R}|$, $|B_{R+r}|$ and $|B_{2r}|$.
	Note that with $B_R(0)$ we denote a sphere of radius $R$ and center $0$. By $|B_R|$ we denote its volume.
	Alternatively, one can proceed as in \cite{jansen2020cluster} and perform a two-step integration
	by first integrating over the small particles obtaining an effective model with multi-body
	interactions between the large ones.
	This is the context of Theorem~\ref{ClusterEffective}. In Remark~\ref{key_remark} 
	we show that in this fashion we obtain an improved
	convergence condition, as it was the case in \cite{jansen2020cluster} for the grand-canonical ensemble. 
	In principle, by virial inversion one should obtain the finite volume grand-canonical free energy (which should agree with the canonical one at the infinite volume limit), but performing it for the two-step integration
	it turned out to be more complicated than expected and we leave it for a forthcoming work.
	
	In order to proceed with this plan we define:
	\begin{eqnarray}\label{Zps}
		Z^{\mathbf{p}}_{\Lambda,N_r}&:=&\int_{\Lambda^{N_r}}\prod_{1\le i< j\le N_r}\left[1+f^{ss}(q_i,q_j)\right]\prod_{\substack{1\le i\le N_R\\ 1\le j\le N_r}}\left[1+f^{ls}(p_i,q_j)\right]\prod_{i=1}^{N_r}\frac{\d q_{i}}{|\Lambda|},\nonumber\\
	\end{eqnarray}
	representing the canonical partition function for the small spheres
	in the presence of the big ones with centers $\mathbf{p}:=(p_1,...,p_{N_R})$.
	The resulting formula can be viewed as an effective multi-body interaction potential 
	between the big spheres.
	For a cluster expansion formulation we could follow
	the strategy in \cite{pulvirenti2012cluster}, but for that
	we need to normalize the prior measure by the available volume of the small spheres
	for a fixed configuration of the big ones:
	\begin{equation}\label{lambdatilde}
		\tilde\Lambda(\mathbf p):=\Lambda\setminus \cup_{j=1}^{N_R}B_{R+r}(p_j),
	\end{equation}
	with
	\begin{equation}\label{abslambdatilde}
		|\tilde\Lambda(\mathbf p)|=\int_{\Lambda}\prod_{j=1}^{N_R}(1+f^{ls}(p_j,q))\d q\geq 0.
	\end{equation}
	We remark that $\tilde\Lambda(\emptyset)=\Lambda$.
	A key observation is the dependence of the normalization $|\tilde\Lambda(\mathbf p)|$ on the whole configuration $\mathbf p$ which introduces an additional (to $Z^{\mathbf{p}}_{\Lambda,N_r}$) $\mathbf p$-dependent 
	factor in a subsequent integration over the big spheres.
	Nevertheless, we have that
	\begin{equation}\label{estimate}
		|\Lambda|-N_R |B_{R+r}|
		\leq
		|\tilde\Lambda(\mathbf p)|
		\leq
		|\Lambda|-N_R |B_{R}|,
	\end{equation}
	hence, in the sequel the following ratio will be relevant
	\begin{equation}\label{ratio}
		\frac{1}{1-\frac{N_R}{|\Lambda|}|B_R|}\leq\frac{|\Lambda|}{|\tilde\Lambda(\mathbf p)|}=\frac{1}{1+\int_{\Lambda}\left[\prod_{j=1}^{N_R}(1+f^{ls}(p_j,q))-1\right]\frac{\d q}{|\Lambda|}}
		\leq \frac{1}{1-\frac{N_R}{|\Lambda|}|B_{R+r}|}.
	\end{equation}
	We notice two facts: first, the difference between the upper and lower bound of \eqref{ratio}
	vanishes in the limit $R/r\to\infty$. Second, the above bounds are not dependent on the actual configuration $\mathbf p$, but rather only on the total number $N_R$.
	
	We decompose the free energy \eqref{FVfreeEn} as follows:
	\begin{equation}\label{EffectiveFVfreeE}
		\beta f_{\Lambda,\beta}(N_R,N_r)=-\frac{1}{|\Lambda|}\left(\log Z^{ideal}_{\Lambda,N_R,N_r}+\log Z^{\emptyset}_{\Lambda,N_r}+\log \hat{Z}_{\Lambda,N_R, N_r}\right).	
	\end{equation}	
	The entropic part $Z^{ideal}_{\Lambda,N_R,N_r}$ is given in \eqref{Zideal}.
	The second term is the contribution when no big spheres are involved
	\begin{equation}\label{CPFonlyS}
		Z^{\emptyset}_{\Lambda,N_r}:=\int_{\Lambda^{N_r}}\prod_{1\le k<l\le N_r}\left[1+f^{ss}_{k,l}\right]\prod_{i=1}^{N_r}\frac{\d q_i}{|\Lambda|},
	\end{equation}	
	where we assume periodic boundary conditions.
	This has been studied in \cite{pulvirenti2012cluster}, Theorem 1, which states that there exists a constant $\rho^*>0$ such that if 
	\begin{equation}\label{Condition1}
		\frac{N_r}{|\Lambda|}|B_{2R}|\le \rho^*.
	\end{equation}
	Then
	\begin{eqnarray}\label{ClusterOnlySmall}
		\frac{1}{|\Lambda|}\log Z^{\emptyset}_{\Lambda,N_r}=\frac{N_r}{|\Lambda|}\sum_{n\ge 1}\frac{1}{n+1}P_{|\Lambda|,N_r}(n)B^{\emptyset}_{\Lambda}(n),
	\end{eqnarray}
	where
	\begin{equation}\label{P}
		P_{|\Lambda|,N_r}(n):=\begin{cases}
			\displaystyle\frac{(N_r-1)\cdot\cdot\cdot(N_r-n)}{|\Lambda|^{n}}&{\rm{if}}\; n<N_r \\
			\displaystyle 0&{\rm{otherwise}}	
		\end{cases},
	\end{equation}  
	and $B^{\emptyset}_\Lambda(n)$ will be given in \eqref{B_empty}. 
	In the thermodynamic limit we obtain
	\begin{equation}\label{lim}
		\lim_{\substack{N_r,|\Lambda|\to\infty \\ N_r=\lfloor\rho_r|\Lambda|\rfloor}}
		P_{|\Lambda|,N_r}(n)B^{\emptyset}_{\Lambda}(n)
		=\rho_r^{n}\beta_n,
	\end{equation}  
	for all $n\geq 1$. Note that $\beta_n$ are the ``irreducible coefficients'', given by
	\begin{equation}\label{betan}
		\beta_n=\frac{1}{n!}\sum_{g\in\mathcal B_{n+1}}\int_{(\mathbb R^d)^n}\prod_{(i,j)\in E(g)}
		f^{ss}(q_i,q_j)\d\mathbf q_{[n]},\qquad q_{n+1}\equiv 0,
	\end{equation}
	where $[n]:=\{1,\ldots,n\}$ and $\mathcal B_n$ is the set of two-connected graphs on $n$ vertices. We recall some basic definitions:
	\begin{definition}
		A cut-point of a connected graph is a vertex whose removal yields a disconnected graph.
		An irreducible graph is free of cut-points.
	\end{definition}
	
	Thus, the main task in this paper is
	to compute the finite volume free energy of the effective system of big spheres after the integration over the small
	(third term in \eqref{EffectiveFVfreeE}): 
	\begin{equation}\label{EffectiveCPF}
		\hat{Z}_{\Lambda,N_R, N_r}:=\frac{Z^{int}_{\Lambda,N_R,N_r}}{Z^{\emptyset}_{\Lambda,N_r}}
		=
		\frac{1}{Z^{\emptyset}_{\Lambda,N_r}}
		\int_{\Lambda^{N_R}}\prod_{1\le i<j\le N_R}[1+f^{ll}(p_i,p_j)]Z^{\mathbf{p}}_{\Lambda,N_r}\prod_{i=1}^{N_R} \frac{\d p_i}{|\Lambda|}.
	\end{equation}	 
	Using \eqref{ratio} we provide upper and lower bounds, presented in Proposition~\ref{ClusterEffective2}.
	Hence, overall, following the decomposition in \eqref{EffectiveFVfreeE}
	we obtain upper and lower bounds
	for $\displaystyle\frac{1}{|\Lambda|}\ln\hat{Z}_{\Lambda,N_R, N_r}$
	given in the following theorem:
	\begin{theorem}\label{ClusterEffective}
		Suppose that for some $a,b,c\ge0$ the following conditions hold:
		\begin{equation}
			e^{b+c}|B_{R+r}(0)\setminus B_{R-r}(0)|\frac{N_r}{|\Lambda|-N_R |B_{R+r}|} +e^a|B_{2R}|\frac{N_R}{|\Lambda|}\le a
		\end{equation}
		\begin{equation}
			e^a|B_{R+r}(0)\setminus B_{R-r}(0)|\frac{N_R}{|\Lambda|}\le b,
		\end{equation}
		as well as \eqref{cond1}.
		
		Then, we have upper and lower bounds for $\displaystyle\frac{1}{|\Lambda|}\log \hat{Z}_{\Lambda,N_R,N_r}$ given by
		\begin{equation}\begin{split}\label{new_ub}
				\frac{1}{|\Lambda|}\log \hat{Z}^u_{\Lambda,N_R,N_r}
				& = 
				\frac{N_r}{|\Lambda|}\sum_{k\geq 1}\frac{1}{k+1}
				P_{N_r,|\Lambda|}(k)A_{\Lambda}(k;\frac{N_R |B_{R+r}|}{|\Lambda|})
				+\\
				&
				+\frac{N_R}{|\Lambda|}\sum_{k\geq 1}\frac{1}{k+1}
				\left(\frac{N_r}{|\Lambda|-N_R |B_{R+r}|}\right)^kB^{*,(1)}_{\Lambda}(k)\\
				& +
				\frac{N_R}{|\Lambda|}\sum_{n\ge1}P_{|\Lambda|,N_R}(n)B^*_{\Lambda}(n, N_r)
				+O\left(\frac{1}{|\Lambda|}\right)
		\end{split}\end{equation}
		and
		\begin{equation}\begin{split}\label{new_lb}
				\frac{1}{|\Lambda|}\log \hat{Z}^l_{\Lambda,N_R,N_r}
				& = 
				\frac{N_r}{|\Lambda|}\sum_{k\geq 1}\frac{1}{k+1}
				P_{N_r,|\Lambda|}(k)A_{\Lambda}(k;\frac{N_R |B_R|}{|\Lambda|})
				+\\
				&
				+\
				\frac{N_R}{|\Lambda|}\sum_{k\geq 1}\frac{1}{k+1}
				\left(\frac{N_r}{|\Lambda|-N_R |B_{R}|}\right)^k B^{*,(1)}_{\Lambda}(k)\\
				&+
				\frac{N_R}{|\Lambda|}\sum_{n\ge1}P_{|\Lambda|,N_R}(n)B^*_{\Lambda}(n;N_r)
				+O\left(\frac{1}{|\Lambda|}\right),
		\end{split}\end{equation}
		where $A_{\Lambda}$ is given in \eqref{ALambda}, $B^{*,(1)}_\Lambda$
		in \eqref{B1} and 
		$B^*_{\Lambda}$ in \eqref{withHG}.
	\end{theorem}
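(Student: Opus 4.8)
The plan is to combine the decomposition \eqref{EffectiveFVfreeE} with the two intermediate results already announced in the paper: the canonical cluster expansion for the small spheres in the presence of the big ones (Section~\ref{s_small}), and the treatment of the emerging effective multi-body interaction between the big spheres (Proposition~\ref{ClusterEffective2}, Section~\ref{s_big}). Concretely, since $\beta f_{\Lambda,\beta}(N_R,N_r)=-\tfrac{1}{|\Lambda|}\bigl(\log Z^{ideal}_{\Lambda,N_R,N_r}+\log Z^{\emptyset}_{\Lambda,N_r}+\log\hat Z_{\Lambda,N_R,N_r}\bigr)$ by \eqref{EffectiveFVfreeE}, and since the first two terms are already under control by \eqref{Zideal} and \eqref{ClusterOnlySmall}, the whole theorem reduces to producing upper and lower bounds for $\tfrac{1}{|\Lambda|}\log\hat Z_{\Lambda,N_R,N_r}$, where $\hat Z_{\Lambda,N_R,N_r}=Z^{int}_{\Lambda,N_R,N_r}/Z^{\emptyset}_{\Lambda,N_r}$ as in \eqref{EffectiveCPF}.

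First I would use the bounds \eqref{estimate}--\eqref{ratio} to sandwich $\hat Z_{\Lambda,N_R,N_r}$ between two auxiliary partition functions $\hat Z^u$ and $\hat Z^l$ in which the configuration-dependent normalization $|\tilde\Lambda(\mathbf p)|$ has been replaced by the $\mathbf p$-independent quantities $|\Lambda|-N_R|B_{R+r}|$ (for the upper bound) and $|\Lambda|-N_R|B_R|$ (for the lower bound). Crucially, the integral over the small spheres $Z^{\mathbf p}_{\Lambda,N_r}$ in \eqref{Zps} can be rewritten, after normalizing by $|\tilde\Lambda(\mathbf p)|^{N_r}$, as a canonical partition function of $N_r$ hard spheres living in the reduced volume $\tilde\Lambda(\mathbf p)$; the interaction with the big spheres survives only through this reduced volume plus a boundary correction supported in the corridor $B_{R+r}(p_j)\setminus B_{R-r}(p_j)$ of breadth $2r$ around each big sphere. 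This is exactly the mechanism that turns volume into surface, and it is why the coefficients $A_\Lambda(k;\cdot)$ in \eqref{ALambda} and $B^{*,(1)}_\Lambda$ in \eqref{B1} carry the factor $|B_{R+r}(0)\setminus B_{R-r}(0)|$ rather than $|B_R|$. I would then expand $\log\hat Z^u$ (resp. $\log\hat Z^l$) into three groups of terms: a pure-small-sphere contribution renormalized by the reduced available volume, giving the first line of \eqref{new_ub} with the coefficients $A_\Lambda$; a mixed contribution recording the interaction of a single big sphere with clusters of small ones through the corridor, giving the $B^{*,(1)}_\Lambda$ line; and the genuine effective multi-body interaction among the big spheres, giving the $B^*_\Lambda(n;N_r)$ line of \eqref{withHG}. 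The finite-volume remainders, coming from the $P_{|\Lambda|,N}$ polynomials and from boundary effects of the periodic box, are collected into $O(1/|\Lambda|)$.

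The convergence of all these series under the stated hypotheses is where the work lies. The three conditions in the statement — the coupled inequality $e^{b+c}|B_{R+r}(0)\setminus B_{R-r}(0)|\tfrac{N_r}{|\Lambda|-N_R|B_{R+r}|}+e^a|B_{2R}|\tfrac{N_R}{|\Lambda|}\le a$, the inequality $e^a|B_{R+r}(0)\setminus B_{R-r}(0)|\tfrac{N_R}{|\Lambda|}\le b$, and the auxiliary \eqref{cond1} — are precisely the Kotecky--Preiss / Fernández--Procacci type criteria needed so that each of the three expansions converges absolutely and uniformly; the parameters $a,b,c$ are the usual weights assigned to the three species of objects (small-sphere clusters, mixed small-around-big clusters, big-sphere polymers). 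The main obstacle, which I expect to handle by invoking the improved tree-graph inequality of \cite{jansen2020cluster} exactly as in the grand-canonical setting, is to control the effective multi-body interaction between the big spheres: one must show that the effective activity of a big sphere dressed by all its interactions with the small ones is bounded, in absolute value, by $e^a$ times the surface-order corridor volume, so that the $n$-body effective potentials satisfy a stability-type bound with constant governed by the surface rather than the volume of the big spheres. Once that bound is in place, the standard canonical cluster-expansion machinery of \cite{pulvirenti2012cluster} — in particular the appearance of the combinatorial factors $\tfrac{1}{k+1}P_{N,|\Lambda|}(k)$ and the passage from finite volume to the announced form — applies verbatim to each of the three pieces, and assembling them through \eqref{EffectiveFVfreeE} yields \eqref{new_ub} and \eqref{new_lb}.
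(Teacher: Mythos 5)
Your route is the paper's own: Theorem~\ref{ClusterEffective} is obtained exactly by combining Proposition~\ref{ClusterEffective2} (two-step integration, bounding of the $\mathbf p$-dependent adjustment factor via \eqref{ratio}, convergence of the effective polymer model through the tree-graph inequality of \cite{jansen2020cluster}) with the two cancellation lemmas, Lemma~\ref{second_cancellation1} and Lemma~\ref{LemmaUltimo}. But two points in your sketch are either wrong or missing. First, you place the surface (corridor) factor in the wrong objects: the coefficients $A_{\Lambda}$ in \eqref{ALambda} and $B^{*,(1)}_{\Lambda}$ in \eqref{B1} do \emph{not} carry $|B_{R+r}(0)\setminus B_{R-r}(0)|$ --- $A_{\Lambda}$ is built from the ratio $(1-\rho)^{-1}$ with the full excluded volumes $|B_{R+r}|$ or $|B_R|$, and $B^{*,(1)}_{\Lambda}$ contains $\int_{\Lambda}f^{ls}(0,q)\,\d q=-|B_{R+r}|$. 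The shell enters only in the hypotheses \eqref{c1}--\eqref{c2}, through the replacement of $\zeta(p,Y)$ by the corridor indicator $\tilde\zeta(p,Y)$ (verification of Assumption 2 of \cite{jansen2020cluster}), and that replacement is not automatic: it uses that a cloud is $f^{ss}$-connected and must attach to at least two mutually non-overlapping big spheres, so a chain of overlapping small spheres starting inside $B_{R-r}(p)$ and reaching another big sphere at distance at least $2R$ must have a member in $B_{R+r}(p)\setminus B_{R-r}(p)$. Your proposal asserts the surface-order bound on the dressed activity but never supplies this mechanism, which is the only reason the stated conditions involve the shell rather than the volume.

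Second, convergence alone does not yield the stated \emph{equalities}: the specific forms of $B^{*,(1)}_{\Lambda}$ in \eqref{B1} (one big sphere with two-species two-connected, i.e.\ articulation-free, graphs) and of $B^{*}_{\Lambda}$ in \eqref{withHG} (hypergraphs in $\mathcal B^{*}_{k+1,m}$ with cloud factors $C_{\Lambda}$), together with the $O(1/|\Lambda|)$ error, come from the power-counting conditions \eqref{p1.3}--\eqref{p1.4} and the cancellation of the superfluous two-connected (respectively articulation-free) components, following Sections 5--6 of \cite{pulvirenti2012cluster} and, crucially, formula (4.3) and Lemma 4.1 of \cite{kuna2018convergence}; this is the actual content of Lemmas~\ref{second_cancellation1} and \ref{LemmaUltimo}. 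Claiming that the machinery of \cite{pulvirenti2012cluster} ``applies verbatim to each of the three pieces'' skips precisely this step: that reference has no clouds or white vertices, and the reorganization \eqref{key_formula} of the interactions by the subsets $J\subset[N_R]$ of big spheres attached to a cloud is what makes the cancellation argument go through here. Relatedly, note that \eqref{ratio} is used in the paper only to bound the single adjustment factor \eqref{discr} (and inside the convergence estimates), not to replace $|\tilde\Lambda(\mathbf p)|$ throughout the expansion; a global monotone substitution inside the signed cluster series, as your ``auxiliary partition functions $\hat Z^u,\hat Z^l$'' suggest, would not be justified term by term.
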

	
	\begin{proof}
		The proof is a direct consequence of Proposition~\ref{ClusterEffective2} and Lemmas~\ref{second_cancellation1} and \ref{LemmaUltimo}.
	\end{proof}
	
	\begin{remark}\label{zero_remark}
		At the end of Section~\ref{s:canc} we give the thermodynamic limit of the above upper and lower bounds.
		The two bounds differ due to the slightly different entries in the first two terms.
		In Remark~\ref{r:lim} we comment that this discrepancy may be negligible in the further limit $R/r\to\infty$,
		which however requires more efforts for a complete analysis.
		Furthermore, some of these terms may cancel but only
		under a weaker convergence condition, see Remark~\ref{r:canc}. This is a key point of this paper and we
		will discuss it in detail.
	\end{remark}

	\section{Cluster expansion for the small spheres}\label{s_small}

	Next we give a precise formulation of the quantities considered in the previous
	expository section. We start with the computation of $Z^{\mathbf{p}}_{\Lambda,N_r}$.
	In \cite{jansen2020cluster} the cluster expansion has been performed in the grand-canonical ensemble. However,	following the main idea in \cite{pulvirenti2012cluster}, 
	a similar expansion is possible in the canonical ensemble as well.
	Let
	\begin{equation}\label{PolymerSmall}
		\mathcal{V}_r:=\{V\subset[N_r]\;:\;|V|\ge 2\}
	\end{equation} 
	be the space of subsets (of cardinality at least two) of the labels of the small spheres.
	We write:
	\begin{equation}\begin{split}\label{att}
			Z^{\mathbf{p}}_{\Lambda,N_r}& =\frac{1}{|\Lambda|^{N_r}}\times\\
			& 
			\int_{\Lambda^{N_r}}\sum_{\substack{\{V_1,\ldots,V_n\}\\ \text{part. of } [N_r]}}\prod_{i=1}^n 
			\left\{\sum_{g\in\mathcal C_{|V_i|}}\prod_{\{k,l\}\in E(g)}f^{ss}(q_k,q_l)
			\prod_{k\in V_i}
			\prod_{j=1}^{N_R}(1+f^{ls}(p_j,q_k)) \d\mathbf q_{V_i}\right\}\\
			& =  \left(\frac{|\tilde\Lambda(\mathbf p)|}{|\Lambda|}\right)^{N_r}\left(
			1+\sum_{n\geq 1}\frac{1}{n!}
			\sum_{\substack{(V_1,\ldots,V_n)\\ V_i\in\mathcal V_r}}
			\varphi(V_1,\ldots,V_n)
			\prod_{i=1}^n\zeta^{\mathbf p}_{\Lambda}(V_i)
			\right),
	\end{split}\end{equation}
	where 
	\begin{equation}\label{phi}
		\varphi(V_1,\ldots,V_n):=\prod_{1\le i<j\le n}\mathbf 1_{V_i\cap V_j=\emptyset}.
	\end{equation}
	Here $\mathcal{C}_n$ denotes the connected graph set on $n$ vertices.
	The activity $\zeta^{\mathbf p}_{\Lambda}$ is defined by: 
	\begin{equation}\label{zetal}
		\zeta^{\mathbf p}_{\Lambda}(V):=
		\left(\frac{|\Lambda|}{|\tilde\Lambda(\mathbf p)|}\right)^{|V|}
		\int_{\Lambda^{|V|}}\sum_{g\in\mathcal C_{|V|}}\prod_{i,j}f^{ss}(q_i,q_j)
		\left(1+\vartheta_{\mathbf{p}}(\mathbf q_V)\right)\prod_{i\in V}\frac{\d q_{i}}{|\Lambda|},
	\end{equation}
	where $\mathbf q_{V}\equiv (q_i)_{i\in V}$ and
	\begin{equation}\label{T}
		1+\vartheta_{\mathbf{p}}(\mathbf q_V):=\prod_{k\in V}\prod_{j=1}^{N_R}\left[1+f^{ls}(p_j,q_k)\right]\geq 0.
	\end{equation}	
	Recall that $|\tilde\Lambda(\mathbf p)|$
	is given in \eqref{abslambdatilde}.
	Note that in order to go from the first to the second line in \eqref{att} we had to normalize the measure with the quantity $|\tilde\Lambda(\mathbf p)|$
	so that $\zeta^{\mathbf p}_{\Lambda}(V)=1$, if $|V|=1$. In this manner, 
	we remove the constraint that
	the collection of $V_1,\ldots,V_n$ is a partition (as in the second line).
	Thanks to that, we are in the context of the Abstract Polymer Model \cite{kotecky1986cluster} and we can perform the cluster expansion.
	As mentioned before, 
	a key observation is the dependence of the normalization $|\tilde\Lambda(\mathbf p)|$ on the whole configuration $\mathbf p$.
	However, the bounds \eqref{estimate}
	depend only
	on the total number $N_R$.
	Hence, in the next section we will develop the cluster expansion over the configuration
	$\mathbf p$ taking into account these two bounds, 
	rather than the more complicated $\mathbf p$-dependent
	factor.
	We have:
	\begin{equation}\begin{split}\label{calc}
			\frac{1}{|\Lambda|}\ln Z^{\mathbf{p}}_{\Lambda,N_r} & = 
			\frac{1}{|\Lambda|}\ln\left(\frac{|\tilde\Lambda(\mathbf p)|}{|\Lambda|}\right)^{N_r}\\
			&+\frac{1}{|\Lambda|}
			\ln\left(
			1+\sum_{n\geq 1}\frac{1}{n!}
			\sum_{\substack{(V_1,\ldots,V_n)\\ V_i\in\mathcal V_r},\, |V_i|\geq 2}\varphi(V_1,\ldots,V_n)
			\prod_{i=1}^n\zeta^{\mathbf p}_{\Lambda}(V_i)
			\right)\\
			& = 
			\frac{N_r}{|\Lambda|}\ln\frac{|\tilde\Lambda(\mathbf{p})|}{|\Lambda|}
			+\frac{1}{|\Lambda|}
			\sum_{n\geq 1}\frac{1}{n!}
			\sum_{\substack{(V_1,\ldots,V_n)\\ V_i\in\mathcal V_r,\, |V_i|\geq 2}}
			\varphi^T(V_1,\ldots,V_n)
			\prod_{i=1}^n\zeta^{\mathbf p}_{\Lambda}(V_i),
	\end{split}\end{equation}
	where
	\begin{equation}\label{phiT}
		\varphi^T(V_1,\ldots,V_n):=\sum_{g\in\mathcal {C}_n}\prod_{\{i,j\}\in E(g)}-\mathbf 1_{V_i\cap V_j\neq\emptyset}.
	\end{equation}
	The expression is an absolutely convergent power series with the absolute value at the activities $\zeta^{\mathbf p}_{\Lambda}$. The proof is straightforward and will be given in the appendix, see Lemma~\ref{PropXuan2}.
	
	\begin{remark}\label{r:APM}
		We observe that
		\begin{equation}\label{comp_act}
			\prod_{i\in V}\prod_{j=1}^{N_R}(1+f^{ls}(p_j,q_i))=
			\prod_{i\in V}(1+\phi(q_i))
			=\sum_{A\subset V}\prod_{i\in A}\phi(q_i),
		\end{equation}
		after choosing
		\begin{equation}\label{comparison}
			1+\phi(q):=\prod_{j=1}^{N_R}(1+f^{ls}(p_j,q)).
		\end{equation}
		Hence, we are also in the context of the abstract polymer model considered in \cite{kuna2018convergence}, formula (3.4),
		for the above choice of the test function $\phi$.
		However, the goal there was to express the $n$-body truncated correlation function in terms of
		a sum over graphs by individuating the corresponding terms when $n$ small particles are fixed
		by the test functions.
		On the contrary, here, we want to compute the effective multi-body potential by fixing the centers
		of some big spheres rather than the small ones like previously.
		These two quantities seem similar, but they have different combinatorics since we have to further expand
		over the product over $j=1,\ldots, N_R$ in \eqref{comp_act}. 
		For that reason we cannot be based directly on the result in \cite{kuna2018convergence}, but instead give a hierarchy of the interactions with the big spheres as in the sequel. Nevertheless, in the proofs, the analysis in the former work will be relevant.
	\end{remark}

	We introduce an alternative equivalent expression for the series in \eqref{calc}.
	Let $\mathbb Y_{\Lambda}:=\oplus_{n\geq 1}\mathbb Q^n$,
	where $\mathbb Q:=\oplus_{k\geq 2}(\mathcal P_k([N_r])\times \Lambda^k)$
	and $\mathcal P_{k}([N_r]):=\{V\subset [N_r]:\, |V|=k\}$.
	On $\mathbb Q$ we introduce the measure
	\begin{equation}\label{MeasureXi}
		\d\xi(q_1,\ldots,q_k)\;:=\sum_{g\in \mathcal{C}_k}\prod_{\{i,j\}\in E(g)}f^{ss}(q_i,q_j)\frac{\d q_1}{|\Lambda|}\ldots \frac{\d q_k}{|\Lambda|},
	\end{equation}
	for an element ${\mathbf q}\in\mathbb Q$ with $|{\mathbf q}|=k$.
	Given a generic element $Y\equiv (\mathbf q_{V_1},\ldots,\mathbf q_{V_n})$, for some $n\in\mathbb N$ and some $V_1,\ldots,V_n\subset [N_r]$, we define the cardinality of $Y$ by $|Y|:=\left| \cup_{i=1}^n V_i\right|$.
	Then on $\mathbb Y_{\Lambda}$ we introduce the measure $\d\nu_{\Lambda,N_r,\mathbf p}$ given by:
	\begin{equation}\begin{split}
			\int_{\mathbb Y_{\Lambda}}h(Y)\d\nu_{\Lambda,N_r,\mathbf p}(Y) & := \sum_{n=1}^{\infty}\frac{1}{n!}\sum_{(V_1,\ldots,V_n)\in\mathcal{V}_{r}^n}\varphi^T(V_1,...,V_n)\left(\frac{|\Lambda|}{|\tilde\Lambda(\mathbf p)|}\right)^{\sum_{i=1}^n |V_i|}\times\\
			&\hspace{.3cm}\int\limits_{\Lambda^{\card{V_1}}}\cdots \int\limits_{\Lambda^{\card{V_n}}} h\bigg((\mathbf q_{V_i})_{i=1}^n\bigg) \prod_{i=1}^n \d\xi(\mathbf q_{V_i}),
			\label{DefYMeas}
	\end{split}\end{equation}
	for some $h\in L^{\infty}(\mathbb Y_{\Lambda},\mathbb R)$.
	
	For the case $h\equiv 1$, we have:
	\begin{equation}\begin{split}\label{only_small_notation}
			\int_{\mathbb Y_{\Lambda}}\d\nu_{\Lambda,N_r,\emptyset}(Y) & = 
			\frac{1}{|\Lambda|} \log Z^{\emptyset}_{\Lambda,N_r}\\
			& = 
			\frac{N_r}{|\Lambda|}\sum_{k\ge 1}\frac{1}{k+1}P_{|\Lambda|,N_r}(k)B^{\emptyset}_{\Lambda}(k),
	\end{split}\end{equation}
	where, following \cite{pulvirenti2012cluster},
	\begin{equation}\label{B_empty}
		B^{\emptyset}_{\Lambda}(k):=
		\frac{|\Lambda|^k}{k!}\sum_{\substack{(V_1,\ldots,V_n)\in\mathcal{V}_{r}^n \\ \cup_{i=1}^nV_i=[k+1]}}\varphi^T(V_1,...,V_n)
		\int\limits_{\Lambda^{\card{V_1}}}\cdots \int\limits_{\Lambda^{\card{V_n}}}\prod_{i=1}^n\d\xi(\mathbf q_{V_i}).
	\end{equation}
	Note that from \cite{pulvirenti2012cluster}, Section 6, we have that
	\begin{equation}\label{est}
		|B^{\emptyset}_{\Lambda}(k)-\beta_{\Lambda}(k)|\leq
		C_k\frac{1}{|\Lambda|},
	\end{equation}
	where
	\begin{equation}\label{betaLambda}
		\beta_\Lambda(k)=\frac{1}{|\Lambda|}\frac{1}{k!}\sum_{g\in\mathcal B_{k+1}}\int_{\Lambda^{k+1}}\prod_{(i,j)\in E(g)}
		f_{i,j}(q_i,q_j)\d\mathbf q_{[k+1]}
	\end{equation}
	and $\lim_{\Lambda\to\mathbb R^d}\beta_{\Lambda}(k)=\beta_k$, as given in \eqref{betan}.
	Furthermore, by the convergence of the cluster expansion and the dominated convergence theorem one
	can pass the limit inside the sum over $k$ in \eqref{only_small_notation}.
	The same is true for a generic bounded $h$.
	With this new notation, equation \eqref{calc} can be written as follows:
	\begin{eqnarray}\label{calc15}
		\frac{1}{|\Lambda|}\ln Z^{\mathbf{p}}_{\Lambda,N_r} & = &
		\frac{N_r}{|\Lambda|}\ln\frac{|\tilde\Lambda(\mathbf{p})|}{|\Lambda|}
		+\frac{1}{|\Lambda|}
		\int_{\mathbb Y_{\Lambda}}
		(1+\Theta_{\mathbf p}(Y))
		\d\nu_{\Lambda,N_r,\mathbf p}(Y),
	\end{eqnarray}
	where, for a generic element $Y\equiv (\mathbf q_{V_1},\ldots,\mathbf q_{V_n})$, we define:
	\begin{equation}\label{key_formula}
		1+\Theta_{\mathbf p}(Y):=
		\prod_{i=1}^n (1+\vartheta_{\mathbf p}(\mathbf q_{V_i}))=\prod_{j=1}^{N_R}(1+\zeta(p_j,Y))=1+\sum_{\substack{J\subset [N_R] \\ J\neq\emptyset}}\prod_{j\in J}\zeta(p_j,Y),
	\end{equation}
	and for some $p\in\Lambda$:
	\begin{equation}\begin{split}\label{zeta}
			-1\leq 
			\zeta(p,Y) & :=  \prod_{i=1}^n\prod_{k\in V_i}(1+f^{ls}(p,q_k))-1\\
			& =: 
			\prod_{i=1}^n (1+\bar\theta(p,\mathbf q_{V_i}))-1\leq 0.
	\end{split}\end{equation}
	Note that in $\zeta(p,Y)$ there is at least one link from $p$ to some $q_k$ in $V_i$, for some $i=1,\ldots,n$. 
	The re-arrangement in \eqref{key_formula} 
	will be particularly useful in the next step of integrating over the large spheres
	where we need to organize the sum
	in terms of the indices of the large spheres involved.
	
	With this observation, equation \eqref{calc15} can be rewritten as follows:
	\begin{equation}\begin{split}\label{calc2}
			\frac{1}{|\Lambda|}\ln Z^{\mathbf{p}}_{\Lambda,N_r} & = 
			\frac{N_r}{|\Lambda|}\ln\frac{|\tilde\Lambda(\mathbf{p})|}{|\Lambda|}
			+\frac{1}{|\Lambda|}
			\int_{\mathbb Y_{\Lambda}}\d\nu_{\Lambda,N_r,\mathbf p}(Y)+
			\\
			&
			+\sum_{\substack{J\subset[N_R]\\ J\neq\emptyset}}\frac{N_r}{|\Lambda|}\sum_{k\ge1}
			\frac{1}{k+1}P_{|\Lambda|,N_r}(k)B^{\mathbf{p}}_{\Lambda}(k,J),
	\end{split}		\end{equation}
	where $P_{|\Lambda|,N_r}$ is given in \eqref{P} and we introduced the
	following quantity:
	\begin{equation}\begin{split}\label{Bp}
			B^{\mathbf{p}}_{\Lambda}(k,J) & :=  \frac{|\Lambda|^k}{k!}
			\sum_{n\ge1}\frac{1}{n!}\sum_{\substack{(V_1,...,V_n)\in\mathcal{V}^n_{r}\\ \bigcup_{i=1}^{n}V_i=[k+1]}}\varphi^T(V_1,...,V_n)
			\left(\frac{|\Lambda|}{|\tilde\Lambda(\mathbf p)|}\right)^{\sum_{i=1}^n|V_i|}\times\\
			&\hspace{2.9cm}
			\int\limits_{\Lambda^{\card{V_1}}}\cdots \int\limits_{\Lambda^{\card{V_n}}}\prod_{j\in J}\zeta(p_j,(\mathbf q_{V_i})_{i=1}^n)\prod_{i=1}^n \d\xi(\mathbf q_{V_i}).
	\end{split}\end{equation}
	Note that following \cite{kuna2018convergence} it has similar properties with $B^{\emptyset}_{\Lambda}$ as it will be discussed later in detail in Lemma~\ref{second_cancellation2}.
	
	\begin{remark}\label{amb}
		Suppose that in the integration in \eqref{Bp}
		the center of a small sphere has label $k\in [N_r]$ with
		$k\in V_i\cap V_{i'}$, for some $1\leq i<i'\leq n$.
		In such a scenario the same function $f^{ls}(p,q_k)$ will be integrated in both $\d\xi(\mathbf q_{V_i})$ and $\d\xi(\mathbf q_{V_{i'}})$.
		In order to properly perform these integrations we
		introduce the notation of $q_k^i$ with an upper index $i$ indicating the polymer $V_i$ it belongs.
		We give a simple example: 
		say $V_1=V_2=\{1,2\}$;
		then we have the following integration:
		\begin{equation}\begin{split}\label{exint}
				&\int\limits_{\Lambda^{\card{V_1}}}\int\limits_{\Lambda^{\card{V_2}}}f^{ls}(p_1,q^1_1)f^{ls}(p_1,q^2_1)\prod_{i=1,2}\d\xi((q^i_k)_{k\in V_i})\\
				&\hspace{0.5cm}= 
				\int_{\Lambda^4}f^{ls}(p_1,q^1_1)f^{ls}(p_1,q^2_1)f^{ss}(q^1_1,q^1_2)f^{ss}(q^2_1,q^2_2)\frac{\d q^1_1}{|\Lambda|}
				\frac{\d q^1_2}{|\Lambda|}\frac{\d q^2_1}{|\Lambda|}\frac{\d q^2_2}{|\Lambda|}\\
				&\hspace{0.5cm}= 
				\left(
				\int_{\Lambda^2}f^{ls}(p_1,q^1_1)f^{ss}(q^1_1,q^1_2)
				\frac{\d q^1_1}{|\Lambda|}\frac{\d q^1_2}{|\Lambda|}
				\right)^2,
		\end{split}\end{equation}
		i.e., the functions $q\mapsto f^{ls}(p_1,q)$ should appear in two different integrals because they belong to two different polymers $V_i$, $i=1,2$. Thus, in the notation $q^i_k$, the lower index indicates the actual label of the particle chosen within a polymer $V$, 
		while the upper index indicates in which integral the actual particle should be considered,
		in order to properly perform the integrations. So in the above example, we have two particles, but considered twice in
		the integration while calculating their weight in \eqref{DefYMeas}.
	\end{remark}

	As far as the second term of \eqref{calc2} is concerned, this is almost equal to \eqref{only_small_notation}
	up to a different normalization (taking into account the presence of the big spheres).
	In fact, their difference is given by:
	\begin{equation}\begin{split}\label{discr}
			&
			\int_{\mathbb Y_{\Lambda}}\d\nu_{\Lambda,N_r,\mathbf p}(Y)
			- \int_{\mathbb Y_{\Lambda}}\d\nu_{\Lambda,N_r,\emptyset}(Y)=
			N_r\sum_{k\geq 1}\frac{1}{k+1}P_{N_r,|\Lambda|}(k)\;\frac{|\Lambda|^k}{k!}\times
			\\
			&
			\sum_{n\geq 1}\sum_{\substack{(V_1,\ldots,V_n)\in\mathcal{V}_{r}^n \\ \cup_{i=1}^n=[k+1]}}\varphi^T(V_1,...,V_n)
			\left[
			\left(\frac{|\Lambda|}{|\tilde\Lambda(\mathbf p)|}\right)^{\sum_{l=1}^n |V_i|}-1
			\right]
			\int\limits_{\Lambda^{\card{V_1}}}\cdots \int\limits_{\Lambda^{\card{V_n}}}\prod_{i=1}^n\d\xi(\mathbf q_{V_i}).
	\end{split}\end{equation}
	Using \eqref{ratio} this can be bounded from above by
	\begin{equation}\label{Aub}
		N_r\sum_{k\geq 1}\frac{1}{k+1}P_{N_r,|\Lambda|}(k)A_{\Lambda}\left(k;\frac{N_R |B_{R+r}|}{|\Lambda|}\right)
	\end{equation}
	and from below by
	\begin{equation}\label{Alb}
		N_r\sum_{k\geq 1}\frac{1}{k+1}P_{N_r,|\Lambda|}(k)A_{\Lambda}\left(k;\frac{N_R |B_R|}{|\Lambda|}\right),
	\end{equation}
	where for some $\rho\in\mathbb R$
	\begin{equation}\begin{split}\label{ALambda}
			A_{\Lambda}(k;\rho):=&
			\frac{|\Lambda|^k}{k!}\times\\
			&\sum_{n\geq 1}\sum_{\substack{(V_1,\ldots,V_n)\in\mathcal{V}_{r}^n \\ \cup_{i=1}^n V_i=[k+1]}}\varphi^T(V_1,...,V_n)
			\left[
			\left(\frac{1}{1-\rho}\right)^{\sum_{i=1}^n |V_i|}-1
			\right]
			\int\limits_{\Lambda^{\card{V_1}}}\cdots \int\limits_{\Lambda^{\card{V_n}}}\prod_{i=1}^n\d\xi(\mathbf q_{V_i}).
	\end{split}\end{equation}
	We recall that we were obliged to introduce this discrepancy in order to properly organize the cluster
	expansion of the canonical ensemble in \eqref{att}.

	Collecting all above terms we obtain a new expression of the partition function 
	$\hat{Z}_{\Lambda,N_R, N_r}$
	in terms of the effective interactions between the big spheres. 
	It is summarized in the following lemma:

	\begin{lemma}\label{Effective1}
		Under condition \eqref{cond1} 
		the partition function \eqref{EffectiveCPF} can be expressed as follows:
		\begin{equation}\begin{split}\label{eff}
				\hat{Z}_{\Lambda,N_R, N_r} &= 
				\int\limits_{\Lambda^{N_R}}
				\exp\left[
				\int_{\mathbb Y_{\Lambda}}\d\nu_{\Lambda,N_r,\mathbf p}(Y)
				-
				\int_{\mathbb Y_{\Lambda}}\d\nu_{\Lambda,N_r,\emptyset}(Y)		\right]
				\times\\
				&\prod_{1\le j<j'\le N_R}[1+f^{ll}(p_j,p_{j'})] e^{W^{(1)}(\mathbf p)}
				\exp\left[-\sum_{J\subset [N_R]\atop \card{J}\ge 2}W_{|J|}(\mathbf p_J)
				\right]
				\prod_{i=1}^{N_R}\frac{\mu_{\Lambda,N_r}(\d p_i)}{|\Lambda|},
		\end{split}\end{equation}
		where we have the following contributions: the multi-body interaction via clouds $Y\in\mathbb Y_\Lambda$
		\begin{eqnarray}
			W_{\card{J}}(\mathbf p_J) & := & -\int\limits_{\mathbb Y_{\Lambda}}\prod_{j\in J}\zeta(p_j,Y) \d\nu_{\Lambda,N_r,\mathbf p}(Y),\label{W}
		\end{eqnarray}
		multi-body interactions via clouds consisting of single small particles
		\begin{eqnarray}
			W^{(1)}(\mathbf p) & := &
			\ln\left(\frac{|\tilde\Lambda(\mathbf p)|}{|\Lambda|}\right)^{N_r} \label{W1}
		\end{eqnarray}
		and the new prior measure of big spheres interacting with a cloud
		\begin{equation}\label{mus}
			\mu_{\Lambda, N_r, \mathbf p}(\d p):=\exp\left[\int\limits_{\mathbb Y_{\Lambda}}\zeta(p,Y)\d\nu_{\Lambda,N_r,\mathbf p}(Y)\right]\d p.
		\end{equation}
		All these terms are well-defined, in particular,
		\begin{equation}\label{wdW}
			\int_{\mathbb Y_{\Lambda}}\prod_{j\in J}|\zeta(p_j,Y)| \,\d|\nu|_{\Lambda,N_r,\mathbf p} (Y)<\infty,
		\end{equation}
		for all $\mathbf p$.
	\end{lemma}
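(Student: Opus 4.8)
The plan is to obtain \eqref{eff} by substituting the cluster expansion \eqref{calc2} for $\frac1{|\Lambda|}\ln Z^{\mathbf p}_{\Lambda,N_r}$ into the definition \eqref{EffectiveCPF} of $\hat Z_{\Lambda,N_R,N_r}$, and then bookkeeping the various contributions according to the reorganization \eqref{key_formula}. First I would write $Z^{\mathbf p}_{\Lambda,N_r}=\exp\bigl(|\Lambda|\cdot\frac1{|\Lambda|}\ln Z^{\mathbf p}_{\Lambda,N_r}\bigr)$ and insert the three-term right-hand side of \eqref{calc2}: the factor $\bigl(|\tilde\Lambda(\mathbf p)|/|\Lambda|\bigr)^{N_r}$ becomes $e^{W^{(1)}(\mathbf p)}$ by definition \eqref{W1}; the term $\int_{\mathbb Y_\Lambda}\d\nu_{\Lambda,N_r,\mathbf p}(Y)$ I would split as $\int_{\mathbb Y_\Lambda}\d\nu_{\Lambda,N_r,\emptyset}(Y)+\bigl(\int_{\mathbb Y_\Lambda}\d\nu_{\Lambda,N_r,\mathbf p}(Y)-\int_{\mathbb Y_\Lambda}\d\nu_{\Lambda,N_r,\emptyset}(Y)\bigr)$, so that after dividing by $Z^{\emptyset}_{\Lambda,N_r}=\exp\bigl(\int_{\mathbb Y_\Lambda}\d\nu_{\Lambda,N_r,\emptyset}\bigr)$ (using \eqref{only_small_notation}) exactly the bracketed difference survives, which is the first exponential in \eqref{eff}.

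Next I would treat the last term of \eqref{calc2}, the sum over $\emptyset\neq J\subset[N_R]$. Using the identity \eqref{key_formula}, $1+\Theta_{\mathbf p}(Y)=1+\sum_{J\neq\emptyset}\prod_{j\in J}\zeta(p_j,Y)$, together with the definitions \eqref{Bp} of $B^{\mathbf p}_\Lambda(k,J)$ and \eqref{W}, \eqref{mus}, one recognizes
\[
\sum_{\substack{J\subset[N_R]\\ J\neq\emptyset}}\int_{\mathbb Y_\Lambda}\prod_{j\in J}\zeta(p_j,Y)\,\d\nu_{\Lambda,N_r,\mathbf p}(Y)
=\sum_{j=1}^{N_R}\int_{\mathbb Y_\Lambda}\zeta(p_j,Y)\,\d\nu_{\Lambda,N_r,\mathbf p}(Y)
-\sum_{\substack{J\subset[N_R]\\ |J|\ge 2}}W_{|J|}(\mathbf p_J).
\]
The single-particle ($|J|=1$) part is precisely $\sum_{j=1}^{N_R}\log\bigl(\mu_{\Lambda,N_r,\mathbf p}(\d p_j)/\d p_j\bigr)$, which after exponentiation converts $\prod_i \d p_i/|\Lambda|$ into $\prod_i \mu_{\Lambda,N_r}(\d p_i)/|\Lambda|$; the $|J|\ge2$ part gives $\exp\bigl[-\sum_{|J|\ge2}W_{|J|}(\mathbf p_J)\bigr]$. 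The hard-core factor $\prod_{j<j'}[1+f^{ll}(p_j,p_{j'})]$ is already present in \eqref{EffectiveCPF} and is carried along unchanged. Assembling the exponential factors yields \eqref{eff}.

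The main obstacle is justifying that all these manipulations are legitimate, i.e. that every series involved converges absolutely so that the regrouping is permitted, and in particular establishing \eqref{wdW}. This is where condition \eqref{cond1} enters: it must guarantee convergence of the polymer expansion for $\d\nu_{\Lambda,N_r,\mathbf p}$ uniformly in $\mathbf p$, using the bound $|\Lambda|/|\tilde\Lambda(\mathbf p)|\le (1-\frac{N_R}{|\Lambda|}|B_{R+r}|)^{-1}$ from \eqref{ratio}, which replaces the awkward $\mathbf p$-dependent normalization by a $\mathbf p$-independent one. For \eqref{wdW} specifically, I would bound $\prod_{j\in J}|\zeta(p_j,Y)|\le \prod_{j\in J}\mathbf 1_{\{\exists k:\, |p_j-q_k|<R+r\}}$ using \eqref{zeta} (recall $-1\le\zeta\le0$), and then absorb each such indicator into the small-particle integration: each big-sphere constraint localizes at least one small particle of the cloud $Y$ to a region of volume $|B_{R+r}|$ near $p_j$, so the standard Kotecký–Preiss / tree-graph estimate from \cite{pulvirenti2012cluster}, \cite{jansen2020cluster} applied to $\d|\nu|_{\Lambda,N_r,\mathbf p}$ controls the sum, the point being—as emphasized in Remark~\ref{key_remark}—that it is the corridor $B_{R+r}\setminus B_{R-r}$ of breadth $r$ rather than the full volume $|B_{R+r}|$ that governs the effective interaction once the trivial hard-core overlap is factored out. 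The remaining steps are routine rearrangements of absolutely convergent sums.
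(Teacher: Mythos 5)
Your proposal is correct and follows essentially the same route as the paper: exponentiating the decomposition \eqref{calc2} (equivalently \eqref{calc15} with the regrouping \eqref{key_formula}), dividing by $Z^{\emptyset}_{\Lambda,N_r}$ via \eqref{only_small_notation}, splitting the $J$-sum into the $|J|=1$ part (absorbed into $\mu_{\Lambda,N_r,\mathbf p}$) and the $|J|\ge 2$ part (giving the $W_{|J|}$), and invoking the convergence of the small-sphere cluster expansion under \eqref{cond1} (Lemma~\ref{PropXuan2}, estimates \eqref{ClusterZps2}--\eqref{ClusterZps3}) to justify the rearrangements and \eqref{wdW}; since $|\zeta(p_j,Y)|\le 1$, the bound \eqref{wdW} in fact follows directly from $\int_{\mathbb Y_\Lambda}\d|\nu|_{\Lambda,N_r,\mathbf p}<\infty$, so your extra corridor argument, while harmless, is not needed here.
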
	
	
	\begin{proof}
		Starting from \eqref{EffectiveCPF} we compute the partition function 
		$Z^{\mathbf{p}}_{\Lambda,N_r}$ of the small spheres in the presence of the big ones as given in \eqref{Zps}.
		This is given in Lemma~\ref{PropXuan2}
		and in particular in formula \eqref{ClusterZps2}.
		Then, from \eqref{ClusterZps3} we have that \eqref{W} and \eqref{mus} are well-defined.
		For the first term of \eqref{calc2} we use the bound \eqref{ratio}.
	\end{proof}

	\section{Effective canonical partition function for the big spheres}\label{s_big}
	
	In the previous section, integrating over the small spheres we obtained \eqref{eff} which is a canonical 
	partition function for a fixed number $N_R$
	of big spheres interacting via two different types of potentials: the pair potential $V^{ll}$ and the multi-body effective potential $W_{|J|}$ given in \eqref{W}.
	Furthermore, the
	integration is with respect to the new reference measure $\mu_{\Lambda,N_r}$, given in \eqref{mus}.
	In this section we compute the logarithm of this effective canonical partition function
	using two main tools:
	first the cluster expansion strategy for the canonical ensemble as it was suggested 
	in \cite{pulvirenti2012cluster} and second the tree-graph inequality obtained in \cite{jansen2020cluster}. 
	Note that \cite{pulvirenti2012cluster} does not cover the case of multi-body potentials. However, this is not a problem here since we can exploit the fact that the particular
	multi-body potentials we obtain are the result of pair interactions between the small spheres.
	This was the case in \cite{jansen2020cluster} as well.
	
	We expand all factors in the partition function
	\eqref{eff}.
	First, following \cite{jansen2020cluster} for the multi-body potential we have:
	\begin{equation}\label{ExpSvil}
		\exp\left\{-\sum_{\substack{J\subset[N_R],\\|J|\ge2}}W_{|J|}(\mathbf{p}_J)\right\}= \sum_{k\ge0}\frac{1}{k!}\int_{\mathbb{Y}^k_{\Lambda}}
		\sum_{\gamma}^{N_R,k}	
		\prod_{\{i,j\}\in E(\gamma)}\zeta(p_i,Y_j)\prod_{i=1}^{k}\nu_{\Lambda,N_r,\mathbf p}(dY_{N_R+i}).
	\end{equation}
	The graphs $\gamma$ in the second sum in \eqref{ExpSvil} have set of vertices in $\{1,...,N_R,N_R+1,...,N_R+k\}$ and are such that (i) for all $\{i,j\}\in E(\gamma),\;i\le N_R,\;j\ge N_R+1$, (ii) if $\{i_1,j\}\in E(\gamma)$ then there exists $i_2\le N_R$ such that $\{i_2,j\}\in E(\gamma)$, where $i_1,i_2\leq N_R$ and $j> N_R$.
	Second, the one-body contribution can be written as:
	\begin{equation}\begin{split}\label{extra}
			e^{W^{(1)}(\mathbf p)}
			&=
			\left(1+\int_{\Lambda}\left[\prod_{j=1}^{N_R}(1+f^{ls}(p_j,q))-1\right]\frac{\d q}{|\Lambda|}\right)^{N_r}\\
			&=
			\sum_{A\subset [N_r]}
			\prod_{i\in A}
			\left(
			\int_{\Lambda}\sum_{\substack{J_i\subset [N_R] \\ J_i\neq\emptyset}}\prod_{j\in J_i}f^{ls}(p_j,q)\frac{\d q}{|\Lambda|}
			\right)\\
			&=
			\sum_{A\subset [N_r]}
			\sum_{\substack{J_1,\ldots,J_{|A|} \\ J_i\subset [N_R], J_i \neq\emptyset}}
			\prod_{i=1}^{|A|}
			\int_{\Lambda}\prod_{j\in J_i}f^{ls}(p_j,q)\frac{\d q}{|\Lambda|}.
	\end{split}\end{equation}
	Multiplying \eqref{extra} with \eqref{ExpSvil}
	we obtain the same formula but on the augmented space
	\begin{equation}\label{augY}
		\tilde{\mathbb{Y}}:=\mathbb{Y}\oplus \Lambda,
	\end{equation}
	i.e., we include the case that a cloud consists of only one particle as in \eqref{extra}.
	Furthermore, we 
	get extra terms of interactions of small spheres with only one big sphere, so we
	should also augment the space in the integration in the exponent of the measure $\mu_{\Lambda,N_r}$.
	Furthermore, we note that in the case of only one particle, i.e., Y=\{q\} we define:
	\begin{equation}\label{linkwithone}
		\zeta(p,Y):=f^{ls}(p,q).
	\end{equation}

	Next, we expand the product $\displaystyle\prod_{1\le i<j\le N_R}[1+f^{ll}(p_i,p_j)]$.
	In order to organize a proper bookkeeping of all terms we consider the following set of polymers:
	\begin{equation}\label{VR}
		\mathcal{V}_R:=\{V\subset[N_R]:\;|V|\ge2\}.
	\end{equation}
	Given $V,V'\in\mathcal{V}_R$, we say that $V$ is compatible with $V'$ ($V\sim V'$) if and only if $V\cap V'=\emptyset$.

	Overall we obtain graphs with two types of links so we
	recall the following definition  from \cite{jansen2020cluster}.
	
	\begin{definition}\label{DefHyperGraphs}
		For $m,r\in\mathbb{N}_0$ with $m+r\ge 1$, let $\mathcal{G}^*_{m,r}$ be the class of graphs with vertex set $\{1,...,m+r\}$ such that:
		\begin{enumerate}
			\item The graph has no edges $\{k_1,k_2\}$ with $k_1,k_2\ge m+1$.
			\item Every vertex $k\in\{m+1,...,m+r\}$ belongs to at least two distinct edges $\{s,k\},\;\{t,k\},\;t,s\in\{1,...,m\}$.
		\end{enumerate}	
	\end{definition}
	
	Furthermore, we let
	\begin{equation}\label{C*} 
		\mathcal{C}^*_{m,k}:=\mathcal{G}^*_{m,k}\cap\mathcal{C}_{m+k},
	\end{equation}
	where $\mathcal{G}^*_{m,k}$ is given by Definition \ref{DefHyperGraphs} and $\mathcal{C}_{m+k}$ is the set of connected graphs with set of vertices $\{1,\ldots,m, m+1,..., m+k\}$. 
	
	We aim at representing the partition function \eqref{eff} by an abstract polymer model.
	Apart from the terms we treated above we have another $\mathbf p$-depended contribution
	(the first line in \eqref{eff}) 
	which instead of considering it in the cluster expansion we will simply give upper and
	lower bounds, as we have already commented in Section~\ref{s_small}.
	At this point we are ready to introduce the ``activity'' for a polymer $V\in\mathcal V_R$:
	\begin{equation}\begin{split}\label{w}
			&w_{\Lambda}(V)\\
			&\hspace{0.2cm}:=\sum_{k\ge 0}\frac{1}{k!}\int_{\Lambda^{|V|}}\int_{\mathbb{Y}^k}\varphi^T_*(\underline p_{V};Y_{|V|+1},...,Y_{|V|+k})
			\prod_{j\in V}\frac{\mu_{\Lambda,N_r,\mathbf p}(\d p_j)}{\int_{\Lambda}\mu_{\Lambda,N_r,\mathbf p}(\d p)}\prod_{i=1}^k\nu_{\Lambda,N_r,\mathbf p}(dY_{|V|+i}),\\
			&\hspace{0.2cm}=\sum_{k\ge 0}\frac{1}{k!}\int_{\Lambda^{|V|}}\int_{\mathbb{Y}^k}\varphi^T_*(\underline p_{V};Y_{|V|+1},...,Y_{|V|+k})
			\prod_{j\in V}\frac{\d p_j}{|\Lambda|}\prod_{i=1}^k\nu_{\Lambda,N_r,\mathbf p}(dY_{|V|+i}),
	\end{split}\end{equation}
	where we used the fact that $\int_{\mathbb Y_{\Lambda}}\zeta(p,Y) \d\nu_{\Lambda,N_r,\mathbf p}(Y)$ is independent of $p$ (see also Lemma~\ref{second_cancellation1}
	for a detailed analysis of this term) and hence we have:
	\begin{equation}\label{vol}
		\mu_{\Lambda,N_r}(\d p)= C_{\Lambda, N_r,\mathbf p} \d p, \quad C_{\Lambda, N_r,{\bf{p}}}:= e^{\int_{\mathbb Y_{\Lambda}}\zeta(0,Y) \d\nu_{\Lambda,N_r,\mathbf p}(Y)},
	\end{equation}
	due to periodic boundary conditions.
	Furthermore, following \cite{jansen2020cluster}, we defined
	\begin{equation}\label{phistar}
		\varphi^T_*(p_1,...,p_n;Y_{n+1},...,Y_{n+k}):=\sum_{g\in{
				\mathcal{C}^*_{n,k}}}\prod_{\substack{\{i,j\}\in E(g)\\i,j\in V}}f^{ll}(p_i,p_j)\prod_{\substack{\{i,j\}\in E(g)\\i,\in V,\;j\ge |V|+1}}\zeta(p_i,Y_j),
	\end{equation}
	see also Figure~\ref{fig:1}.
	\begin{figure}[h]
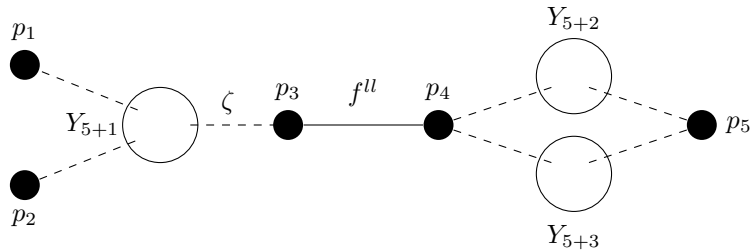

		\begin{center}
			\begin{tabular}{rcl}
				\tikz[baseline=-2pt]{ %orizzontale
					\draw[dashed](-1.5,-0.8)node[bigvertex, label=below:{$p_2$}]{}--(0,-0.2)node[]{};
					\draw[dashed](0,0.2)--(-1.5,0.8)node[bigvertex, label=above:{$p_1$}]{};
					\draw(0.3,0)circle(0.5)[]{}node[label=left:{$Y_{5+1}$\;\;\;}]{};
					%%p2-p3
					\draw[dashed](0.7,0)--(2,0)node[bigvertex,label=above:{$p_3$}](v1){};
					\draw(1.2,-0.1)node[label={$\zeta$}]{};
					\draw[dashed](5.5,0.5)--(4,0)node[bigvertex,label=above:{$p_4$}](v2){};
					\draw[dashed](5.5,-0.5)--(v2);
					\draw(v1)--(v2);
					\draw(3,0)node[label={$f^{ll}$}]{};
					%% Y5+2-Y5+3
					\draw(5.8,0.65)circle(0.5);
					\draw(5.8,1)node[label={$Y_{5+2}$}]{};
					\draw(5.8,-1.9)node[label={$Y_{5+3}$}]{};
					\draw(5.8,-0.65)circle(0.5);
					\draw[dashed](6,0.5)--(7.5,0)node[bigvertex,label=right:{$p_5$}](v3){};
					\draw[dashed](6,-0.5)--(v3);
					
				}
			\end{tabular}
		\end{center}\caption{A graph in $\mathcal{C}^*_{5,3}$, contributing to $\varphi_*^T(p_1,p_2,p_3,p_4,p_5;Y_6,Y_7,Y_8)$.}
		\label{fig:1}
	\end{figure}	
	
	This is equivalent to \cite{jansen2020cluster} but in the context of the canonical ensemble:
	we choose $n$ big spheres among a total of $N_R$ many and an arbitrary number $k\geq 0$ of blobs $Y_{n+1},\ldots,Y_{n+k}$ that come from the effective interaction through the small
	spheres.
	We recall that blobs can have one or more particles, as discussed in \eqref{augY}. In the case of only one particle the link is given by \eqref{linkwithone}.
	We have the following proposition:

	\begin{prop}\label{ClusterEffective2}
		Let $a>0$ and $b,c:\mathbb Y_{\Lambda}\to\mathbb R$ with $b(Y):=b|Y|$, for $b>0$, similarly for $c$.
		Under the assumptions of Lemma~\ref{Effective1} as well as
		\begin{equation}\label{c1}
			e^{b+c}|B_{R+r}(0)\setminus B_{R-r}(0)|\frac{N_r}{|\Lambda|-N_R |B_{R+r}|} +e^a|B_{2R}|\frac{N_R}{|\Lambda|}\le a
		\end{equation}
		and
		\begin{equation}\label{c2}
			e^a|B_{R+r}(0)\setminus B_{R-r}(0)|\frac{N_R}{|\Lambda|}\le b,
		\end{equation}
		for some $a,b,c\ge0$, we have the following convergent upper bound (similarly for the lower bound):
		\begin{equation}\begin{split}\label{ClusterEff}
				\frac{1}{|\Lambda|}\log \hat{Z}^u_{\Lambda,N_R,N_r} & := 
				\frac{N_r}{|\Lambda|}\sum_{k\geq 1}\frac{1}{k+1}
				P_{N_r,|\Lambda|}(k)A_{\Lambda}(k;\frac{N_{R}|B_{R+r}|}{|\Lambda|})+\\
				&
				+\frac{N_R}{|\Lambda|}\log\left(\int_{\Lambda}\frac{\mu_{\Lambda,N_r}(\d p)}{|\Lambda|}\right)\\
				&
				+\frac{1}{|\Lambda|}\sum_{n\ge1}\frac{1}{n!}\sum_{(V_1,...,V_n)\in\mathcal{V}^n_R}\varphi^T(V_1,...,V_n)\prod_{i=1}^n w_{\Lambda}(V_i),
		\end{split}\end{equation}	
		where all infinite series in the right hand side are absolutely convergent. 
		More precisely, we
		have the following estimates.
		For a fixed big sphere with center $p_1\in\Lambda$:
		\begin{equation}\begin{split}\label{res1}
				&\sum_{n\ge 2}\frac{1}{(n-1)!}e^{a n}\left(\frac{N_R}{|\Lambda|}\right)^{n-1}
				\sum_{k\ge 0}\frac{1}{k!}
				\int_{\Lambda^n}\int_{\mathbb{Y}^k}
				|\varphi^T_*(p_1,...,p_n;Y_{n+1},...,Y_{n+k})|\\
				&\hspace{5.3cm}
				\prod_{i=1}^{k}|\nu|_{\Lambda,N_r,\mathbf p}(dY_{n+i})\prod_{i=2}^{n}\d p_i\leq
				e^{a}-1.
		\end{split}\end{equation}
		For a fixed cloud $Y_{n+1}\in\mathbb Y_{\Lambda}$:
		\begin{equation}\begin{split}\label{res2}
				&\sum_{n\ge 2}\frac{1}{n!}e^{a n}\left(\frac{N_R}{|\Lambda|}\right)^{n-1}
				\sum_{k\ge 1}\frac{1}{(k-1)!}
				\int_{\Lambda^n}\int_{\mathbb{Y}^k}|\varphi^T_*(p_1,...,p_n;Y_{n+1},...,Y_{n+k})|\\
				&\hspace{4.5cm}
				\prod_{i=1}^{k}|\nu|_{\Lambda,N_r,\mathbf p}(dY_{n+i})\prod_{i=2}^{n}\d p_i\leq
				e^{b(Y_{n+1})}-1.
		\end{split}\end{equation}
	\end{prop}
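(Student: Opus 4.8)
The strategy is to recognize the right-hand side of \eqref{ClusterEff} as the logarithm of an abstract polymer partition function and apply the Kotecký--Preiss criterion, with the crucial input being the improved tree-graph inequality of \cite{jansen2020cluster} adapted to the canonical ensemble. First I would take the representation of $\hat Z_{\Lambda,N_R,N_r}$ from Lemma~\ref{Effective1}, formula \eqref{eff}. The first-line factor there (the discrepancy $\int \d\nu_{\Lambda,N_r,\mathbf p} - \int\d\nu_{\Lambda,N_r,\emptyset}$) does \emph{not} go into the cluster expansion: using the bounds \eqref{ratio} I replace it by its upper bound \eqref{Aub}, which is exactly the first term $\frac{N_r}{|\Lambda|}\sum_k \frac{1}{k+1}P_{N_r,|\Lambda|}(k)A_\Lambda(k;N_R|B_{R+r}|/|\Lambda|)$ appearing in \eqref{ClusterEff}; this yields the upper bound $\hat Z^u$, and the lower bound is symmetric with $|B_R|$ replacing $|B_{R+r}|$. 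What remains inside the integral over $\Lambda^{N_R}$ is the product $\prod_{j<j'}[1+f^{ll}(p_j,p_{j'})]$, the exponential of the multi-body interaction $-\sum_{|J|\ge2}W_{|J|}$, the one-body factor $e^{W^{(1)}}$, and the reference measure $\mu_{\Lambda,N_r,\mathbf p}$.

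Next I would expand all these factors simultaneously, as sketched around \eqref{ExpSvil}--\eqref{phistar}: expanding $e^{-\sum W_{|J|}}$ produces the clouds $Y_{N_R+1},\dots,Y_{N_R+k}$ with the constraint (from Definition~\ref{DefHyperGraphs}) that each cloud is linked to at least two big spheres; expanding $e^{W^{(1)}}$ via \eqref{extra} augments $\mathbb Y$ to $\tilde{\mathbb Y}=\mathbb Y\oplus\Lambda$, i.e. allows single-particle clouds with the link \eqref{linkwithone}; and expanding $\prod[1+f^{ll}]$ puts in the $f^{ll}$-edges. Since $\int_{\mathbb Y_\Lambda}\zeta(p,Y)\,\d\nu_{\Lambda,N_r,\mathbf p}(Y)$ is independent of $p$ by periodicity (Lemma~\ref{second_cancellation1}), the measure $\mu_{\Lambda,N_r,\mathbf p}$ is a constant multiple of Lebesgue measure, \eqref{vol}, so the normalization pulls out as $\big(\int_\Lambda \mu_{\Lambda,N_r}(\d p)/|\Lambda|\big)^{N_R}$ — the second term of \eqref{ClusterEff}. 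Grouping the connected components of the resulting two-colour graphs by the set $V\subset[N_R]$ of big-sphere labels they occupy, one gets precisely an abstract polymer model on $\mathcal V_R$ with compatibility $V\sim V'\iff V\cap V'=\emptyset$, hard-core polymer weight $\varphi^T$, and single-polymer activity $w_\Lambda(V)$ of \eqref{w}; this identifies the third term of \eqref{ClusterEff} as $\frac{1}{|\Lambda|}\log$ of that polymer partition function, giving the claimed formula.

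It then remains to verify convergence, i.e. the Kotecký--Preiss bound $\sum_{V'\not\sim V}|w_\Lambda(V')|\,e^{a|V'|}\le a|V|$. For this I would bound $|w_\Lambda(V)|$ by bringing the absolute value inside the $k$-sum and the graph sum in \eqref{w}, then apply the canonical-ensemble version of the tree-graph inequality from \cite{jansen2020cluster} to $|\varphi^T_*|$: the key point is that each cloud $Y_j$ contributes through $\prod\zeta(p_i,Y_j)$, and a factor $\zeta(p,Y)$ forces a small sphere in $Y$ to overlap the big sphere at $p$ — which, since it is already inside the available volume $\tilde\Lambda$, confines it to the surface corridor $B_{R+r}(0)\setminus B_{R-r}(0)$ rather than the full ball $B_{R+r}$. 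This is exactly where the surface-volume $|B_{R+r}(0)\setminus B_{R-r}(0)|$ enters and why the exponential weights $e^{b(Y)}=e^{b|Y|}$ and $e^{c(Y)}=e^{c|Y|}$ on clouds are needed: one assigns a weight $e^a$ to each big-sphere vertex and $e^{b|Y|}$ to each cloud, runs the tree-graph/Penrose resummation, controls the sum over clouds of a given size using $|\nu|_{\Lambda,N_r,\mathbf p}$ and the estimate coming from the small-sphere cluster expansion (Lemma~\ref{PropXuan2}), and thereby reduces everything to a geometric-series estimate that closes precisely under hypotheses \eqref{c1}--\eqref{c2} (the $f^{ll}$-edges contribute the $e^a|B_{2R}|N_R/|\Lambda|$ term; the $\zeta$-edges to clouds contribute the $e^{b+c}|B_{R+r}\setminus B_{R-r}| N_r/(|\Lambda|-N_R|B_{R+r}|)$ term). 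The two displayed estimates \eqref{res1} and \eqref{res2} are the two ``pinned'' forms of this bound — \eqref{res1} with a big sphere $p_1$ held fixed (hence $1/(n-1)!$ and the $(N_R/|\Lambda|)^{n-1}$ from integrating the other $n-1$ centers, with total weight $e^{an}$), and \eqref{res2} with a cloud $Y_{n+1}$ held fixed (hence $1/(k-1)!$) — and both follow from the same tree-graph estimate by choosing which vertex plays the role of the root.

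The main obstacle I expect is the tree-graph inequality itself in this mixed setting: one must handle connected graphs on two species of vertices (big spheres, with $f^{ll}$ among themselves, and clouds, with no cloud-cloud edges and the degree-$\ge2$ constraint of Definition~\ref{DefHyperGraphs}) and show that the Penrose/Battle--Federbush resummation produces a \emph{spanning tree} bound in which the surface factor $|B_{R+r}(0)\setminus B_{R-r}(0)|$ — not the volume $|B_{R+r}|$ — appears on every $\zeta$-edge. This is the adaptation of \cite[the tree-graph inequality]{jansen2020cluster} to the canonical normalization (the extra $|\Lambda|/|\tilde\Lambda(\mathbf p)|$ powers hidden in $\nu_{\Lambda,N_r,\mathbf p}$), and is where the combinatorics is genuinely delicate; once it is in place, assembling \eqref{res1}--\eqref{res2} and deducing the Kotecký--Preiss condition from \eqref{c1}--\eqref{c2} is routine.
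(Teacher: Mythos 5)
Your overall route coincides with the paper's proof: replace the $\mathbf p$-dependent first factor of \eqref{eff} by the bounds coming from \eqref{ratio} (i.e.\ \eqref{Aub}, \eqref{Alb}), pull out the constant normalization of $\mu_{\Lambda,N_r,\mathbf p}$ to get the second term of \eqref{ClusterEff}, recognize the remainder as an abstract polymer model on $\mathcal V_R$ with activity $w_\Lambda$ of \eqref{w}, and verify convergence by combining a Kotecký--Preiss-type criterion (the paper uses \cite[Theorem 2.1]{poghosyan2009abstract}) with the tree-graph inequality of \cite{jansen2020cluster} and the small-sphere estimate of Lemma~\ref{PropXuan2}, the leaf integrals producing exactly the two terms of \eqref{c1} and the cloud-pinned bound \eqref{c2}, with \eqref{res1}--\eqref{res2} as the rooted forms. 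So the architecture is right.

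There is, however, one concrete error at the very point you call the key step: your explanation of why the surface factor $|B_{R+r}(0)\setminus B_{R-r}(0)|$ (rather than $|B_{R+r}|$) appears on a $\zeta$-edge. You argue that the small sphere ``is already inside the available volume $\tilde\Lambda$'' and is therefore confined to the corridor. This is backwards: $\zeta(p,Y)\neq 0$ forces a small sphere of $Y$ to lie \emph{inside} $B_{R+r}(p)$, i.e.\ precisely outside $\tilde\Lambda(\mathbf p)$ as defined in \eqref{lambdatilde}; moreover the measure $\nu_{\Lambda,N_r,\mathbf p}$ of \eqref{DefYMeas} imposes no confinement at all --- the powers of $|\Lambda|/|\tilde\Lambda(\mathbf p)|$ are constants, not indicator functions, and the cloud positions range over all of $\Lambda$. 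A single $\zeta$-link to one big sphere only yields the full excluded volume $|B_{R+r}|$, which would destroy the improvement. The actual mechanism, which the paper implements by introducing $\tilde\zeta(p,Y)$ (the indicator that some small sphere of $Y$ lies in the corridor $R-r<|p-q|<R+r$) and verifying Assumption~2 of \cite{jansen2020cluster}, is geometric and collective: the hard core $f^{ll}$ between the big spheres, together with the requirement in Definition~\ref{DefHyperGraphs} that every cloud is attached to at least two big spheres and the $f^{ss}$-connectivity of the cloud, forces a small sphere into the corridor of each attached big sphere (a small sphere deeper than $R-r$ inside $B_{R+r}(p_1)$ cannot also reach a second big sphere at distance $\geq 2R$ from $p_1$). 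If you carry out your plan with the corridor factor justified as you state it, the estimate would not survive scrutiny; with the paper's $\zeta\to\tilde\zeta$ domination in place, the rest of your plan goes through as described.
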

	
	\begin{proof}
		For the first factor of \eqref{eff} we use the bounds
		\eqref{Aub} and \eqref{Alb}.
		Note that in this fashion the upper and lower bounds of the first factor do not depend 
		on the configuration $\mathbf p$ anymore.
		Hence, by expanding the other three factors as described above
		as well as normalizing the prior measure we obtain:
		\begin{equation}\begin{split}
				\left(\frac{|\Lambda|}{\int_{\Lambda}\mu_{\Lambda,N_r}(\d p)}\right)^{N_R}\hat{Z}_{\Lambda,N_R,N_r}
				&\leq
				\exp\left[N_r\sum_{k\geq 1}\frac{1}{k+1}
				P_{N_r,|\Lambda|}(k)A_{\Lambda}(k;\frac{N_R |B_{R+r}|}{|\Lambda|})\right]
				\times\\
				&\hspace{0.7cm}\left(
				1+\sum_{n\geq 1}\frac{1}{n!}
				\sum_{\substack{(V_1,\ldots,V_n)\\ V_i\in\mathcal V_R}}
				\varphi(V_1,...,V_n)
				\prod_{i=1}^n w_{\Lambda}(V_i)
				\right)
		\end{split}\end{equation}
		and similarly for the lower bound with $A_{\Lambda}(k;\frac{N_R |B_R|}{|\Lambda|})$, respectively.
		The activity $w_{\Lambda}$ is defined in \eqref{w}. Having the bounds of the 
		partition function
		in the form of the abstract polymer model we check the
		convergence condition as in \cite[ Theorem 2.1]{poghosyan2009abstract}.
		Given $a>0$ we have:
		\begin{equation}\begin{split}\label{ConvSecondStepCanonical}
				&\sup_{i\in[N_R]}\;\sum_{\substack{V\in\mathcal{V}_R:\\ V\ni i}}|w_{\Lambda}(V)|e^{a|V|}\\
				&\hspace{.3cm}\leq \sum_{n\ge 2}e^{a n}\binom{N_R-1}{n-1}\times\\
				&\hspace{.7cm}\sum_{k\ge 0}\frac{1}{k!}
				\int_{\Lambda^n}\int_{\mathbb{Y}^k}|\varphi^T_*(p_1,...,p_n;Y_{n+1},...,Y_{n+k})|\prod_{i=1}^{k}|\nu|_{\Lambda,N_r,\mathbf p}(dY_{n+i})\prod_{i=2}^{n}\frac{\d p_i}{|\Lambda|}\\
				& \hspace{.3cm}= e^a\sum_{n\ge 2}\frac{1}{(n-1)!}\left(e^a\frac{N_R}{|\Lambda|}\right)^{n-1}\times\\
				&\hspace{.9cm}\int_{\Lambda^n}
				\sum_{k\ge 0}\frac{1}{k!}
				\int_{\mathbb{Y}^k}|\varphi^T_*(p_1,...,p_n;Y_{n+1},...,Y_{n+k})|\prod_{i=1}^{k}|\nu|_{\Lambda,N_r,\mathbf p}(dY_{n+i})\prod_{i=2}^{n}\d p_i.
		\end{split}\end{equation}
		
		Now we proceed as in \cite{jansen2020cluster}, in the proof of Theorem~3.8, by applying the
		tree-graph inequality to $\varphi^T_*(p_1,...,p_n;Y_{n+1},...,Y_{n+k})$.
		Then, we just need to integrate over the various leaves of the trees: big-big and big-cloud.
		In the first case, a vertex corresponding to a big sphere comes with a factor $|B_{2R}|$
		(excluded volume between two big) 
		while in the second with a factor $|B_{R+r}|$ (excluded volume of a big and a small).
		In both cases, we observe that we can incorporate it with  $\displaystyle\frac{N_R}{|\Lambda|}$
		and consider $\displaystyle\frac{N_R |B_{R+r}|}{|\Lambda|}$
		as the macroscopic quantity of the density of the big spheres.
		This fact is not really needed here, but we mention it as it will be discussed in Remark~\ref{second_remark}.
		
		In order to conclude we need
		to check that under \eqref{c1} and \eqref{c2}
		the sufficient conditions of Theorem 3.8 from \cite{jansen2020cluster} are met.
		First, recalling the notation in Remark~\ref{amb}, for $Y=((q^i_k)_{k\in V_i})_{i=1}^{n}$ we call
		\begin{equation}
			\tilde{\zeta}(p,Y):=-\mathbf{1}_{\{\exists i\in\{1,...,n\},\;\exists\; k_0\in V_i\;|\;  R-r<|p-q^{i}_{k_0}|< R+r\}}.
		\end{equation}
		Note that Assumption 2 of \cite{jansen2020cluster} is satisfied, i.e., we have: 
		\begin{equation}
			\prod_{1\le  <j\le N_R}[1+f^{ll}(p_i,p_j)]\prod_{i=1}^{N_R}|\zeta (p_i,Y)|\le\prod_{1\le i <j\le N_R}[1+f^{ll}(p_i,p_j)]\prod_{i=1}^{N_R}|\tilde{\zeta} (p_i,Y)|,
		\end{equation}
		\begin{equation}
			|\zeta (p_1,Y)|\prod_{1\le i <j\le N_R}[1+f^{ll}(p_i,p_j)]\left|\prod_{i=2}^{N_R}\zeta (p_i,Y)-1\right|\le |\tilde{\zeta} (p_1,Y)|\prod_{1\le i <j\le N_R}[1+f^{ll}(p_i,p_j)].
		\end{equation}
		Then, 
		recalling \eqref{DefYMeas}, for fixed $p\in\Lambda$ the ``big-cloud'' leaf yields:
		\begin{equation}\begin{split}\label{one}
			&\int_{\mathbb Y}|\tilde\zeta(p,Y)| e^{b(Y)} \d |\nu|_{\Lambda,N_r,\mathbf p}(Y) \\
			& \hspace{.5cm}= 
			\sum_{n=1}^{\infty}\frac{1}{n!}\sum_{(V_1,\ldots,V_n)\in\mathcal{V}_{r}^n}|\varphi^T(V_1,...,V_n)|
			e^{\sum_{i=1}^n b(V_i)}\times\\
			&\hspace{1.8cm}\int\limits_{\Lambda^{\card{V_1}}}\cdots \int\limits_{\Lambda^{\card{V_n}}}\mathbf 1_{\cup_{i=1}^n\cup_{k\in V_i}\{R-r<|p-q^{i}_k|< R+r\}}\prod_{i=1}^n \d\xi\big((q_k^i)_{k\in V_i}\big)\\
			& \hspace{.5cm}= 
			N_r
			\sum_{V_1\subset [N_r], V_1\ni 1}
			e^{b(V_1)}\left(\frac{|\Lambda|}{|\tilde\Lambda(\mathbf p)|}\right)^{|V_1|}\times\\
			&\hspace{1.9cm}\int_{\Lambda^{|V_1|}} \mathbf 1_{\{R-r<|p-q_1|< R+r\}}\sum_{g\in \mathcal{C}_{|V_1|}}\prod_{\{i,j\}\in E(g)}f^{ss}(q_i,q_j)\frac{\d q_1}{|\Lambda|}\ldots \frac{\d q_{|V_1|}}{|\Lambda|}\times\\
			&
			\hspace{1.9cm}\sum_{n=2}^{\infty} \frac{1}{(n-1)!} 
			\sum_{\substack{(V_2,\ldots,V_n)\\ V_i\in\mathcal V_r,\, |V_i|\geq 2}}				|\varphi^T(V_1,...,V_n)|e^{\sum_{k=2}^n b(V_k)}\prod_{k=2}^n |\zeta_{\Lambda}(V_k)|\\
			& \hspace{.5cm}\leq  
			e^c N_r
			\sum_{m=2}^{N_r}\binom{N_r-1}{m-1}
			e^{b m} m^{m-2} \frac{|B_{2r}|^{m-1}}{|\tilde\Lambda(\mathbf p)|^{m-1}} 
			\int_{\Lambda}\mathbf 1_{\{R-r<|p-q_1|< R+r\}}\frac{\d q_1}{|\tilde\Lambda(\mathbf p)|}\\
			& \hspace{.5cm}\leq 
			e^c \frac{N_r}{|\Lambda|-N_R |B_{R+r}|} e^b |B_{R+r}(p)\setminus B_{R-r}(p)| \sum_{m=2}^{N_r}
			\frac{m^{m-2}}{(m-1)!}\left(
			\frac{N_r |B_{2r}|}{|\Lambda|-N_R |B_{R+r}|} e^b
			\right)^{m-1}\\
			& \hspace{.5cm}\leq 
			e^c \frac{N_r}{|\Lambda|-N_R |B_{R+r}|} e^b |B_{R+r}(0)\setminus B_{R-r}(0)|,\end{split}
		\end{equation}
		where for the bound $e^c$ we used \eqref{ClusterZps2}
		while in the last inequality from \eqref{cond1} we computed:
		\begin{equation}
			\sum_{m\ge 2}
			\frac{m^{m-2}}{(m-1)!}\left(
			\frac{N_r |B_{2r}|}{|\Lambda|-N_R |B_{R+r}|} e^b
			\right)^{m-1}\leq \sum_{m\ge 2}
			\frac{m^{m-2}}{(m-1)!}\left(c e^{-c}
			\right)^{m-1}\leq 1.
		\end{equation}
		Furthermore, for the ``big-big'' leaves we have:
		\begin{eqnarray}\label{two}
			\int_{\Lambda}|f^{ll}(p,p')| e^{a}\frac{N_R}{|\Lambda|} \d p'
			\leq 
			e^a\frac{N_R}{|\Lambda|}
			\int_{\Lambda}\mathbf 1_{|p-p'|<2R}
			\d p'= 
			e^a\frac{N_R}{|\Lambda|} |B_{2R}|.
		\end{eqnarray}
		Similarly, for a fixed cloud $Y\in\mathbb Y_{\Lambda}$ we have:
		\begin{eqnarray}\label{three}
			\int_{\Lambda}|\tilde\zeta(p,Y)| e^{a} \frac{N_R}{|\Lambda|}\d p
			& \leq &
			\sup_{q\in Y} e^a\frac{N_R}{|\Lambda|}
			\int_{\Lambda}\mathbf 1_{\cup_{q\in Y}\{R-r<|p-q|< R+r\}}\d p\nonumber\\
			& \leq & 
			|Y| e^a\frac{N_R}{|\Lambda|} |B_{R+r}(0)\setminus B_{R-r}(0)|.
		\end{eqnarray}
		Thus, overall, under the above convergence conditions, both upper and lower limits
		of the partition functions are convergent at finite volume, uniformly on $\Lambda$, 
		$N_r$ and $N_R$. 
		The bounds \eqref{res1} and \eqref{res2} follow from standard arguments.
	\end{proof}

	\begin{remark}\label{key_remark}
		For fixed $r$ and $R$ the convergence conditions \eqref{c1} and \eqref{c2}
		imply that the thermodynamic limit of the free energy exists and
		can be written in terms of $\rho_r:=\lim_{\substack{N_r\to\infty \\ |\Lambda|\to\infty}}\frac{N_r}{|\Lambda|}$
		and $\rho_R:=\lim_{\substack{N_R\to\infty \\ |\Lambda|\to\infty}}\frac{N_R}{|\Lambda|}$.
		It is expected that as in \cite{jansen2020cluster} by integrating first over the small and then
		over the big we obtain an improved radius of convergence.
		Note, however, this improvement was revealed in a further limit of $R/r\to\infty$.
		A similar fact is valid also in the present case of the canonical ensemble.
		Suppose that \eqref{c1} is true for some $|B_{2R}|\frac{N_R}{|\Lambda|}\neq 0$.
		Then, necessarily $\displaystyle a>e^{b+c}|B_{R+r}(0)\setminus B_{R-r}(0)|\frac{N_r}{|\Lambda|-N_R|B_{R+r}|}$,
		where $a$ depends on $R$.
		Recall also that from \eqref{cond1} we have that $\frac{N_r}{|\Lambda|-N_R|B_{R+r}|}$ is of order one.
		Thus, we choose $\displaystyle a:=\alpha |B_{R+r}(0)\setminus B_{R-r}(0)|\frac{N_r}{|\Lambda|-N_R|B_{R+r}|}$
		with $\alpha> e^{b+c}$.
		In this way we obtain that conditions \eqref{c1} and \eqref{c2} hold true
		if and only if
		\begin{equation}\label{remark}
			|B_{2R}|\frac{N_R}{|\Lambda|}\leq e^{-a}\min\left\{
			(\alpha-C e^{b+c})\frac{|B_{R+r}(0)\setminus B_{R-r}(0)| N_r}{|\Lambda|-N_R|B_{R+r}|},
			\frac{b |B_{2R}|}{|B_{R+r}(0)\setminus B_{R-r}(0)|}
			\right\}.
		\end{equation}
		Hence, as $R\to\infty$ for fixed $r>0$ and fixed parameters $\alpha, b$ and $c$,
		we obtain
		\begin{equation}\label{remark2}
			|B_{2R}|\frac{N_R}{|\Lambda|}\leq 
			C_1 e^{- C_2 |B_{R+r}(0)\setminus B_{R-r}(0)|\frac{N_r}{|\Lambda|-N_R |B_{R+r}|}},
		\end{equation}
		for some $C_1, C_2>0$,
		with the exponential decay being of the order of the surface of the large sphere
		rather than the volume as the standard cluster expansion condition would have implied.
	\end{remark}
	
	\begin{remark}\label{second_remark}
		A second observation arising in the previous remark is that we can also choose as
		the macroscopic density of the large spheres the quantity
		$|B_{R+r}|\frac{N_R}{|\Lambda|}$, rather than simply 
		$\frac{N_R}{|\Lambda|}$.
		This is also compatible with the validity of the sufficient conditions as discussed
		also in the previous proof. We expect that a more elaborate analysis for the
		regime $R/r\to\infty$ can be made, but this goes beyond the goal of the present article.
	\end{remark}
	
	\begin{remark}\label{third_remark}
		As discussed in Remark~\ref{key_remark} we have two main convergence regimes:
		the first is when we treat the partition function as a two-species model which leads
		to two-species, two-connected graphs. The second is
		when we first integrate over the small spheres and treat the effective system of
		the big spheres which has a more complicated multi-body structure as given in
		\eqref{ClusterEff}.
		However, if one expands further this more complicated structure, should get back
		the two-connected graphs.
		The key point here is that expanding further is equivalent to saying that we look at
		absolutely convergent series of more terms (by moving the absolute value to a larger number of terms) and this is where the improved convergence condition might be
		lost. In the next section, we look for simpler forms of the derived expansions, always making sure
		that we do not lose the improved radius of convergence.
	\end{remark}

	\section{Some cancellations}\label{s:canc}

	In the sequel we provide simpler formulas for the finite volume free energy derived
	in Proposition~\ref{ClusterEffective2}. In particular we intend to express it in terms of 
	the densities of the large and of the small spheres as well as investigate its behaviour in the limit
	of infinite volume. 
	
	\subsection{The ``adjustment'' term}\label{s_one}
	Following the analysis in Section 6 of \cite{pulvirenti2012cluster} we can simplify \eqref{ALambda} and then
	pass to the thermodynamic limit.
	By power counting of the orders of the volume, the dominant terms in \eqref{ALambda} at the thermodynamic limit 
	are those that satisfy:
	\begin{equation}
		V_i\neq V_j,\,\forall i\neq j\,\, \text{and}\label{p1.3}
	\end{equation}
	\begin{equation}
		\left|\cup_{i=1}^n V_i\right|= \sum_{i=1}^n(|V_i|-1)+1\label{p1.4}.
	\end{equation}
	Let
	\begin{equation}\begin{split}\label{Ainf}
			A_{\infty}(k;\rho):=&
			\frac{1}{k!}\sum_{g\in\mathcal C_{k+1}}\int\limits_{(\mathbb R^d)^k}\prod_{(i,j)\in E[g]} f^{ss}(q_i,q_j)\d \mathbf q_{[k]}\times\\
			&\;\;\;\;\;\sum_{n=1}^{k}\sum^*_{\substack{(V_1,\ldots,V_n):\\ \cup_{i=1}^n V_i=[k+1]}}
			\phi^T(V_1,\ldots,V_n)
			\sum_{l=1}^{k+n}\binom{k+n}{l}\left(
			\frac{\rho}{1-\rho}
			\right)^l.
		\end{split}
	\end{equation}
	with $[k]:=\{1,\ldots,k\}$, $q_{k+1}\equiv 0$ and where the $^*$ in the sum indicates that we consider
	only the terms that satisfy \eqref{p1.3} and \eqref{p1.4}.
	Then we have:
	\begin{lemma}
		For any $\rho\geq 0$ and any $k\in\mathbb N$ we have that
		\begin{equation}\label{error_A}
			|A_{\Lambda}(k;\rho)-A_{\infty}(k;\rho)|\leq C_k\frac{1}{|\Lambda|},
		\end{equation}
		where $A_{\Lambda}$ is given in \eqref{ALambda}.
	\end{lemma}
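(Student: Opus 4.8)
The plan is to compare $A_{\Lambda}(k;\rho)$ as defined in \eqref{ALambda} with $A_{\infty}(k;\rho)$ term by term, following the power-counting argument of Section~6 of \cite{pulvirenti2012cluster}. First I would expand the bracket $\left(\frac{1}{1-\rho}\right)^{\sum_i|V_i|}-1$ using the binomial theorem: writing $\sum_{i=1}^n|V_i|=m$ we have $\left(\tfrac{1}{1-\rho}\right)^m-1=\sum_{l\ge1}\binom{m}{l}\left(\tfrac{\rho}{1-\rho}\right)^l$, which is a finite sum since $m\le k+n$ (as each $V_i$ has at most $k+1$ labels and in the relevant configurations $m=k+n$ when \eqref{p1.4} holds, but in general $m\le \sum(|V_i|)$ is bounded in terms of $k,n$). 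This rewrites $A_{\Lambda}(k;\rho)$ as a sum over collections $(V_1,\dots,V_n)$ of labels with $\cup V_i=[k+1]$, weighted by $\varphi^T$, the integral of $\prod_i\d\xi(\mathbf q_{V_i})$, and a polynomial in $\rho/(1-\rho)$.

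Next I would perform the standard volume bookkeeping. Each factor $\d\xi(\mathbf q_{V_i})$ carries a connected graph on $|V_i|$ vertices, so after integrating (with the $|\Lambda|^k$ prefactor absorbing the normalizations) the order in $|\Lambda|$ of a given term is governed by the cyclomatic structure: a collection with $\big|\cup_i V_i\big|=\sum_i(|V_i|-1)+1$ and all $V_i$ distinct contributes at order $|\Lambda|^0$, while any violation of \eqref{p1.3} or \eqref{p1.4} — i.e. repeated polymers or extra overlaps creating cycles among the polymers — costs at least one factor of $|\Lambda|^{-1}$. This is exactly the argument in \cite{pulvirenti2012cluster}, Section~6, used there for $B^{\emptyset}_\Lambda(k)$ versus $\beta_k$, cf.\ \eqref{est}; the presence of the extra binomial/polynomial factor in $\rho$ is harmless since for fixed $k$ the exponent $l$ ranges over a finite set and $\rho/(1-\rho)$ is a fixed finite constant, so it only changes the combinatorial constant $C_k$. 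For the main (order one) terms I would also replace integration over $\Lambda^k$ by integration over $(\mathbb{R}^d)^k$ with $q_{k+1}\equiv0$, the error again being $O(|\Lambda|^{-1})$ by the usual argument that the Mayer functions $f^{ss}$ have compact support (range $2r$), so the connected-graph constraint confines all $q_i$ to a bounded region around the origin up to a boundary correction of relative size $1/|\Lambda|$; and I would use that on the order-one terms the sum over connected graphs factorizes as $\sum_{g\in\mathcal C_{k+1}}\prod_{(i,j)\in E[g]}f^{ss}(q_i,q_j)$ times the polymer-combinatorial sum $\sum_n\sum^*_{(V_1,\dots,V_n)}\phi^T(V_1,\dots,V_n)\sum_l\binom{k+n}{l}(\rho/(1-\rho))^l$, which is precisely \eqref{Ainf}.

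Collecting, the difference $A_{\Lambda}(k;\rho)-A_{\infty}(k;\rho)$ is a finite sum (for fixed $k$, the number of polymer collections with $\cup V_i=[k+1]$ and the number of graphs are finite) of terms each of which is $O(|\Lambda|^{-1})$, with a constant $C_k$ depending on $k$, on $d$, on $r$ (through $|B_{2r}|$ and the range of $f^{ss}$) and polynomially on $\rho/(1-\rho)$; the stated bound \eqref{error_A} follows. The main obstacle is the careful power-counting of which polymer configurations are subdominant: one has to check that every term violating \eqref{p1.3} (a repeated polymer, handled via the $q^i_k$ upper-index bookkeeping of Remark~\ref{amb}) or \eqref{p1.4} (extra intersections among distinct polymers) genuinely gains a factor $|\Lambda|^{-1}$, uniformly in the extra binomial weight — but this is essentially the content of \cite{pulvirenti2012cluster}, Section~6, adapted to carry along the harmless polynomial-in-$\rho$ factor, so I would organize the proof as a direct reduction to that computation.
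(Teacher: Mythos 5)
Your route is the same as the paper's: reduce everything to the power-counting analysis of Section~6 of \cite{pulvirenti2012cluster}, identify the order-one contributions as those satisfying \eqref{p1.3}--\eqref{p1.4}, and carry the extra $\rho$-dependent weight along. There is, however, a concrete gap in your final collection step. For fixed $k$ the sum in \eqref{ALambda} is \emph{not} finite: the tuples $(V_1,\ldots,V_n)$ range over arbitrary $n$, repeated polymers are allowed (this is precisely the situation the $q^i_k$ bookkeeping of Remark~\ref{amb} is designed for), and $\varphi^T$ does not vanish on repetitions. So ``a finite sum of terms each of which is $O(|\Lambda|^{-1})$'' is not a valid conclusion; one has to sum infinitely many subdominant collections and show the total is still $O(|\Lambda|^{-1})$. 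Moreover, exactly on those terms your justification that the $\rho$-factor is harmless (``$l$ ranges over a finite set'') breaks down: $\left(\frac{1}{1-\rho}\right)^{\sum_i|V_i|}-1$ grows exponentially in the total polymer size $\sum_i|V_i|$, which is unbounded once $n$ is. What saves the bound is that every polymer beyond a minimal spanning structure costs at least one additional factor $|\Lambda|^{1-|V_i|}\le|\Lambda|^{-1}$, so the tail is geometric in $C_k(\rho)/|\Lambda|$ for $|\Lambda|$ large; this is the quantitative content of \cite{pulvirenti2012cluster}, Section~6, and it is the only nontrivial summability issue in the lemma, so it cannot be replaced by a finiteness claim.

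A softer point: your assertion that on the order-one terms the expression ``factorizes'' into the connected-graph sum times the polymer-combinatorial sum, ``which is precisely \eqref{Ainf}'', skips the step where the actual work is done. For a fixed collection the integral $\int\prod_i\d\xi(\mathbf q_{V_i})$ is a product of per-polymer connected cluster integrals; to organize the dominant part by a single connected graph $g$ on $[k+1]$ one must fix $g$, decompose it into its two-connected components, and sum with signs over the many collections $(V_1,\ldots,V_n)$ producing the same $g$, now weighted by the $n$-dependent factor --- this is exactly the bookkeeping the paper sketches, and it is where one sees that the usual cancellation down to a single two-connected component fails here, leaving the extra combinatorial factor appearing in \eqref{Ainf}. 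Your proposal reaches the right statement but treats this identification as immediate rather than as the outcome of that graph-by-graph reorganization.
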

	
	\begin{proof}
		As in \eqref{est} this is a direct consequence of the analysis in Section 6 of \cite{pulvirenti2012cluster}. 
		The main idea is the following: 
		first, by a power counting of the order of the volumes $|\Lambda|$ we see that order one
		contributions are those that satisfy \eqref{p1.3} and \eqref{p1.4}.
		Second, focusing on only these terms (with an error of order $\frac{1}{|\Lambda|}$),
		given $g\in\mathcal C_{n+1}$ we can decompose it into its two-connected components $b_1,\ldots,b_m$. 
		Different choices of $V$'s can result to the same underlying graph $g$, contributing with different
		signs depending on the number of elements $V$ and different combinatorial coefficients.
		Overall, in \cite{pulvirenti2012cluster} it is proved that everything cancels and we are only left
		with one two-connected component.
		However, this is not so in our case, since on top of the combinatorial coefficients we also have the last (extra) factor in \eqref{Ainf}.
	\end{proof}
	
	\subsection{Interaction with one big}\label{s_two}
	We treat the second term of \eqref{ClusterEff}, i.e., the interaction between one big sphere and many small (in the presence of other big).
	The following structure will be relevant:
	\begin{definition}\label{Def2}
		A vertex is called {\textit{articulation vertex}} if upon its removal the component of which it is part separates into two or more connected pieces in such a way that at least one piece contains no white vertices. We denote by $\mathcal{B}^{AF}_{n,n+k}$ the subset of $\mathcal{G}_{n,n+k}$ free of articulation vertices.	
	\end{definition}
	It is worthwhile noticing the difference between cut-point and articulation free: in the case
	of no presence of white vertices they coincide.
	
	We first observe that under periodic boundary conditions we have that:
	\begin{equation}
		\int_{\Lambda}\exp\bigg\{\int_{\mathbb Y_{\Lambda}}\zeta(p,Y)\nu_{\rho_r,\Lambda}(dY)\bigg\}\frac{\d p}{|\Lambda|}\;=\;\exp\bigg\{\int_{\mathbb Y_{\Lambda}}\zeta(0,Y)\nu_{\rho_r,\Lambda}(dY)\bigg\}.
	\end{equation}
	Then, we can write this term in more detail:
	
	\begin{lemma}\label{second_cancellation1}
		Under assumption \eqref{cond1} we have:
		\begin{equation}\label{canc_one}
			\left|\int_{\mathbb{Y}_{\Lambda}}\zeta(0,Y)\nu_{\Lambda,N_r,\mathbf p}(dY)
			- \frac{N_r}{|\Lambda|}\sum_{s\geq 1}\frac{1}{s+1}P_{|\Lambda|,N_r}(s)
			B^{*,(1)}_{\Lambda}(s)\right|\leq C\frac{1}{|\Lambda|},
		\end{equation}
		where
		\begin{equation}\begin{split}\label{B1}
				B^{*,(1)}_{\Lambda}(s)  = &
				\left(\frac{|\Lambda|}{|\tilde\Lambda(\mathbf p)|}\right)^{s+1}\Big(
				B^{\emptyset}_{\Lambda}(r)\int_{\Lambda}f^{ls}(0,q)\d q +\\
				&+\frac{1}{s!}
				\sum_{b\in\mathcal B_{1,s+1}}\int_{\Lambda^{s+1}}\prod_{e\in E(b)}(\mathbf 1_{e\cap \{0\}=\emptyset}f^{ss}_{e}+\mathbf 1_{e\cap \{0\}\neq \emptyset}f^{ls}_{e})
				\prod_{i=1}^{s+1}\d q_i\Big).\end{split}
		\end{equation}
		We denote by $\mathcal B_{1,s+1}$ the two-species two-connected graphs with
		one big and $s+1$ small.
		In the thermodynamic limit $\Lambda\to\mathbb R^d$, $N_r=\lfloor \rho_r |\Lambda|\rfloor$
		and $N_R=\lfloor \rho_R |\Lambda|\rfloor$ we obtain the following upper bound:
		\begin{equation}\label{TLONE}
			\sum_{s\geq 1}\frac{1}{s+1}\rho_r^{s+1}B^{(1)}_{\infty}(s),
		\end{equation}
		where
		\begin{equation}\begin{split}\label{B1inf}
				B^{(1)}_{\infty}(s) = &
				\frac{1}{(1-\rho_R |B_{R+r}|)^{s+1}}
				\Big(
				\beta_s |B_{R+r}|+\\
				&
				+\frac{1}{s!}
				\sum_{b\in\mathcal B_{1,s+1}}\int_{(\mathbb R^d)^{s+1}}\prod_{e\in E(b)}(\mathbf 1_{e\cap \{0\}=\emptyset}f^{ss}_{e}+\mathbf 1_{e\cap \{0\}\neq \emptyset}f^{ls}_{e})
				\prod_{i=1}^{s+1}\d q_i\Big).
			\end{split}
		\end{equation}
		Similarly for the lower bound.
	\end{lemma}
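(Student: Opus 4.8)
The plan is to compute the left-hand side of \eqref{canc_one} directly from the definitions and then reorganize it by the volume power-counting and graph-cancellation scheme of Section~6 of \cite{pulvirenti2012cluster}, enriched by the two-species combinatorics of \cite{kuna2018convergence} to account for the pinned big sphere. First I would substitute \eqref{DefYMeas} and the definition \eqref{zeta} of $\zeta$ (taking the center of the big sphere at the origin is legitimate by periodic boundary conditions) into $\int_{\mathbb Y_\Lambda}\zeta(0,Y)\,\d\nu_{\Lambda,N_r,\mathbf p}(Y)$, and expand $\zeta(0,Y)=\prod_{i,k}\bigl(1+f^{ls}(0,q_k)\bigr)-1=\sum_{\emptyset\neq A}\prod_{k\in A}f^{ls}(0,q_k)$, where $A$ runs over the nonempty subsets of the small particles occurring in $Y$. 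This turns the integral into a sum over polymer collections $(V_1,\dots,V_n)\in\mathcal V_r^{\,n}$ carrying the $\varphi^T$ weight, over the choice of $A$, and over the labels of the $s+1$ distinct small particles used; extracting the label count produces exactly the prefactor $\frac{N_r}{|\Lambda|}\sum_{s\ge1}\frac1{s+1}P_{|\Lambda|,N_r}(s)$ as in \eqref{only_small_notation}, so that what remains is a candidate expression for $B^{*,(1)}_\Lambda(s)$ --- structurally the $|J|=1$ specialization of \eqref{Bp}.

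Next I would run the power-counting argument. Each $\d\xi$ in \eqref{MeasureXi} carries $|\Lambda|^{-|V|}$, each $f^{ss}$ or $f^{ls}$ integrated over a free variable produces an excluded-volume factor of order $|\Lambda|^0$, and the normalization $(|\Lambda|/|\tilde\Lambda(\mathbf p)|)^{\sum|V_i|}$ is bounded by the constant in \eqref{ratio}; hence the only terms of order $|\Lambda|^0$ are those for which all $V_i$ are distinct and the polymer-intersection graph is a tree, i.e.\ conditions \eqref{p1.3}--\eqref{p1.4} hold, all others being $O(1/|\Lambda|)$. On the surviving terms I would apply the cancellation identity of \cite{pulvirenti2012cluster}: summing $\varphi^T$ with the appropriate signs and multinomial coefficients over all such tree-like polymer decompositions of a fixed underlying graph on $\{0\}\cup\{s+1\text{ small}\}$ collapses the contribution onto a single irreducible component, and the two-species refinement of \cite{kuna2018convergence} separates this into two cases according to how the pinned big vertex sits in that component: either it is joined to a purely small-particle irreducible cluster through a single bridge $f^{ls}(0,q)$, producing the $B^\emptyset_\Lambda$ term of \eqref{B1} (recall \eqref{B_empty}), or it lies inside a genuinely two-species two-connected graph, producing the $\sum_{b\in\mathcal B_{1,s+1}}$ term; the articulation-free notion of Definition~\ref{Def2} is precisely what draws this dividing line. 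The generating-function bookkeeping of the $\sum|V_i|$ powers in the collapse reduces the normalization exponent to $s+1$, up to an $O(1/|\Lambda|)$ error that is absorbed into $C$; this proves \eqref{canc_one}.

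Finally I would take the thermodynamic limit term by term: by \eqref{ratio}, $(|\Lambda|/|\tilde\Lambda(\mathbf p)|)^{s+1}\to(1-\rho_R|B_{R+r}|)^{-(s+1)}$ for the upper bound (and $(1-\rho_R|B_R|)^{-(s+1)}$ for the lower); by \eqref{est}, $B^\emptyset_\Lambda(s)\to\beta_s$ as in \eqref{betan}; the bridge integral $\int_\Lambda f^{ls}(0,q)\,\d q$ and the $\mathcal B_{1,s+1}$ integral over $\Lambda^{s+1}$ converge to their counterparts over $\mathbb R^d$; and $P_{|\Lambda|,N_r}(s)\to\rho_r^{\,s}$, $N_r/|\Lambda|\to\rho_r$. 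Passing the limit inside $\sum_s$ is justified by dominated convergence, the domination being supplied by the geometric bound coming from the convergence condition \eqref{cond1}, exactly as for \eqref{only_small_notation}. This yields \eqref{TLONE}--\eqref{B1inf}, and the lower bound follows identically after replacing $|B_{R+r}|$ by $|B_R|$ in the ratio \eqref{ratio}.

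I expect the main obstacle to be the third step: carrying the cancellation mechanism of \cite{pulvirenti2012cluster} through the extra $(|\Lambda|/|\tilde\Lambda(\mathbf p)|)^{\sum|V_i|}$ weight and through the presence of the distinguished big vertex, i.e.\ verifying that the two-species graph combinatorics of \cite{kuna2018convergence} genuinely collapses the tree-like polymer decompositions onto the graphs in $\mathcal B_{1,s+1}$ together with the single-bridge-to-$B^\emptyset_\Lambda$ term, and that the resulting excluded-volume normalization matches \eqref{B1} to within $O(1/|\Lambda|)$. Everything else --- the power counting, the passage to infinite volume, and the dominated-convergence step --- is routine given the convergence already established in Proposition~\ref{ClusterEffective2}.
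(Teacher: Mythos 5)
Your proposal is correct and follows essentially the same route as the paper: expand $\zeta(0,Y)$ into nonempty collections of $f^{ls}(0,\cdot)$ ``white-vertex'' factors, invoke the power counting and cancellation machinery of \cite{pulvirenti2012cluster} together with the two-species formulation of \cite{kuna2018convergence} (its formula (4.3) and Lemma 4.1) to collapse the surviving tree-like polymer terms onto the single-bridge term $\beta_\Lambda(s)\int f^{ls}(0,q)\,\d q$ plus the articulation-free, i.e.\ $\mathcal B_{1,s+1}$, contributions, and pass to the limit by uniformity of the estimates and dominated convergence. The only point you gloss over --- and which the paper handles via the distinction between white vertices counted with and without multiplicity (its $l$ versus $m$, cf.\ Remark~\ref{amb}), with the off-diagonal terms absorbed into the $O(1/|\Lambda|)$ error --- is that the expansion of $\zeta(0,Y)$ runs over subsets of the pairs (polymer, label) rather than of the particle labels themselves, but this bookkeeping is exactly what the cited results take care of.
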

	
	\begin{proof}
		By developing the products in $\zeta(p,Y)$ we obtain:
		\begin{equation}\begin{split}\label{inter0}
				\int_{\mathbb{Y}_{\Lambda}}\zeta(0,Y)\nu_{\Lambda,N_r,\mathbf p}(dY)
				& = 
				\sum_{n=1}^{\infty}\frac{1}{n!}\sum_{(V_1,\ldots,V_n)\in\mathcal{V}_{r}^n}\varphi^T(V_1,...,V_n)\left(\frac{|\Lambda|}{|\tilde\Lambda(\mathbf p)|}\right)^{\sum_{i=1}^n |V_i|} \times\\
				&\hspace{-.7cm}\int\limits_{\Lambda^{\card{V_1}}}\cdots \int\limits_{\Lambda^{\card{V_n}}}\sum_{\substack{A_1,\ldots, A_n \\ A_i\subset V_i, \cup_{i=1}^n A_i\neq\emptyset}}\prod_{i=1}^n\prod_{k\in A_i} f^{ls}(0,q_k^i)\prod_{i=1}^n \d\xi\big((q_k^i)_{k\in V_i}\big),
			\end{split}
		\end{equation}
		where we denote by $A_i\subset V_i$ the labels of the particles that interact with the big one within $V_i$.
		As it is common in the literature we call them ``white'' vertices/labels. 
		Note that there should be at least one over all $i=1,\ldots,n$.
		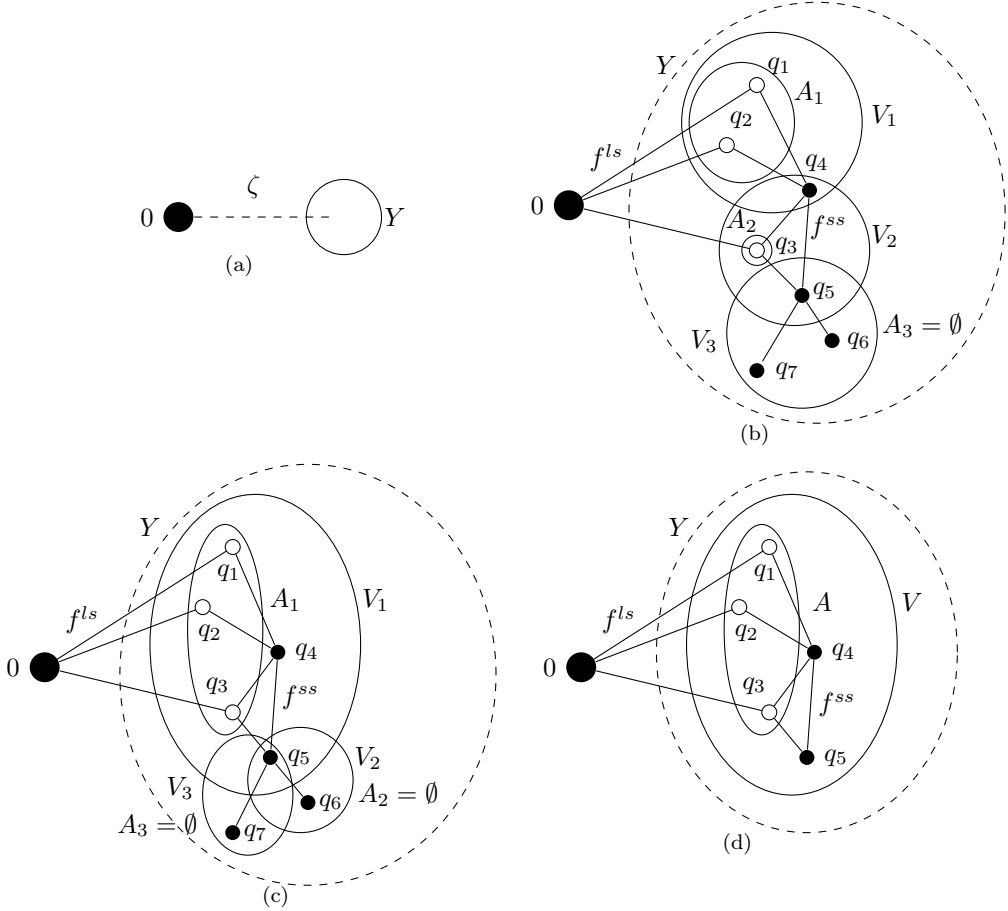
\begin{figure}[h]
			\begin{center}
				\begin{tabular}{rcl}
					&&\subfigure[]
					{\label{Fig:2.1}
						\;\;\;\;\;\;\;\;\begin{tikzpicture}
							\draw [dashed](0,0)node[bigvertex,label=left:{$0$}]{}--(2,0);
							\draw(2.2,0)circle(0.5){}node[label=right:{\;\;\;$Y$}]{};
							\draw(1,0)node[label=above:{$\zeta$}]{};
						\end{tikzpicture}
					} \;\;\;\;\;\;\;\;\;\;\;\subfigure[]
					{\label{Fig:2.2}
						\tikz[baseline=-2pt]{
							\draw(0,0)node[bigvertex,label=left:{$0$}](v0){}--(2.5,1.6)node[nuEP](V1){};
							\draw(2.8,1.5)node[label={$q_1$}]{};
							\draw(v0)--(2.1,0.8)node[nuEP](V2){};
							\draw (2.3,1.5)node[label=below:{$q_2$}]{};
							\draw(v0)--(2.5,-0.6)node[nuEP](V3){};
							\draw(2.9,-0.9)node[label={$q_3$}]{};
							\draw(0.5,0.2)node[label=above:{$f^{ls}$}]{};
							\draw(3.2,0.2)node[vertex](V4){};
							\draw(3.3,0.2)node[label={$q_4$}]{};
							\draw(3.1,-1.2)node[vertex](V5){};
							\draw(3.4,-1.5)node[label={$q_5$}]{};
							\draw(3.5,-1.8)node[vertex](V6){};
							\draw(3.45,-1.8)node[label=right:{$q_6$}]{};
							\draw(2.5,-2.2)node[vertex](V7){};
							\draw(2.5,-2.2)node[label=right:{$q_7$}](V7){};
							\draw(2.3,1.1)ellipse[x radius=0.7cm,y radius=0.8cm];
							\draw(2.7,1.1)circle(1.2);
							\draw(3.2,1.1)node[label={$A_1$}]{};
							\draw(4.2,0.8)node[label={$V_1$}]{};
							\draw(2.5,-0.6)circle(0.2);
							\draw(3,-0.6)circle(1);
							\draw(4.2,-0.8)node[label={$V_2$}]{};
							\draw(2.3,-0.6)node[label={$A_2$}]{};
							\draw(3.1,-1.7)circle(1);
							\draw(1.8,-2.2)node[label={$V_3$}]{};
							\draw(4.7,-2)node[label={$A_3=\emptyset$}]{};
							\draw(3.4,-0.7)node[label={$\;f^{ss}$}]{};
							\draw(V1)--(V4);
							\draw(V2)--(V4);
							\draw(V3)--(V4);
							\draw(V3)--(V5);
							\draw(V4)--(V5);
							\draw(V5)--(V6);
							\draw(V5)--(V7);
							\draw[dashed](3.3,-0.1)ellipse[x radius=2.5cm,y radius=2.8cm];
							\draw(1.3,1.5)node[label={$Y$}]{};
					}}
					\\
					&&\subfigure[]
					{\label{Fig:2.3}
						\hspace{-1cm}\tikz[baseline=-2pt]{
							\draw(0,0)node[bigvertex,label=left:{$0$}](v0){}--(2.5,1.6)node[nuEP](V1){};
							\draw(2.45,0.9)node[label={$q_1$}]{};
							\draw(v0)--(2.1,0.8)node[nuEP,label=below:{\;\;$q_2$}](V2){};
							\draw(v0)--(2.5,-0.6)node[nuEP](V3){};
							\draw(2.3,-0.6)node[label={$q_3$}]{};
							\draw(0.5,0.2)node[label=above:{$f^{ls}$}]{};
							\draw(2.4,0.5)ellipse[x radius=0.5cm,y radius=1.4cm];
							\draw(3.1,0.2)node[vertex,label=right:{$q_4$}](V4){};
							\draw(3,-1.2)node[vertex,label=right:{$q_5$}](V5){};
							\draw(3.5,-1.8)node[vertex](V6){};
							\draw(3.4,-1.8)node[label=right:{$q_6$}]{};
							\draw(2.5,-2.2)node[vertex](V7){};
							\draw(2.4,-2.2)node[label=right:{$q_7$}]{};
							\draw(3.2,0.5)node[label={$A_1$}]{};
							\draw(2.8,0.3)ellipse[x radius=1.4cm,y radius=2cm];
							\draw(4.4,0.5)node[label={$V_1$}]{};
							\draw(3.4,-0.8)node[label={$f^{ss}$}]{};
							\draw(3.4,-1.5)ellipse[x radius=0.7cm,y radius=0.7cm];
							\draw(4.3,-1.6)node[label={$V_2$}]{};
							\draw(4.7,-2.1)node[label={$A_2=\emptyset$}]{};
							\draw(2.7,-1.7)ellipse[x radius=0.6cm,y radius=0.8cm];
							\draw(1.8,-2)node[label={$V_3$}]{};
							\draw(1.5,-2.5)node[label={$A_3=\emptyset$}]{};
							\draw(V1)--(V4);
							\draw(V2)--(V4);
							\draw(V3)--(V4);
							\draw(V3)--(V5);
							\draw(V4)--(V5);
							\draw(V5)--(V6);
							\draw(V5)--(V7);
							\draw[dashed](3.5,-0.1)ellipse[x radius=2.5cm,y radius=2.8cm];
							\draw(1.4,1.5)node[label={$Y$}]{};
					}}\;\;\;\;\subfigure[]
					{\label{Fig:2.4}
						\tikz[baseline=-2pt]{
							\draw(0,0)node[bigvertex,label=left:{$0$}](v0){}--(2.5,1.6)node[nuEP](V1){};
							\draw(2.45,0.9)node[label={$q_1$}]{};
							\draw(v0)--(2.1,0.8)node[nuEP,label=below:{\;\;$q_2$}](V2){};
							\draw(v0)--(2.5,-0.6)node[nuEP](V3){};
							\draw(2.3,-0.6)node[label={$q_3$}]{};
							\draw(0.5,0.2)node[label=above:{$f^{ls}$}]{};
							\draw(2.4,0.5)ellipse[x radius=0.5cm,y radius=1.4cm];
							\draw(3.1,0.2)node[vertex,label=right:{$q_4$}](V4){};
							\draw(3,-1.2)node[vertex,label=right:{$q_5$}](V5){};
							\draw(3.2,0.5)node[label={$A$}]{};
							\draw(2.8,0.3)ellipse[x radius=1.4cm,y radius=2cm];
							\draw(4.4,0.5)node[label={$V$}]{};
							\draw(V1)--(V4);
							\draw(V2)--(V4);
							\draw(V3)--(V4);
							\draw(V3)--(V5);
							\draw(V4)--(V5);
							\draw(3.4,-0.9)node[label={$f^{ss}$}]{};
							\draw[dashed](3,0.2)ellipse[x radius=2cm,y radius=2.4cm];
							\draw(1.3,1.5)node[label={$Y$}]{};
					}}
					%%%%%%%%%%%
				\end{tabular}
			\end{center}\caption{The cancellations for one big sphere interacting with a cloud.\newline
				\ref{Fig:2.1} The left hand side of \eqref{inter0}. \newline
				\ref{Fig:2.2} An element of \eqref{inter1} after expliciting $\zeta(0,Y)$ in terms of $f^{ls}$ links. \newline
				\ref{Fig:2.3}  Thermodynamically dominant terms: all white vertices should be in the same polymer $V$.\newline
				\ref{Fig:2.4} Articulation free graphs obtained after the cancellations.}
			\label{fig:2}
		\end{figure}	
		
		By putting together the sums over $V$ and $A$ we obtain:
		\begin{equation}\begin{split}\label{inter1}
				&\sum_{n=1}^{\infty}\frac{1}{n!}\sum_{\substack{((V_1, A_1),\ldots,(V_n,A_n))\\ \cup_{i=1}^n A_i\neq\emptyset}}\varphi^T(V_1,...,V_n)
				\left(\frac{|\Lambda|}{|\tilde\Lambda(\mathbf p)|}\right)^{\sum_{i=1}^n |V_i|}\times\\
				&\hspace{1.9cm}\int\limits_{\Lambda^{\card{V_1}}}\cdots \int\limits_{\Lambda^{\card{V_n}}}\prod_{i=1}^n\prod_{k\in A_i} f^{ls}(0,q_k^i) \prod_{i=1}^n \d\xi\big((q_k^i)_{k\in V_i}\big).
			\end{split}
		\end{equation}
		We observe that this is in the form of the abstract polymer model used in \cite{kuna2018convergence}, formula (3.4),
		with the difference that instead of $\prod_{k\in A}\phi(q_k)$ we have $\prod_{k\in A} f^{ls}(0,q_k)$
		in the activity function. 
		We recall that in Remark~\ref{r:APM} we commented on this similarity between the two partition functions.
		However, in this paper we had to first
		organize the contributions
		in terms of the interactions with the big spheres as in \eqref{key_formula}.	
		Nevertheless, once this is settled, then we can still ``discover'' the same structure as in
		\cite{kuna2018convergence}, formula (3.4), but for a different choice of test function
		as described above.
		In this lemma we treat the interactions 
		with only one big, while the case with several big
		will be investigated later in Lemma~\ref{second_cancellation2}.
		By choosing the labels of the small spheres we have:
		\begin{eqnarray}\label{B}
			\int_{\mathbb{Y}_{\Lambda}}\zeta(0,Y)\nu_{\Lambda,N_r,\mathbf p}(dY)
			&=&
			\sum_{l\geq 1}\sum_{m=1}^l\sum_{k=0}^{N_r-m}
			P_{|\Lambda|,N_r}(m+k)B^{(1)}_{\Lambda}(l,m,k),
		\end{eqnarray}
		where $m$ is the number of white vertices and $l$ the number of
		white vertices counted with their multiplicity, that is the number of times a particular vertex
		appears in different polymers $V$.
		The explicit formula of $B^{(1)}_{\Lambda}(l,m,k)$ is not relevant here, but it can be understood by comparing to \eqref{inter1}.
		Note also that for $l=1$, then necessarily $k\geq 1$.
		Furthermore, from formula (4.3) in \cite{kuna2018convergence} we have:
		\begin{equation}
			B^{(1)}_{\Lambda}(l,m,k)=\bar B^{(1)}_{\Lambda}(l,k)\delta_{l,m}+R_{\Lambda}(l,m,k),
		\end{equation}
		where, using Lemma~4.1 from \cite{kuna2018convergence}, we have:
		\begin{equation}\label{further21}
				\bar B^{(1)}_{\Lambda}(l,k) = \frac{|\Lambda|^{l+k}}{l!k!}\int_{\Lambda^{l+k}}
				\prod_{r=1}^l
				f^{ls}(0,q_r)
				\left(\frac{|\Lambda|}{|\tilde\Lambda(\mathbf p)|}\right)^{k+l}
				\sum_{g\in\mathcal B^{AF}_{l,k+l}}
				\prod_{\{i,j\}\in E(g)}f^{ss}(q_i,q_j)
				\prod_{i=1}^{l+k}\frac{\d q_i}{|\Lambda|}.
		\end{equation}
		and
		\begin{equation}\label{e1}
			|R_{\Lambda}(l,m,k)|\leq C\frac{1}{|\Lambda|},
		\end{equation}
		uniformly on its parameters.
		We do not repeat here the proof, but for the reader's convenience we present in Figure~\ref{fig:2} the main idea of how to obtain \eqref{further21}.
		In Figure~\ref{Fig:2.2} we have an element of the right hand side of \eqref{inter1}.
		As explained in detail in Section~4 in \cite{kuna2018convergence}, by power counting of the volumes $|\Lambda|$ in the various contributions we
		recognize that the leading order term $\bar B^{(1)}_{\Lambda}(l,k)\delta_{l,m}$
		should contain only those polymers in which all white vertices are in the same polymer
		as depicted in Figure~\ref{Fig:2.3}. In particular, the white vertices can not be cut-points
		(because otherwise they would have been part of two different polymers). 
		We decompose the overall graph into its two-connected components (or articulation-free
		for the component that contains the white vertices).
		By applying Lemma 4.1 in \cite{kuna2018convergence}
		the extra two-connected components with only black vertices 
		cancel obtaining the articulation-free
		graphs as in Figure~\ref{Fig:2.4}.
		
		In particular, for the case $l=1$ we observe that $\mathcal B^{AF}_{1,k+1}=\mathcal B_{k+1}$,
		hence we obtain:
		\begin{equation}\begin{split}\label{cls}
				\bar B^{(1)}_{\Lambda}(1,k)
				&= 
				\left(\frac{|\Lambda|}{|\tilde\Lambda(\mathbf p)|}\right)^{k+1}
				\frac{1}{k!}\times\\
				&\hspace{.5cm}\int_{\Lambda} f^{ls}(0,q_1)
				\left[\frac{1}{k+1}
				\int_{\Lambda^k}\sum_{g\in \mathcal B_{k+1}} \prod_{\{i,j\}\in E(g)} f^{ss}(q_i,q_j)\prod_{i=2}^{k+1} \d q_i
				\right]
				\d q_1\\
				& =  
				\left(\frac{|\Lambda|}{|\tilde\Lambda(\mathbf p)|}\right)^{k+1}
				\beta_{\Lambda}(k)
				\int_{\Lambda}f^{ls}(0,q)\d q.
			\end{split}
		\end{equation}
		On the other hand, when $l\geq 2$ and $k\geq 0$, we obtain:
		\begin{equation}\begin{split}\label{inter4}
				\bar B^{(1)}_{\Lambda}(l,k)
				&= 
				\left(\frac{|\Lambda|}{|\tilde\Lambda(\mathbf p)|}\right)^{l+k}\frac{1}{l! k!}\sum_{g\in \mathcal B^{AF}_{l,k+l}}\int_{\Lambda^{l+k}}\prod_{i=1}^l f^{ls}(0,q_i) \prod_{\{i,j\}\in E(g)} f^{ss}(q_i,q_j)\prod_{i=1}^{l+k}\d q_i\\
				&=
				\left(\frac{|\Lambda|}{|\tilde\Lambda(\mathbf p)|}\right)^{l+k}
				\frac{1}{(l+k)!}\sum_{b\in\mathcal B_{1,l+k}}\int_{\Lambda^{l+k}}\prod_{e\in E(b)}(\mathbf 1_{e\cap \{0\}=\emptyset}f^{ss}_{e}+\mathbf 1_{e\cap \{0\}\neq \emptyset}f^{ls}_{e})
				\prod_{i=1}^{l+k}\d q_i,
			\end{split}
		\end{equation}
		where we changed the combinatorial factor while going from $\mathcal B^{AF}_{l,k+l}$ to $\mathcal B_{1,l+k}$.
		Putting together these two contributions and re-arranging terms we define 
		\begin{equation}\label{Bstar1}
			B^{*,(1)}_{\Lambda}(s):=\bar B^{(1)}_{\Lambda}(1,s)+\sum_{l\geq 2}\bar B^{(1)}_\Lambda(l,s-l)
		\end{equation}
		and obtain \eqref{B1}.
		For the thermodynamic limit we recall that the cluster expansion estimate is uniform in $\Lambda$, $N_r$ and $N_R$ and by the dominated convergence theorem we can pass to the limit.
	\end{proof}
	
	\begin{remark}\label{r:lim}
		The discrepancy between the two bounds is negligible
		in the further scaling $R>>r$, which is also compatible with the analysis of Remark~\ref{key_remark}. However, a complete analysis requires more work and we hope
		to address it elsewhere.
	\end{remark}

	\begin{remark}\label{r:canc}
		Note that for consistency one can compare the terms here with
		the ones of the two-species two-connected graphs.
		In fact, in such a comparison, the first contribution for one big sphere should cancel with some part (the first order in $\rho_R |B_{R+r}|$) of the first term in
		\eqref{ClusterEff}.
		We expect that this fact can be generalized to any value $N_R$ of big spheres,
		however already for $N_R=2$ it can be seen that it gets significantly more complex.
		However, by performing the cancellation, one essentially moves the position of the absolute value 
		from outside $\zeta(0,Y)$ into the various contributions, namely distinguishing
		when $0$ has one or more $f^{ls}$-links with vertices in $Y$.
		This results to a different radius of convergence since the present proof of the cluster expansion is valid only when $\zeta(0,Y)$ is considered as a block.
	\end{remark}

	\subsection{Interaction with many big}\label{s_three}

	Following \cite{pulvirenti2012cluster}, we investigate how
	the last term in \eqref{ClusterEff}
	can be re-written as a sum over a special class of graphs.
	
	First, like before in Section~\ref{s_one}, by a power counting of the volumes (but in the measures corresponding to the big spheres this time) one can easily see that the non-negligible terms in the thermodynamic limit $\Lambda\to\mathbb R^d$ should satisfy again \eqref{p1.3} and \eqref{p1.4}.
	Furthermore, using the bipartite leaf-constrained graph representation of the hypergraphs
	we introduce the class $\mathcal B^*_{n,k}$ with $n$ large spheres and $k$ clouds
	such that no large sphere is a cut-point. For notational purposes we use the asterisk $^*$ in order
	to distinguish to the general two-species two-connected graphs as in \eqref{B1}.
	Given such a graph $g\in\mathcal B^*_{n,k}$ and a cloud indexed by $l\in \{n+1,\ldots, n+k\}$, 
	we denote by $A_g(l)$ the adjacent vertices to this cloud and by
	\begin{equation}\label{newset}
		\mathcal H(g):=\{A_g(l)\subset \{1,\ldots, n\},\forall l\in \{n+1,\ldots, n+k\}\},
	\end{equation}
	the set of indices of the adjacent vertices (representing the large spheres) to the clouds, i.e., we recover the hypergraph from its bipartite leaf-constrained representation.
	Based on this, the following lemma holds.
	
	\begin{lemma}\label{LemmaUltimo} Under the assumptions of Proposition~\ref{ClusterEffective2} we have 
		\begin{equation}\label{EffectiveFreeEFV}
				\left|\frac{1}{|\Lambda|}\sum_{n\ge1}\frac{1}{n!}\sum_{(V_1,...,V_n)\in\mathcal{V}^n_R}\varphi^T(V_1,...,V_n)\prod_{i=1}^n w_{\Lambda}(V_i) -
				\frac{N_R}{|\Lambda|}\sum_{k\ge1}\frac{1}{k+1}P_{|\Lambda|,N_R}(k)B^*_{\Lambda}(k; N_r)\right|				
				\leq C\frac{1}{|\Lambda|},
		\end{equation}		
		where $P_{|\Lambda|,N_R}(n)$ is defined in \eqref{P} and
		\begin{equation}\begin{split}
				\label{withHG}
				B^*_{\Lambda}(k;N_r)  := & \frac{|\Lambda|^k}{k!}\sum_{m\geq 0}\frac{1}{m!}\sum_{g\in\mathcal{B}^*_{k+1,m}}
				\int_{\Lambda^{k+1}}
				\prod_{\substack{\{i,j\}\in E(g)\\i,j\in\{1,...,k+1\}}}f^{ll}(p_i,p_j)
				\times\\
				&\hspace{.8cm}\prod_{J\in \mathcal H(g)}C_{\Lambda}(\mathbf p_J;N_r)
				\prod_{j=k+2}^{k+m+1}\nu_{\Lambda,N_r,\mathbf p}(dY_j)
				\prod_{i=1}^{k+1}\frac{\d p_i}{|\Lambda|},
			\end{split}
		\end{equation}
		with
		\begin{equation}\begin{split}\label{C}
				C_{\Lambda}((p_j)_{j\in J};N_r) := & \int\limits_{\mathbb Y_{\Lambda}}\prod_{j\in J}\zeta(p_j,Y)\nu_{\Lambda,N_r,\mathbf p} (dY)\\
				= &
				\frac{N_r}{|\Lambda|}\sum_{k\ge1}P_{|\Lambda|,N_r}(k)B^{\mathbf{p}}_{\Lambda}(k,J)
			\end{split}
		\end{equation}
		and $B^{\mathbf{p}}_{\Lambda}(k,J)$ given in \eqref{Bp}.
	\end{lemma}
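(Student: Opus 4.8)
The plan is to start from the definition of the polymer activity $w_\Lambda(V)$ in \eqref{w} and the cluster expansion for the abstract polymer model of the large spheres, which is convergent under the assumptions of Proposition~\ref{ClusterEffective2}. The left-hand side of \eqref{EffectiveFreeEFV} is the logarithm (divided by $|\Lambda|$) of the partition function of that polymer model; by the Kotecký--Preiss / Poghosyan--Ueltschi machinery it equals $\frac{1}{|\Lambda|}\sum_{n\ge1}\frac{1}{n!}\sum_{(V_1,\ldots,V_n)}\varphi^T(V_1,\ldots,V_n)\prod_i w_\Lambda(V_i)$, with the sum absolutely convergent uniformly in $\Lambda, N_r, N_R$. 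The strategy, following Section~6 of \cite{pulvirenti2012cluster}, is a power-counting argument in the volume $|\Lambda|$ \emph{for the measures attached to the big spheres}: among all collections $(V_1,\ldots,V_n)$ of polymers in $\mathcal V_R$ contributing to the cluster sum, only those satisfying \eqref{p1.3} (all $V_i$ distinct) and \eqref{p1.4} (the union is a tree-like gluing, $|\cup V_i| = \sum(|V_i|-1)+1$) survive in the thermodynamic limit; all other terms are $O(1/|\Lambda|)$. This replaces the sum over polymer collections by a sum over connected graphs $g$ on the set of chosen big spheres, decomposed into two-connected components.

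The next step is to perform the cancellation of the superfluous two-connected components. Having unfolded each $w_\Lambda(V_i)$ via \eqref{w} into the graphs $\varphi^T_*$ of Definition~\ref{DefHyperGraphs}, one obtains, after the power-counting reduction, connected bipartite leaf-constrained graphs on $k+1$ large spheres and $m$ clouds (for each fixed total number $k+1$ of distinct large vertices), with the clouds carrying the measure $\nu_{\Lambda,N_r,\mathbf p}$ and each cloud-vertex adjacent to at least two large vertices. As in the treatment of $B^{*,(1)}_\Lambda$ in Lemma~\ref{second_cancellation1} and the analysis of \cite{kuna2018convergence}, a large sphere that is a cut-point of the underlying graph would force its incident polymers to be distinct $V$'s, and the sum over the ``extra'' two-connected components hanging off such a cut-point telescopes to zero by Lemma~4.1-type identities; hence only graphs $g$ in which \emph{no large sphere is a cut-point}, i.e. $g\in\mathcal B^*_{k+1,m}$, survive. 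The clouds adjacent to a fixed subset $J$ of large spheres get resummed into the factor $C_\Lambda(\mathbf p_J;N_r)=\int_{\mathbb Y_\Lambda}\prod_{j\in J}\zeta(p_j,Y)\,\nu_{\Lambda,N_r,\mathbf p}(dY)$ of \eqref{C}, and the product is taken over the hyperedge set $\mathcal H(g)$ of \eqref{newset}. Finally, choosing the labels of the $k+1$ large spheres among $N_R$ produces the combinatorial prefactor $\frac{N_R}{|\Lambda|}P_{|\Lambda|,N_R}(k)\frac{1}{k+1}$ and the remaining integral is exactly $B^*_\Lambda(k;N_r)$ of \eqref{withHG}; all errors accumulated along the way are $O(1/|\Lambda|)$ with a constant $C_k$ controlled by the convergence estimates \eqref{res1}, \eqref{res2} of Proposition~\ref{ClusterEffective2}, and summing these errors over $k$ (against the geometric bound from the convergence condition) gives the single constant $C$ in \eqref{EffectiveFreeEFV}.

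The main obstacle is the cancellation/reorganization step for the extra two-connected components in the presence of \emph{two} types of leaves (big--big leaves carrying $f^{ll}$ and big--cloud leaves carrying $\zeta$): one must verify that the telescoping identities of \cite{pulvirenti2012cluster} and \cite{kuna2018convergence} still apply when a cut-point large sphere can have both kinds of pendant subgraphs, and that the bookkeeping with the multiplicities $l$ versus $m$ (a single large sphere appearing in several distinct polymers $V_i$, cf.\ Remark~\ref{amb}) is consistent with collapsing to the leaf-constrained representation $\mathcal B^*_{k+1,m}$. A secondary technical point is ensuring that the interchange of the limit $\Lambda\to\mathbb R^d$ with the infinite sums over $n$, $k$ and $m$ is justified; this follows from the uniform-in-$\Lambda$ absolute convergence of the cluster expansion established in Proposition~\ref{ClusterEffective2} together with dominated convergence, exactly as in the proof of Lemma~\ref{second_cancellation1}.
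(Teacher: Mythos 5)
Your proposal follows essentially the same route as the paper: fix the labels of the $k+1$ large spheres to extract the prefactor $\frac{N_R}{|\Lambda|}\frac{1}{k+1}P_{|\Lambda|,N_R}(k)$, do the volume power counting to reduce to collections satisfying \eqref{p1.3}--\eqref{p1.4} up to $O(1/|\Lambda|)$, then use the combinatorial cancellations (à la \cite{pulvirenti2012cluster}) on the graphs hidden inside $w_\Lambda$ to retain only $g\in\mathcal B^*_{k+1,m}$ with no large-sphere cut-point, and finally regroup the cloud factors via $\mathcal H(g)$ into the $C_\Lambda(\mathbf p_J;N_r)$ of \eqref{C}. This matches the paper's argument (which likewise defers the cancellation details to the earlier references), so the plan is sound; only your remark about interchanging the limit $\Lambda\to\mathbb R^d$ is not actually needed for this finite-volume statement.
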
	
	
	\begin{proof}
		This is another application of the cancellations occurring in Section 5 of \cite{pulvirenti2012cluster}. We have:
		\begin{equation}\begin{split}
				&\frac{1}{|\Lambda|}\sum_{n\ge1}\frac{1}{n!}\sum_{(V_1,...,V_n)\in\mathcal{V}^n_R}\varphi^T(V_1,...,V_n)\prod_{i=1}^n w_{\Lambda}(V_i)\\
				&\hspace{.3cm}
				=\frac{N_R}{|\Lambda|}\sum_{k\ge1}\frac{1}{k+1}P_{|\Lambda|,N_R}(k)
				\frac{|\Lambda|^k}{k!}
				\sum_{n\ge1}\frac{1}{n!}\sum_{\substack{(V_1,...,V_n)\in\mathcal{V}^n_{R}\\ \bigcup_{i=1}^{n}V_i=[k+1]}}\varphi^T(V_1,...,V_n)
				\prod_{i=1}^n w_{\Lambda}(V_i).
		\end{split}\end{equation}
		Again, by a power counting of the orders of the volumes $|\Lambda|$ the
		leading order terms should satisfy \eqref{p1.3} and \eqref{p1.4}. All other terms
		are of order $\frac{1}{|\Lambda|}$ or lower.
		Then,
		by looking in the structure inside $w_{\Lambda}$ as given in \eqref{w} and \eqref{phistar} (see also Figure~\ref{fig:1}), we organize the sum over
		graphs $g\in\mathcal C^*_{k+1,m}$ for $m,k\in\mathbb N$. We observe that several choices of $V_1,\ldots,V_n$ and $n$ with $k+1=|\cup_{i=1}^n V_i|$ and \eqref{p1.3}-\eqref{p1.4}, can produce the same weight of $g$. The combinatorial coefficients make all these contributions cancel with
		each other with the remaining ones being the
		two-connected components. Thus, we obtain:
		\begin{equation}\begin{split}\label{firstB*}
				&\frac{|\Lambda|^k}{k!}
				\sum_{n\ge1}\frac{1}{n!}\sum_{\substack{(V_1,...,V_n)\in\mathcal{V}^n_{R}\\ \bigcup_{i=1}^{n}V_i=[k+1]}}\varphi^T(V_1,...,V_n)
				\prod_{i=1}^n w_{\Lambda}(V_i)=\frac{|\Lambda|^k}{k!}\times\\
				&\hspace{.7cm}\sum_{m\geq 0}\frac{1}{m!}\sum_{g\in\mathcal{B}^*_{k+1,m}}
				\int_{\Lambda^{k+1}}
				\prod_{\substack{\{i,j\}\in E(g)\\i,j\in\{1,...,k+1\}}}f^{ll}(p_i,p_j)
				\Bigg[
				\int_{\mathbb{Y}^m}\prod_{\substack{\{i,j\}\in E(g)\\i\in\{1,...,k+1\}\\j\in\{k+2,...,k+m+1\}}}\zeta(p_i,Y_j)\times\\ \\
				&\hspace{6cm}
				\times\prod_{j=k+2}^{k+m+1}\nu_{\Lambda,N_r,\mathbf p}(dY_j)\Bigg]
				\prod_{i=1}^{k+1}\frac{\d p_i}{|\Lambda|}
				+O\left(\frac{1}{|\Lambda|}\right).
		\end{split}\end{equation}	
		Then, we re-organize the indices using \eqref{newset} which gives \eqref{withHG}.
	\end{proof}
	
	Now we look at possible cancellations within the clouds interacting with a given number of big spheres. This is similar to Lemma~\ref{second_cancellation1}. 
	Given a configuration $p_j$, $j\in J\subset [N_R]$, of big spheres interacting with the same cloud 
	$Y\in\mathbb Y_{\Lambda}$ of small spheres, we define:
	\begin{equation}\label{Clim}
		C(\mathbf p_J;\rho_r) :=  \lim_{\substack{N_r\to\infty \\ \Lambda\to\mathbb R^d}} C_{\Lambda}((p_j)_{j\in J};\rho_r).
	\end{equation}
	We have:		
	\begin{lemma}\label{second_cancellation2}
		For $J\subset [N_R]$ and $C_{\Lambda}$ given in \eqref{C} we have:
		\begin{equation}\label{moreJ}
			\left|C_{\Lambda}((p_j)_{j\in J};N_r)-
			\sum_{k\geq 1}\frac{1}{k+1}\left(\frac{N_r}{|\Lambda|}\right)^{k+1}
			\sum_{l=1}^{k+1}
			\bar B_{\Lambda, J}(l,k+1-l)
			\right|
			\leq C\frac{1}{|\Lambda|},
		\end{equation}
		where
		\begin{equation}\begin{split}\label{further2}
				\bar B_{\Lambda, J}(l,k) = &\frac{1}{l!k!}\int_{\Lambda^{l+k}}
				\sum_{\substack{(A^{(j)})_{j\in J}, \\ A^{(j)}\subset \{1,\ldots,l\}, A^{(j)}\neq\emptyset \\ \cup_{j\in J} A^{(j)}=[l]}}
				\prod_{j\in J}\prod_{r\in A^{(j)}} f^{ls}(p_j,q_r)
				\times\\
				&\hspace{0.1cm}
				\left(\frac{|\Lambda|}{|\tilde\Lambda(\mathbf p)|}\right)^{k+l}
				\sum_{g\in\mathcal B^{AF}_{l,k+l}}
				\prod_{\{i,j\}\in E(g)}f^{ss}(q_i,q_j)
				\prod_{i=1}^{l+k} \d q_i.
			\end{split}
		\end{equation}
		In the limit $\Lambda\to\mathbb R^d$ and $N_r\to\infty$, we have:
		\begin{equation}\begin{split}\label{CTL}
				&C(\mathbf p_J;\rho_r) 
				=
				\sum_{l=1}^\infty
				\sum_{k=0}^{\infty}
				\frac{1}{k! l!}\left(\frac{\rho_r}{1-\rho_R |B_{R}|}\right)^{k+l}
				\times\\
				&\hspace{0.3cm}\int_{(\mathbb R^{d})^{k+l}}\sum_{\substack{(A^{(j)})_{j\in J}, \\ A^{(j)}\subset \{1,\ldots,l\}, A^{(j)}\neq\emptyset \\ |\cup_{j\in J} A^{(j)}|=l}}\prod_{j\in J}\prod_{r\in A^{(j)}} f^{ls}(p_j,q_r)
				\sum_{g\in\mathcal B^{AF}_{l,k+l}}
				\prod_{\{i,j\}\in E(g)}f^{ss}(q_i,q_j)
				\prod_{i=1}^{k+l} \d q_i.
			\end{split}
		\end{equation}
	\end{lemma}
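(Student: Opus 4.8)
The plan is to follow the proof of Lemma~\ref{second_cancellation1} almost verbatim, the only new ingredient being that the single ``white'' decoration $q\mapsto f^{ls}(0,q)$ is replaced by the $J$-indexed family $q\mapsto f^{ls}(p_j,q)$, $j\in J$. Starting from the second line of \eqref{C} and the definition \eqref{Bp} of $B^{\mathbf p}_\Lambda(k,J)$, I would first develop each factor $\zeta(p_j,Y)$ of $\prod_{j\in J}\zeta(p_j,Y)$ as in \eqref{zeta}--\eqref{comp_act}: for a cloud $Y\equiv((q^i_k)_{k\in V_i})_{i=1}^n$ this produces, for each $j\in J$, a sum over a nonempty set $A^{(j)}$ of small particles of $Y$ (counted with the polymer multiplicity of Remark~\ref{amb}) weighted by $\prod_{r\in A^{(j)}}f^{ls}(p_j,q_r)$. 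Interchanging sums, $C_\Lambda((p_j)_{j\in J};N_r)$ becomes a sum over $n$, over $(V_1,\dots,V_n)\in\mathcal V_r^n$ and over families $(A^{(j)})_{j\in J}$ with $A^{(j)}\neq\emptyset$, weighted by $\varphi^T(V_1,\dots,V_n)\,(|\Lambda|/|\tilde\Lambda(\mathbf p)|)^{\sum_i|V_i|}$, the $f^{ls}$-decorations, and $\prod_i\d\xi(\mathbf q_{V_i})$. This is exactly the abstract polymer expansion of \cite{kuna2018convergence}, formula (3.4), with $\prod_{k\in A}\phi(q_k)$ replaced by $\prod_{j\in J}\prod_{r\in A^{(j)}}f^{ls}(p_j,q_r)$, so that the convergence estimate of Proposition~\ref{ClusterEffective2} controls it uniformly in $\Lambda,N_r,N_R$ and $\mathbf p$.

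Second, I would perform the power counting in the volumes $|\Lambda|$ as in Section~6 of \cite{pulvirenti2012cluster} and Section~4 of \cite{kuna2018convergence}. As for Lemma~\ref{second_cancellation1}, the only terms of order $|\Lambda|^0$ are those in which the polymers $V_1,\dots,V_n$ are pairwise distinct and tree-like (i.e.\ \eqref{p1.3}--\eqref{p1.4} hold) and every white particle has multiplicity one; the remaining contributions go into the uniform $O(1/|\Lambda|)$ error. For the surviving terms one decomposes the $f^{ss}$-graph on the small particles into its two-connected (articulation-free) blocks. Since every $p_j$ carries at least one $f^{ls}$-link to a white particle and, by the same leading-order analysis, all white particles sit in a single block, each $p_j$ attaches only to that block; applying Lemma~4.1 of \cite{kuna2018convergence} the purely black blocks cancel exactly, leaving one articulation-free graph $g\in\mathcal B^{AF}_{l,k+l}$ on $l$ white and $k$ black particles carrying $\prod_{j\in J}\prod_{r\in A^{(j)}}f^{ls}(p_j,q_r)$ with $\cup_{j\in J}A^{(j)}=[l]$. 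The combinatorial coefficients collapse exactly as in \eqref{further21}: the resummation of \cite{pulvirenti2012cluster}, Section~6 identifies the leading part of $\tfrac{N_r}{|\Lambda|}P_{|\Lambda|,N_r}(k)B^{\mathbf p}_\Lambda(k,J)$, including the rooting factor $\tfrac1{k+1}$, with $\tfrac1{k+1}\bigl(\tfrac{N_r}{|\Lambda|}\bigr)^{k+1}\sum_{l=1}^{k+1}\bar B_{\Lambda,J}(l,k+1-l)$, which is \eqref{moreJ}.

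Finally, for \eqref{CTL} I would pass to the thermodynamic limit term by term. The cluster-expansion estimate of Proposition~\ref{ClusterEffective2} is uniform, so dominated convergence allows exchanging the limit with the $k,l$ sums; along the way $\tfrac{N_r}{|\Lambda|}\to\rho_r$, the two-connected graph integrals over $\Lambda$ (pinned by the fixed $f^{ls}$-links to the $p_j$'s) converge to their counterparts over $\mathbb R^d$, and the ratio $|\Lambda|/|\tilde\Lambda(\mathbf p)|$ is squeezed by \eqref{ratio} between $(1-\tfrac{N_R}{|\Lambda|}|B_R|)^{-1}$ and $(1-\tfrac{N_R}{|\Lambda|}|B_{R+r}|)^{-1}$; the displayed expression \eqref{CTL} is the one obtained with the lower endpoint $1/(1-\rho_R|B_R|)$, the upper endpoint giving the companion bound with $|B_{R+r}|$.

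The only genuinely delicate point I anticipate is checking that the cancellation of Lemma~4.1 of \cite{kuna2018convergence} is insensitive to the extra $f^{ls}$-decorations now carried by the white particles; this is the case because that cancellation operates purely on the black two-connected blocks and does not touch the white-decorated block, so it transfers unchanged. The remaining identities -- the passage from the $\{V_i\}$-representation with its $\varphi^T$-signs to the articulation-free graph sum, and the emergence of the $1/(l!k!)$ and $1/(k+1)$ factors -- are exactly those already used in Lemma~\ref{second_cancellation1} and in \cite{pulvirenti2012cluster}.
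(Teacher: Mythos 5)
Your proposal follows essentially the same route as the paper's proof: expand $\prod_{j\in J}\zeta(p_j,Y)$ into the white-vertex families $(A^{(j)})_{j\in J}$ as in \eqref{Bp2}--\eqref{inter6}, reorganize exactly as in Lemma~\ref{second_cancellation1}, invoke the power counting and Lemma~4.1 of \cite{kuna2018convergence} to isolate the articulation-free block carrying all white vertices with a uniform $O(1/|\Lambda|)$ remainder, and pass to the thermodynamic limit by dominated convergence using the uniform cluster-expansion bound and \eqref{ratio}. Your observation that the $\mathbf p$-dependent ratio $|\Lambda|/|\tilde\Lambda(\mathbf p)|$ is only sandwiched between the $|B_R|$ and $|B_{R+r}|$ endpoints (so that \eqref{CTL} is the lower-endpoint expression) is a slightly more careful reading of the limit than the paper's bare statement, but otherwise the argument is the same.
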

	
	\begin{proof}
		Recalling \eqref{Bp} we have that
		\begin{equation}\label{C2}
			C_{\Lambda}((p_j)_{j\in J};N_r) = \frac{N_r}{|\Lambda|}\sum_{k\ge1}P_{|\Lambda|,N_r}(k)B^{\mathbf{p}}_{\Lambda}(k,J).
		\end{equation}
		By expanding $\zeta(p_j,(\mathbf q_{V_i})_{i=1}^n)$ in \eqref{Bp} we obtain:
		\begin{equation}\begin{split}\label{Bp2}
				&B^{\mathbf{p}}_{\Lambda}(k,J) =  \frac{|\Lambda|^k}{k!}
				\sum_{n\ge1}\frac{1}{n!}\sum_{\substack{(V_1,...,V_n)\in\mathcal{V}^n_{r}\\ \bigcup_{i=1}^{n}V_i=[k+1]}}\varphi^T(V_1,...,V_n)
				\left(\frac{|\Lambda|}{|\tilde\Lambda(\mathbf p)|}\right)^{\sum_{i=1}^n|V_i|}\times\\
				&\hspace{1.3cm}\int\limits_{\Lambda^{\card{V_1}}}\cdots \int\limits_{\Lambda^{\card{V_n}}}\prod_{j\in J}
				\sum_{\substack{((A^{(j)}_i)_{j\in J})_{i=1}^n, \\ A^{(j)}_i\subset V_i, \cup_i A^{(j)}_i\neq\emptyset,\forall j\in J}}\prod_{i=1}^n\prod_{k\in A^{(j)}_i} f^{ls}(p_j,q_k^i)\prod_{i=1}^n \d\xi\big((q_k^i)_{k\in V_i}\big),
		\end{split}\end{equation}
		see Figure~\ref{Fig:3.2}.
		Here, we have to repeat the computations in Lemma~\ref{second_cancellation1} for the case of several
		big spheres with centers $p_j$, $j\in J\subset [N_R]$.
		Hence, following \eqref{inter1}, 
		instead of only the pair $(V,A)$ we have now $(V,(A^{(j)})_{j\in J})$, i.e., a collection of sets 
		$A^{(j)}\subset V$ for
		the labels of the small spheres within $V$ that interact with each big sphere with center $p_j$, $j\in J$.
		The rest is similar. 
		\begin{figure}[h]
			\begin{center}
				\begin{tabular}{rcl}
					&&\subfigure[]
					{\label{Fig:3.1}
						\hspace{-1cm}\tikz[baseline=-2pt]{ %orizzontale
							\draw [dashed](0,0)node[bigvertex,label=left:{$p_1$}]{}--(1.9,0);
							\draw(2.2,0)circle(0.5){};
							\draw(1.5,-0.8)node[label=right:{\;\;\;$Y$}]{};
							\draw(1,0)node[label=above:{$\zeta$}]{};
							\draw[dashed](2.5,0)--(4.2,0)node[bigvertex,label=right:{$p_3$}]{};
							\draw[dashed](2.2,2)node[bigvertex,label=right:{$p_2$}]{}--(2.2,0.3);}}
					\;\;\;\;
					\subfigure[]
					{
						\tikz[baseline=-2pt]{(\draw(-0.5,0)node[bigvertex,label=left:{$p_1$}]{}--(1.5,0.3)node[nuEP](V1){};
							\draw(1.15,0.1)node[label={$q_1$}]{};
							\draw(-0.5,0)--(1.4,-0.2)node[nuEP]{};
							\draw(1.05,-0.4)node[label={$q_2$}]{};
							\draw(0,-0.1)node[label={$f^{ls}$}]{};
							\draw(2.1,-1)node[vertex](V4){};
							\draw(1.7,-1.3)node[label={$q_6$}]{};
							\draw(2,2.3)node[bigvertex,label=right:{$p_2$}](v0){}--(1.8,0.5)node[nuEP](V2){};
							\draw(2.15,0.1)node[label={$q_3$}]{};
							\draw(4.9,0)node[bigvertex,label=right:{$p_3$}]{}--(3,0)node[nuEP](V3){};
							\draw(3.4,-0.2)node[label={$q_5$}]{};
							\draw(v0)--(3,0.5)node[nuEP]{};
							\draw(2.65,0)node[label={$q_4$}]{};
							\draw(V1)--(v0);
							\draw(V1)--(V4);
							\draw(1.5,-0.25)--(V4);
							\draw(V2)--(V4);
							\draw(3,0.4)--(V4);
							\draw(V3)--(V4);
							\draw(2,-2)node[vertex](V5){};
							\draw(2.8,-2.2)node[label=left:{$q_7$}]{};
							\draw(V4)--(V5);
							\draw(2.3,-1.6)ellipse[x radius=0.6cm, y radius=0.8cm];
							\draw(2.5,-1.5)node[vertex](V6){};
							\draw(2.55,-2.25)node[label={$q_8$}]{};
							\draw(V4)--(V6);
							\draw(3.2,-2)node[label={$V_3$}]{};
							\draw(1,-2.3)node[label={$A_3=\emptyset$}]{};
							\draw(3,0.5)circle(0.2);
							\draw(3,0)circle(0.2);
							\draw[rotate=-35](2.4,1.5)ellipse[x radius=0.7cm,y radius=1.3cm];
							\draw(3.55,0.3)node[label={$A^{(2)}_2$}]{};
							\draw(3.15,-0.9)node[label={$A^{(3)}_2$}]{};
							\draw(3.75,-0.85)node[label={$V_2$}]{};
							\draw(2.3,-0.7)node[label={$f^{ss}$}]{};
							\draw [rotate=-15](1.4,0.45)ellipse[x radius=0.2cm,y radius=0.45cm];
							\draw(1.3,-1.1)node[label={$A^{(1)}_1$}]{};
							\draw[rotate=-50](0.75,1.5)ellipse[x radius=0.2cm,y radius=0.4cm];
							\draw(1.45,0.35)node[label={$A_1^{(2)}$}]{};
							\draw[rotate=25](1.45,-0.85)ellipse[x radius=0.9cm,y radius=1.4cm];
							\draw(0.9,-1.5)node[label={$V_1$}]{};
							\draw[dashed](2.3,-0.4)circle(2.1){};
							\draw(0.8,1.2)node[label={$Y$}]{};
						}\label{Fig:3.2}}
					\\
					\\
					&&\subfigure[]{
						\hspace{-1cm}\tikz[baseline=-2pt]{(\draw(-0.5,0)node[bigvertex,label=left:{$p_1$}]{}--(1.4,0.2)node[nuEP](V1){};
							\draw(1.1,0.1)node[label={$q_1$}]{};
							\draw(-0.5,0)--(1.4,-0.2)node[nuEP]{};
							\draw(1,-0.7)node[label={$q_2$}]{};
							\draw(0,-0.1)node[label={$f^{ls}$}]{};
							\draw(2.1,-1)node[vertex](V4){};
							\draw(2.5,-1.4)node[label={$q_6$}]{};
							\draw(2,2.3)node[bigvertex,label=right:{$p_2$}](v0){}--(1.7,0.4)node[nuEP](V2){};
							\draw(1.95,-0.1)node[label={$q_3$}]{};
							\draw(4.9,0)node[bigvertex,label=right:{$p_3$}]{}--(2.9,0)node[nuEP](V3){};
							\draw(3,0)node[label={$q_5$}]{};
							\draw(v0)--(2.3,0.5)node[nuEP]{};
							\draw(2.6,0.3)node[label={$q_4$}]{};
							\draw(2.9,0)circle(0.2);
							\draw[rotate=20](1.8,-0.3)ellipse[x radius=0.7cm, y radius=0.3cm]{};
							\draw(1.4,0)ellipse[x radius=0.2, y radius=0.4]{};
							\draw(1.4,-1.2)node[label={$A_1^{(1)}$}]{};
							\draw(2.1,0.5)node[label={$A_1^{(2)}$}]{};
							\draw(3,-0.9)node[label={$A_1^{(3)}$}]{};
							\draw(V1)--(v0);
							\draw(V1)--(V4);
							\draw(1.5,-0.25)--(V4);
							\draw(V2)--(V4);
							\draw(2.3,0.4)--(V4);
							\draw(V3)--(V4);
							\draw(2,-1.5)ellipse[x radius=0.3cm, y radius=0.8cm];
							\draw(2,-2)node[vertex](V5){};
							\draw(2.8,-2.2)node[label=left:{$q_7$}]{};
							\draw(1.4,-1.9)node[label={$V_2$}]{};
							\draw(1.1,-2.4)node[label={$A_2=\emptyset$}]{};
							\draw(V4)--(V5);
							\draw(2.35,-1.25)circle(0.5);
							\draw(2.5,-1.5)node[vertex](V6){};
							\draw(2.55,-2.25)node[label={$q_8$}]{};
							\draw(V4)--(V6);
							\draw(3.1,-1.7)node[label={$V_3$}]{};
							\draw(3.45,-2.1)node[label={$A_3=\emptyset$}]{};
							\draw(2.45,-0.7)node[label={$f^{ss}$}]{};
							\draw(2.1,0)circle(1.35){};
							\draw(3.3,0.6)node[label={$V_1$}]{};
							\draw[dashed](2.3,-0.4)circle(2.1){};
							\draw(0.8,1.2)node[label={$Y$}]{};
						}\label{Fig:3.3}}\;\;\;\;
					\subfigure[]{
						\tikz[baseline=-2pt]{(\draw(-0.5,0)node[bigvertex,label=left:{$p_1$}]{}--(1.4,0.2)node[nuEP](V1){};
							\draw(1.1,0.1)node[label={$q_1$}]{};
							\draw(-0.5,0)--(1.4,-0.2)node[nuEP]{};
							\draw(1,-0.7)node[label={$q_2$}]{};
							\draw(0,-0.1)node[label={$f^{ls}$}]{};
							\draw(2.1,-1)node[vertex](V4){};
							\draw(2.4,-1.4)node[label={$q_6$}]{};
							\draw(2,2.3)node[bigvertex,label=right:{$p_2$}](v0){}--(1.7,0.4)node[nuEP](V2){};
							\draw(2,-0.05)node[label={$q_3$}]{};
							\draw(4.5,0)node[bigvertex,label=right:{$p_3$}]{}--(2.9,0)node[nuEP](V3){};
							\draw(3,0)node[label={$q_5$}]{};
							\draw(v0)--(2.3,0.5)node[nuEP]{};
							\draw(2.6,0.3)node[label={$q_4$}]{};
							\draw(2.9,0)circle(0.2);
							\draw[rotate=20](1.8,-0.3)ellipse[x radius=0.7cm, y radius=0.3cm]{};
							\draw(1.4,0)ellipse[x radius=0.2, y radius=0.4]{};
							\draw(1.4,-1.2)node[label={$A_1^{(1)}$}]{};
							\draw(2.1,0.5)node[label={$A_1^{(2)}$}]{};
							\draw(3,-0.9)node[label={$A_1^{(3)}$}]{};
							\draw(V1)--(v0);
							\draw(V1)--(V4);
							\draw(1.5,-0.25)--(V4);
							\draw(V2)--(V4);
							\draw(2.3,0.4)--(V4);
							\draw(V3)--(V4);
							\draw(2.45,-0.7)node[label={$f^{ss}$}]{};
							\draw(2.1,0)circle(1.35){};
							\draw(3.3,0.6)node[label={$V$}]{};
							\draw[dashed](2.1,0)circle(1.8){};
							\draw(3.9,-1.4)node[label={$Y$}]{};
						}\label{Fig:3.4}}
				\end{tabular}
			\end{center}\caption{The cancellations for three big spheres interacting with a cloud.\newline
				\ref{Fig:3.1} The term $C_{\Lambda}((p_1,p_2,p_3);N_r)$\newline
				\ref{Fig:3.2} One element in \eqref{Bp2}. \newline
				\ref{Fig:3.3}  Thermodynamically dominant terms: all white vertices should be in the same polymer $V$.\newline
				\ref{Fig:3.4} Articulation free graphs obtained after the cancellations.}
			\label{fig:3}
		\end{figure}
		
		We re-organize the sum over $(V_i,(A_i^{(j)})_{j\in J})_{i=1,\ldots,n}$
		and obtain:
		\begin{equation}\begin{split}\label{inter6}
				B^{\mathbf{p}}_{\Lambda}(k,J)  = &
				\frac{|\Lambda|^k}{k!}
				\sum_{n\geq1}\frac{1}{n!}\sum_{\substack{(V_i,(A_i^{(j)})_{j\in J})_{i=1,\ldots,n}\\ \cup_{i=1}^n A^{(j)}_i\neq\emptyset, \forall j\in J\\ \bigcup_{i=1}^{n}V_i=[k+1]}}\varphi^T(V_1,...,V_n)
				\left(\frac{|\Lambda|}{|\tilde\Lambda(\mathbf p)|}\right)^{\sum_{i=1}^n |V_i|}\times\\
				&\hspace{1.8cm}
				\int\limits_{\Lambda^{\card{V_1}}}\cdots \int\limits_{\Lambda^{\card{V_n}}}\prod_{j\in J}\prod_{i=1}^n\prod_{r\in A^{(j)}_i} f^{ls}(p_j,q_r^i) \prod_{i=1}^n \d\xi\big((q_k^i)_{k\in V_i}\big).
			\end{split}
		\end{equation}
		In order to apply the result from \cite{kuna2018convergence}, Section 4,
		we separate over white and black vertices. Note that it is better to do it directly for \eqref{C}, rather than $B^{\mathbf p}_{\Lambda}$. The interested reader can check how the two are related.
		Hence, following the exact steps of Lemma~\ref{second_cancellation1}
		we have:
		\begin{equation}\label{d}
			\int_{\mathbb{Y}}\prod_{j\in J}\zeta(p_j,Y)\nu_{\Lambda,N_r,\mathbf p}(dY)
			=\sum_{l\geq 1}\sum_{m=1}^l\sum_{k=0}^{N_r-m}
			P_{|\Lambda|,N_r}(m+k)B_{\Lambda,J}(l,m,k),
		\end{equation}
		where the formula of $B_{\Lambda,J}(l,m,k)$ can be understood by comparing to \eqref{inter6} and \eqref{C}.
		Here $m$ is the number of white vertices and $l$ the number of
		white vertices counted with their multiplicity, that is the number of times a particular vertex
		appears in different sets $A\subset V$.
		Furthermore, from formula (4.3) in \cite{kuna2018convergence} we have:
		\begin{equation}
			B_{\Lambda,J}(l,m,k)=\bar B_{\Lambda, J}(l,k)\delta_{l,m}+R_{\Lambda}(l,m,k),
		\end{equation}
		where in the first term we include all terms
		that contain only one polymer with $A^{(j)}\neq\emptyset$, $\forall j\in J$.
		We choose the labels $\cup_{j\in J}A^{(j)}=\{1,\ldots,l\}$ (see Figure~\ref{Fig:3.3} with $l=5$).
		Following Lemma~4.1 in \cite{kuna2018convergence} we have (Figure~\ref{Fig:3.4}):
		\begin{equation}\begin{split}\label{further20}
				\bar B_{\Lambda, J}(l,k)  = & \;\frac{|\Lambda|^{l+k}}{l!k!}\int_{\Lambda^{l+k}}
				\sum_{\substack{(A^{(j)})_{j\in J} \\ A^{(j)}\subset \{1,\ldots,l\}, A^{(j)}\neq\emptyset \\ \cup_{j\in J} A^{(j)}=[l]}}
				\prod_{j\in J}\prod_{r\in A^{(j)}} f^{ls}(p_j,q_r)
				\times\\
				&
				\left(\frac{|\Lambda|}{|\tilde\Lambda(\mathbf p)|}\right)^{k+l}
				\sum_{g\in\mathcal B^{AF}_{l,k+l}}
				\prod_{\{i,j\}\in E(g)}f^{ss}(q_i,q_j)
				\prod_{i=1}^{l+k}\frac{\d q_i}{|\Lambda|}.
			\end{split}
		\end{equation}
		Furthermore,
		\begin{equation}\label{e2}
			|R_{\Lambda}(l,m,k)|\leq C\frac{1}{|\Lambda|},
		\end{equation}
		uniformly on its parameters, which concludes
		the proof of the lemma.
	\end{proof}
	
	\begin{remark}\label{r:canc2}
		Note that as in \eqref{cls} 
		one of the cases is when all big spheres interact with the same small (which becomes white, i.e., $l=m=1$)
		obtaining:
		\begin{equation}\label{manybigonewhite}
			C_{\Lambda}((p_j)_{j\in J};N_r)=
			\sum_{k\geq 1}\frac{1}{k+1}\left(\frac{N_r}{|\tilde\Lambda(\mathbf p)|}\right)^{k+1}\beta_{\Lambda}(k)
			\int_{\Lambda}\prod_{j\in J}f^{ls}(p_j,q)\d q.
		\end{equation}
		As discussed in Remark~\ref{r:canc}, under a weaker radius of convergence, the terms in \eqref{cls} cancel at infinite volume with the order $\rho_R$ contribution of \eqref{Ainf}.
		The same should be true for the contributions of order $\rho_R^m$, with $m\geq 2$:
		they should cancel with \eqref{manybigonewhite} at infinite volume after integrating over $p_j$, $j\in J$.
		One could continue and check how at the end we reduce to two-species-two-connected graphs, under
		a weaker radius of convergence under which the required graph operations are valid.
	\end{remark}
	
	Summarizing, the previous lemmas revealed that in the final expression of
	the coefficients $B^*_{\Lambda}$ given by \eqref{withHG} and \eqref{moreJ}-\eqref{further2}
	we have an ``outside'' two-connected structure
	for the bipartite graph and an ``inside'' articulation free structure within the
	clouds consisting of white vertices for those that interact with the neighbouring big spheres
	and black ones for the rest,
	as depicted in Figure~\ref{fig:4}.
	
	\begin{figure}[h]
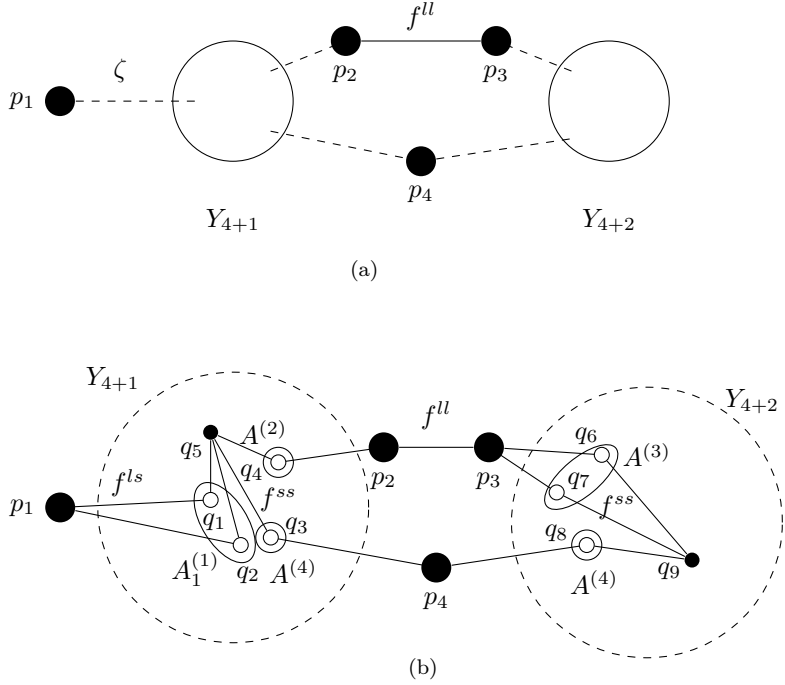

		\begin{center}
			\begin{tabular}{rcl}
				&&\subfigure[]{
					\hspace{-1cm}\tikz[baseline=-2pt]{\draw(1.5,0)circle(0.8);
						\draw(1.5,-2)node[label={$Y_{4+1}$}]{};
						\draw(6.5,0)circle(0.8);
						\draw(6.5,-2)node[label={$ Y_{4+2}$}]{};
						\draw[dashed](2,-0.4)--(4,-0.8)node[bigvertex,label=below:{$p_4$}](v0){};
						\draw[dashed](v0)--(6.,-0.5);
						\draw[dashed](2,0.4)--(3,0.8)node[bigvertex,label=below:{$p_2$}](v1){};
						\draw[dashed](6,0.4)--(5,0.8)node[bigvertex,label=below:{$p_3$}](v2){};
						\draw(v1)--(v2);
						\draw(4,0.7)node[label={$f^{ll}$}]{};
						\draw[dashed](-0.8,0)node[bigvertex,label=left:{$p_1$}](v3){}--(1,0);
						\draw(0,0)node[label={$\zeta$}]{};
					}\label{Fig:4.1}}
				\\
				\\
				\\
				&&	\subfigure[]{
					\hspace{-1cm}\tikz[baseline=-2pt]{
						%%A2
						\draw(6,-0.5)circle(0.2);
						\draw(6.1,-1.4)node[label={$A^{(4)}$}]{};
						\draw[rotate=-50](3.5,4.8)ellipse[x radius=0.25cm, y radius=0.6cm]{};
						\draw(6.8,0.3)node[label={$A^{(3)}$}]{};
						%%A1
						\draw(1.8,-0.4)circle(0.2);
						\draw(2.1,-1.25)node[label={$A^{(4)}$}]{};
						\draw(1.9,0.6)circle(0.2);
						\draw(1.7,0.6)node[label={$A^{(2)}$}]{};
						\draw[rotate=32](0.9,-0.8)ellipse[x radius=0.3cm, y radius=0.6cm]{};
						\draw(0.8,-1.3)node[label={$A_1^{(1)}$}]{};
						\draw(1.9,-0.3)node[label={$f^{ss}$}]{};
						\draw(1.8,-0.4)node[nuEP](v24){}--(4,-0.8)node[bigvertex,label=below:{$p_4$}](v0){};
						\draw(2.15,-0.6)node[label={$q_3$}]{};
						\draw(v0)--(6,-0.5)node[nuEP](v11){};
						\draw(5.6,-0.7)node[label={$q_8$}]{};
						\draw(1.9,0.6)node[nuEP](v23){}--(3.3,0.8)node[bigvertex,label=below:{$p_2$}](v1){};
						\draw(1.55,0.1)node[label={$q_4$}]{};
						\draw(5.6,0.2)node[nuEP](v12){}--(4.7,0.8)node[bigvertex,label=below:{$p_3$}](v2){};
						\draw(5.9,0)node[label={$q_7$}]{};
						\draw(v1)--(v2);
						\draw(4,0.8)node[label={$f^{ll}$}]{};
						\draw(v2)--(6.2,0.7)node[nuEP](v13){};
						\draw(6,0.6)node[label={$q_6$}]{};
						\draw(7.4,-0.7)node[vertex](V0){};
						\draw(7.1,-1.2)node[label={$q_9$}]{};
						\draw(V0)--(v11);
						\draw(V0)--(v12);
						\draw(V0)--(v13);
						\draw(6.4,-0.4)node[label={$f^{ss}$}]{};
						\draw(-1,0)node[bigvertex,label=left:{$p_1$}](v3){}--(1,0.1)node[nuEP](v21){};
						\draw(1.05,-0.55)node[label={$q_1$}]{};
						\draw(v3)--(1.4,-0.5)node[nuEP](v22){};
						\draw(1.5,-1.25)node[label={$q_2$}]{};
						\draw(-0.1,-0.1)node[label={$f^{ls}$}]{};
						\draw(1,1)node[vertex](V1){};
						\draw(0.75,0.4)node[label={$q_5$}]{};
						\draw(v21)--(V1);
						\draw(v22)--(V1);
						\draw(v23)--(V1);
						\draw(v24)--(V1);
						\draw[dashed](1.3,0)circle(1.8);
						\draw(-0.3,1.3)node[label={$Y_{4+1}$}]{};
						\draw[dashed](6.8,-0.2)circle(1.8);
						\draw(8.2,1)node[label={$Y_{4+2}$}]{};
					}\label{Fig:4.2}}
			\end{tabular}
		\end{center}\caption{\ref{Fig:4.1} A graph in $\mathcal{B}^*_{3+1,2}$ contributing to $B_{\Lambda}^*(3;N_r)$.\newline
			\ref{Fig:4.2} A contribution to the factors $C_{\Lambda}((p_1,p_2,p_4);N_r)$ and $C_{\Lambda}((p_3,p_4);N_r)$ for the graph in $\mathcal{B}^*_{3+1,2}$ given in~\ref{Fig:4.1}. }
		\label{fig:4}
	\end{figure}

	As a consequence, the thermodynamic limit of the 
	third term of \eqref{new_ub} reads:
	\begin{equation}\label{thirdTL}
		\sum_{n\geq 1}\rho_R^{n+1} B^*(n;\rho_r),
	\end{equation}
	where
	\begin{equation}\label{BTLwithHG}
			B^*(n;\rho_r)  :=  \frac{1}{n!}\sum_{k\ge0}\frac{1}{k!}\sum_{g\in\mathcal{B}^*_{n+1,k}}
			\int_{(\mathbb R^d)^{n}}
			\prod_{\substack{\{i,j\}\in E(g)\\i,j\in\{1,...,n+1\}}}f^{ll}(p_i,p_j)\prod_{J\in \mathcal H(g)}C(\mathbf p_J;\rho_r)
			\prod_{i=1}^{n}\d p_i
	\end{equation}
	and $p_{n+1}\equiv 0$.
	The factor $C(\mathbf p_J;\rho_r)$ is given in \eqref{CTL}.
	Overall, from \eqref{EffectiveFVfreeE} and the upper bound
	\eqref{new_ub}
	we obtain the following limit of the upper bound of the free energy:
	\begin{equation}\label{fe}
		\rho_r(\log\rho_r-1)+\rho_R(\log\rho_R-1)
		- F(\rho_r,\rho_R),
	\end{equation}
	where $F=F^{(0)}+A+F^{(1)}+F^{(\geq 2)}$ with
	\begin{equation}
		F^{(0)}(\rho_r)  :=  \sum_{n\geq 1}\frac{1}{n+1}\beta_n\rho_r^{n+1},\label{fe0}
	\end{equation}
	\begin{equation}
		A(\rho_r,\rho_R)  :=  \sum_{n\geq 1}\frac{1}{n+1}\rho_r^{n+1}A_{\infty}(n;\rho_R|B_{R+r}|),\label{A}
	\end{equation}
	\begin{equation}
		F^{(1)}(\rho_r,\rho_R)  :=  \rho_R\sum_{k\geq 1}\frac{1}{k+1}(\frac{\rho_r}{1-\rho_R |B_{R+r}|})^{k+1}B^{(1)}_{\infty}(k)
		\label{fe1}
	\end{equation}
and
	\begin{equation}
		F^{(\geq 2)}(\rho_r,\rho_R) := 
		\sum_{n\geq 1}\frac{1}{n+1}\rho_R^{n+1} B^*(n;\rho_r)\label{fe2}.
	\end{equation}
	Similarly, we obtain limit of the lower bound by replacing $|B_{R+r}|$ by $|B_R|$.
	Hence, as mentioned in Remark~\ref{r:lim} we 
	expect that in the further limit $R/r\to\infty$ the upper and lower bound might agree. Note also that this regime is valid within the estimates
	of the cluster expansion presented here. In such a case, obtaining an infinite volume
	canonical free energy, one can further ask
	whether we can get it by virial inversion of the corresponding effective grand canonical pressure as constructed in \cite{jansen2020cluster} in a way that maintains 
	the improved convergence condition. The aim would be to understand whether working
	directly in the canonical ensemble can provide a better expansion than working in
	the grand-canonical and subsequently performing a virial inversion which adds further constraints in the convergence condition.
	Note that this is not the case for the one-species case. For the two species with finite
	$R$ and $r$ studied here we only got upper and lower bounds bringing us to
	intriguing questions requesting further investigation.

	\appendix

	\section{Proof of the cluster expansion}
	
	Recalling the notation from Section~\ref{s_small}, given $J\subset [N_R]$ we consider the following abstract polymer model:
	\begin{equation}\label{newPF}
		Z(J):=\sum_{\{V_1,\ldots,V_n\}_{\sim}}\prod_{i=1}^n \zeta(V_i,J),
	\end{equation}
	where
	\begin{equation}\label{newact}
		\zeta(V,J):=\int_{\Lambda^V}\d\xi(\underline q_V)\prod_{j\in J}\prod_{k\in V}(1+f^{ls}(p_j,q_k)).
	\end{equation}
	We have:
	\begin{prop}\label{propCE}
		Suppose that there exists a function $a:\mathcal V_r\to\mathbb R$ such that
		\begin{equation}\label{unif_cond}
			\sup_{J\subset [N_R]}\sum_{V'\cap V\neq\emptyset} |\zeta(V',J)| e^{a(V')+c(V')}\leq a(V).
		\end{equation}
		Then
		\begin{equation}\label{rewrite}
			\ln Z(N_R)=\sum_{J\subset [N_R]} \Phi^T(J),
		\end{equation}
		where
		\begin{equation}\label{PhiT}
			\Phi^T(J)=\sum_{J'\subset J}(-1)^{|J\setminus J'|}\ln Z(J').
		\end{equation}
	\end{prop}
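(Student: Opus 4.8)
The plan is to recognise $Z(J)$ as the partition function of an abstract polymer model, invoke a standard convergence criterion to make sense of $\log Z(J)$ for every $J\subset[N_R]$, and then obtain \eqref{rewrite} from a finite Möbius inversion on the Boolean lattice $2^{[N_R]}$.

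First I would fix $J\subset[N_R]$ and read \eqref{newPF} as the abstract polymer partition function of \cite{kotecky1986cluster}: the polymers are the sets $V\in\mathcal V_r$, two polymers are compatible precisely when $V\cap V'=\emptyset$, and $V$ carries the (real, possibly negative) activity $\zeta(V,J)$ of \eqref{newact}. Here $Z(J)$ is in fact a polynomial in the activities, since $\mathcal V_r$ is finite, but the relevant infinite object is the cluster series for its logarithm. Assumption \eqref{unif_cond} is exactly the Koteck\'y--Preiss condition in the weighted form of \cite[Theorem~2.1]{poghosyan2009abstract}, with weight functions $a(\cdot)$ and $c(\cdot)$; it is imposed uniformly in $J$, so a single choice of weights serves all $J\subset[N_R]$ at once (for \eqref{rewrite} on its own the pointwise version would already suffice). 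Consequently, for each such $J$ the series
\begin{equation*}
\sum_{n\ge1}\frac{1}{n!}\sum_{(V_1,\ldots,V_n)\in\mathcal V_r^{\,n}}\varphi^T(V_1,\ldots,V_n)\prod_{i=1}^n\zeta(V_i,J),
\end{equation*}
with $\varphi^T$ as in \eqref{phiT}, converges absolutely, $Z(J)$ equals its exponential, and this series is taken as the definition of $\log Z(J)$; in particular $\log Z(J)$ is a well-defined real number. The fact that the activities are signed --- the sign being carried by $\d\xi$, i.e.\ by $\sum_{g}\prod f^{ss}$ over connected graphs --- is harmless, since the cited polymer-model estimates are stated for complex activities.

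With $\log Z(\cdot)$ now a well-defined function on the finite lattice $2^{[N_R]}$, the rest is elementary. Definition \eqref{PhiT}, namely $\Phi^T(J)=\sum_{J'\subset J}(-1)^{|J\setminus J'|}\log Z(J')$, is inverted by Möbius inversion on $2^{[N_R]}$ (equivalently, iterated inclusion--exclusion), giving $\log Z(J)=\sum_{J'\subset J}\Phi^T(J')$ for every $J\subset[N_R]$; specialising to $J=[N_R]$ yields $\log Z(N_R)=\sum_{J\subset[N_R]}\Phi^T(J)$, which is \eqref{rewrite}. No convergence question arises at this stage, because $[N_R]$ is finite and each $\Phi^T(J)$ is a finite alternating sum of the finitely many well-defined numbers $\log Z(J')$.

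The only substantive step is the first one: checking that \eqref{unif_cond} really matches the hypotheses of the polymer theorem one wishes to cite, and that $Z(J)$ indeed coincides with the exponential of its (convergent) cluster series, so that $\log Z(J)$ is meaningful. Once this is in place, the two-line Möbius argument is automatic. I would also flag, for later use, that the uniformity in $J$ built into \eqref{unif_cond} is not needed for \eqref{rewrite} itself but is precisely what will let the individual terms $\Phi^T(J)$ be expanded and bounded uniformly in the analysis that follows.
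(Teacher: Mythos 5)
Your proposal is correct and follows essentially the same route as the paper, whose proof is simply a citation of the Kotecký--Preiss theorem: condition \eqref{unif_cond} is the (weighted, uniform-in-$J$) polymer convergence criterion making $\ln Z(J)$ well defined via its convergent cluster series for every $J\subset[N_R]$, after which \eqref{rewrite} is the elementary Möbius/inclusion--exclusion telescoping of \eqref{PhiT}. You have merely spelled out explicitly the two steps that the paper's one-line proof leaves implicit.
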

	
	\begin{proof}
		It follows from the Theorem in \cite{kotecky1986cluster}.
	\end{proof}

	\begin{remark}\label{r:CE}
		The factor in the big parenthesis in \eqref{att} 
		can be viewed as an abstract polymer model
		consisting of the space $\mathcal V_r$ and the activity
		$\zeta^{\mathbf p}_{\Lambda}$. Given the formulation in this appendix 
		it is also equal to $Z([N_R])$.
		Hence, we can apply Proposition~\ref{propCE} after checking
		the validity of the convergence condition \eqref{unif_cond}. This is given in the following lemma.
	\end{remark}

	For the partition function in \eqref{att} we have:

	\begin{lemma}\label{PropXuan2}
		Assume that there exist constants $b,c>0$ such that
		\begin{equation}\label{cond1}
			2\frac{N_r}{|\Lambda|-N_R |B_{R+r}|}|B_{2R}|e^{2(b+c)+1}<c.
		\end{equation}
		Then the following series is converging absolutely and uniformly in $N_r$, $\mathbf p$, $R$, $r$ and $\Lambda$:
		\begin{equation}\label{ClusterZps2}
			\sup_{V_1\subset [N_r]}
			\sum_{n\geq 2}\frac{1}{(n-1)!}
			\sum_{\substack{(V_2,\ldots,V_n)\\ V_i\in\mathcal V_r,\, |V_i|\geq 2}}
			|\phi^T(V_1,\ldots,V_n)|
			\prod_{i=2}^n|\zeta^{\mathbf p}_{\Lambda}(V_i)|\leq e^c-1
		\end{equation}	
		or, equivalently (recalling the notation in \eqref{DefYMeas}),
		\begin{equation}\label{ClusterZps3}
			\frac{1}{|\Lambda|}
			\int_{\mathbb Y_{\Lambda}}
			\d |\nu|_{\Lambda,N_r,\mathbf p}(Y)<\infty
			\quad\text{and}\quad
			\frac{1}{|\Lambda|}
			\int_{\mathbb Y_{\Lambda}}
			\sum_{\substack{J\subset [N_R] \\J\neq\emptyset}}\prod_{j\in J} 
			|\zeta(p_j,Y)|
			\d |\nu|_{\Lambda,N_r,\mathbf p}(Y)<\infty.
		\end{equation}
	\end{lemma}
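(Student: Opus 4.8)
The plan is to view the bracketed series in \eqref{att} --- equivalently $Z([N_R])$ in the notation of \eqref{newPF}, as observed in Remark~\ref{r:CE} --- as an abstract polymer model on $\mathcal V_r$ with hard-core (disjointness) compatibility and activities $\zeta^{\mathbf p}_\Lambda$, and to verify the Koteck\'y--Preiss type hypothesis of Proposition~\ref{propCE}. The bound \eqref{ClusterZps2} is then the usual estimate on cluster weights pinned to a fixed polymer, and \eqref{ClusterZps3} follows by unwinding the definition \eqref{DefYMeas} of $|\nu|_{\Lambda,N_r,\mathbf p}$. Structurally this is the canonical hard-sphere argument of \cite{pulvirenti2012cluster}; the only new features are the normalization by $|\tilde\Lambda(\mathbf p)|$ in place of $|\Lambda|$ --- handled through \eqref{ratio} and \eqref{estimate} --- and the factor $1+\vartheta_{\mathbf p}(\mathbf q_V)$ in \eqref{zetal}, which by \eqref{T} lies in $[0,1]$ and may simply be discarded in all the bounds.

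First I would estimate the activity. After discarding $1+\vartheta_{\mathbf p}$ and applying the tree-graph inequality to the connected-graph sum $\sum_{g\in\mathcal C_{|V|}}\prod_{\{i,j\}\in E(g)}f^{ss}(q_i,q_j)$ --- legitimate because $V_{ss}$ is a non-negative (hard-core) potential, so $|f^{ss}|\le 1$ --- one majorizes it by $\sum_{T}\prod_{\{i,j\}\in E(T)}\mathbf 1_{|q_i-q_j|<2r}$ over spanning trees $T$ on $|V|$ vertices. Integrating such a tree with one vertex pinned, using $\int_\Lambda\mathbf 1_{|q-q'|<2r}\,\d q'\le |B_{2r}|$, Cayley's count of $|V|^{|V|-2}$ spanning trees, and $|\Lambda|/|\tilde\Lambda(\mathbf p)|\le(1-N_R|B_{R+r}|/|\Lambda|)^{-1}$ from \eqref{ratio}, yields a bound of the schematic form $|\zeta^{\mathbf p}_\Lambda(V)|\lesssim |V|^{|V|-2}\bigl(|B_{2r}|/(|\Lambda|-N_R|B_{R+r}|)\bigr)^{|V|-1}$.

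Next I would check the convergence hypothesis: for a fixed polymer $V$ the sum over incompatible polymers is at most $|V|\sup_{\ell}\sum_{V'\ni\ell}|\zeta^{\mathbf p}_\Lambda(V')|e^{(b+c)|V'|}$; bounding the number of $m$-element polymers through a fixed label $\ell$ by $\binom{N_r-1}{m-1}\le N_r^{m-1}/(m-1)!$ and inserting the activity estimate produces a tree-function series $\sum_{m\ge2}\tfrac{m^{m-2}}{(m-1)!}\,y^{m-1}$ whose argument $y$ is comparable to $\tfrac{N_r|B_{2r}|}{|\Lambda|-N_R|B_{R+r}|}e^{b+c}$. Condition \eqref{cond1} is tailored precisely so that, after absorbing the possibly-large ratio $|\Lambda|/(|\Lambda|-N_R|B_{R+r}|)$, the replacement of $|B_{2r}|$ by $|B_{2R}|$, the factor $|V|$, and the radius $e^{-1}$ of the tree-function series into the single constant, one gets $y$ small enough for this sum to be bounded by $c$; this is the hypothesis of Proposition~\ref{propCE}, and the cluster-expansion theorem (\cite{kotecky1986cluster}, \cite{poghosyan2009abstract}) then delivers both the representation \eqref{rewrite} and the pinned-cluster bound \eqref{ClusterZps2} with the stated majorant $e^c-1$. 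Finally, for \eqref{ClusterZps3} I would expand $\int_{\mathbb Y_\Lambda}\d|\nu|_{\Lambda,N_r,\mathbf p}$ through \eqref{DefYMeas}, split the outer sum as $\sum_{V_1}\sum_{(V_2,\dots,V_n)}$, apply \eqref{ClusterZps2}, and note that the remaining sum over $V_1$ and over its labels is extensive (of order $N_r$), so the $\tfrac1{|\Lambda|}$-normalized quantity is $O(1)$; for the second identity one additionally uses $\sum_{\emptyset\ne J\subset[N_R]}\prod_{j\in J}|\zeta(p_j,Y)|=\prod_{j=1}^{N_R}\bigl(1+|\zeta(p_j,Y)|\bigr)-1\le 2^{N_R}-1$ together with $|\zeta(p,Y)|\le 1$ from \eqref{zeta}.

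The step I expect to be the main obstacle is the constant bookkeeping required to match the explicit sufficient condition \eqref{cond1}: one must absorb the ratio $|\Lambda|/(|\Lambda|-N_R|B_{R+r}|)$, the loss of a factor $|V|$ in passing from ``incompatible with $V$'' to ``through a fixed label'', and the prefactor of the tree-function series, all into the single inequality $2\tfrac{N_r}{|\Lambda|-N_R|B_{R+r}|}|B_{2R}|e^{2(b+c)+1}<c$, and then verify that the resulting series is genuinely at most $c$ so that the cluster-expansion output is exactly the claimed $e^c-1$. Everything else is routine given \cite{pulvirenti2012cluster} and the abstract polymer machinery of \cite{kotecky1986cluster}.
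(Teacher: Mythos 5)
Your proposal follows essentially the same route as the paper: discard the factor $1+\vartheta_{\mathbf p}\in[0,1]$, bound $|\zeta^{\mathbf p}_{\Lambda}(V)|$ via the spanning-tree/Cayley count with the normalization $|\tilde\Lambda(\mathbf p)|$ controlled through \eqref{estimate}--\eqref{ratio}, verify the Koteck\'y--Preiss-type hypothesis of Proposition~\ref{propCE} pinned at a fixed label under \eqref{cond1} (the paper even majorizes $|B_{2r}|$ by $|B_{2R}|$ to match the stated condition), and then invoke the abstract cluster-expansion theorem, with \eqref{ClusterZps3} following since $|\zeta(p,Y)|\leq 1$. The argument and the constant bookkeeping you outline match the paper's proof, so your proposal is correct and not substantively different.
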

	
	\begin{proof}
		Given the fact that
		$\sup_{V\subset [N_r]}\sup_{\mathbf p, \mathbf q_V}|\vartheta_{\mathbf p}(\mathbf q_V)+1|\leq 1$, we have:
		\begin{equation}\begin{split}\label{conv_cond}
				\sum_{V\ni 1}|\zeta^{\mathbf p}_{\Lambda}(V)|e^{(b+c)|V|} 
				& \leq 
				\sum_{n\geq 2}\binom{N_r-1}{n-1}\frac{1}{|\tilde\Lambda(\mathbf p)|^{n-1}}n^{n-2}|B_{2R}|^{n-1}e^{(b+c)n}\\
				& \leq 
				e^{(b+c)} \sum_{n\geq 2} \frac{n^{n-2}}{(n-1)!}\left(\frac{N_r |B_{2R}|}{|\Lambda|-N_R |B_{R+r}|}e^{b+c}\right)^{n-1}<c,
			\end{split}
		\end{equation}
		under condition \eqref{cond1}.
		Then, since $\zeta(V,[N_R])=\zeta^{\mathbf p}_{\Lambda}(V)$ the convergence
		condition \eqref{cond1} is satisfied and in view of \eqref{rewrite} and \eqref{PhiT}, 
		equation \eqref{calc} can be written as follows:
		\begin{equation}\begin{split}
				&\frac{1}{|\Lambda|}\ln Z^{\mathbf{p}}_{\Lambda,N_r} \nonumber\\
				&\hspace{.3cm}= 
				\frac{N_r}{|\Lambda|}\ln\frac{|\tilde\Lambda(\mathbf{p})|}{|\Lambda|}
				+\frac{1}{|\Lambda|}
				\sum_{n\geq 1}\frac{1}{n!}
				\sum_{\substack{(V_1,\ldots,V_n)\\ V_i\in\mathcal V_r,\, |V_i|\geq 2}}
				%\sum_{\substack{\{V_1,\ldots,V_n\}\\ \text{part of} [N_r]}}
				\phi^T(V_1,\ldots,V_n)
				\prod_{i=1}^n\zeta_{\Lambda}(V_i,[N_R])\nonumber\\
				& \hspace{.3cm}=  
				\frac{N_r}{|\Lambda|}\ln\frac{|\tilde\Lambda(\mathbf{p})|}{|\Lambda|}
				+\frac{1}{|\Lambda|}
				\sum_{n\geq 1}\frac{1}{n!}
				\sum_{\substack{(V_1,\ldots,V_n)\\ V_i\in\mathcal V_r,\, |V_i|\geq 2}}
				\phi^T(V_1,\ldots,V_n)
				\prod_{i=1}^n\zeta_{\Lambda}(V_i,\emptyset)\nonumber\\
				&\hspace{.8cm}
				+\sum_{\substack{J\subset [N_R] \\ J\neq\emptyset}}\sum_{J'\subset J}
				(-1)^{|J\setminus J'|}
				\frac{1}{|\Lambda|}
				\sum_{n\geq 1}\frac{1}{n!}
				\sum_{\substack{(V_1,\ldots,V_n)\\ V_i\in\mathcal V_r,\, |V_i|\geq 2}}
				\phi^T(V_1,\ldots,V_n)
				\prod_{i=1}^n\zeta_{\Lambda}(V_i;\,J')\nonumber\\
				&\hspace{.3cm} = 
				\frac{N_r}{|\Lambda|}\ln\frac{|\tilde\Lambda(\mathbf{p})|}{|\Lambda|}
				+\frac{1}{|\Lambda|}
				\int_{\mathbb Y_{\Lambda}}\d\nu_{\Lambda,N_r,\mathbf p}(Y)+
				\frac{1}{|\Lambda|}
				\int_{\mathbb Y_{\Lambda}}
				\sum_{\substack{J\subset [N_R] \\J\neq\emptyset}}\prod_{j\in J} \zeta(p_j,Y)
				\d\nu_{\Lambda,N_r,\mathbf p}(Y),
			\end{split}
		\end{equation}
		where in the last equality we used the fact that
		for any $J\subset [N_R]$ we have:
		\begin{equation}\label{expand}
			\prod_{j\in J}\zeta(p_j,(\mathbf q_{V_i})_{i=1}^n)  
			=
			\sum_{J'\subset J}
			(-1)^{|J\setminus J'|}\prod_{i=1}^n\prod_{j\in J'}
			(1+\bar\theta(p_j,\mathbf q_{V_i})).
		\end{equation}
		Hence, we obtained \eqref{calc2} in an alternative way.
	\end{proof}
	
	\subsubsection*{Acknowledgments}
	We acknowledge fruitful discussions with Sabine Jansen and Tobias Kuna.
	T. X. Nguyen would like to acknowledge the support of the NYU-ECNU Institute of Mathematical Sciences at NYU Shanghai.
	G. Scola acknowledges financial support from MIUR, PRIN 2017 project MaQuMA,
	PRIN201719VMAST01 and from the European Research Council (ERC) under the European Union's Horizon 2020 research and innovation program ERC StG MaMBoQ, n.80290.
	
	\bibliographystyle{plain}
	\bibliography{bibliografia}
	\nocite{*}

\end{document}